\pgfplotsset{compat=1.12}
\newcommand{\expect}{\mathbb{E}}
\newcommand{\one}{\boldsymbol{1}}
\declaretheorem[name=Proposition]{proposition}
\declaretheorem[name=Corollary]{corollary}
\crefname{condition}{condition}{conditions}
\Crefname{condition}{Condition}{Conditions}
\crefname{assumption}{assumption}{assumptions}
\Crefname{assumption}{Assumption}{Assumptions}
\crefname{figure}{figure}{figures}
\crefname{equation}{equation}{equations}
\newcommand{\uR}{\underline{R}}
\newcommand{\oR}{\overline{R}}
\newcommand{\selftitle}[1]{}
\newcommand{\listappendixsectionsname}{\bfseries\large Contents of Online Appendix}
\newcommand{\listappendixtablesname}{\bfseries\large List of Online Appendix Tables}
\newcommand{\apptable}[1]{
  \addcontentsline{apptab}{appendixtable}{\protect\numberline{\thetable}#1}
}
\newcommand{\listappendixfiguresname}{\bfseries\large List of Online Appendix Figures}
\newcommand{\appfigure}[1]{
  \addcontentsline{appfig}{appendixfigure}{\protect\numberline{\thefigure}#1}
}
\title{Influencer Cartels\thanks{
We are grateful to the editor Pierre Dubois and anonymous referees for guidance in revising the paper. We thank the data editor Maia Güell and the anonymous replicator for verifying the replication package.
We also thank
Daniel Ershov, Carl-Christian Groh, John Horton, Dina Mayzlin,  Matt Mitchell, Devesh Raval, Johanna Rickne,  Stephen P. Ryan, Stephan Seiler, Egor Starkov,
and seminar participants at 
University of East Anglia, 
University of Leicester,
University of Nottingham, 
Collegio Carlo Alberto,
Mannheim Virtual IO Seminar,
UK Competition and Markets Authority (CMA),
12th Paris Conference on Digital Economics,
IIOC, 
Munich Summer Institute,
Baltic Economics Conference,
Australasian Meeting of the Econometric Society,
NBER Summer Institute on IT and Digitization,
Econometric Society European Meeting, 
EARIE,
NIE Annual Conference,
13th Workshop on the Economics of Advertising and Marketing,
Bristol-Warwick Empirical IO Workshop, 
Berlin IO Day, 
6th Monash-Warwick-Zurich Text-as-Data Workshop,
CESifo Economics of Digitization Conference,
Text-as-Data in Economics Workshop in Liverpool,
City/CMA Workshop on Economics of Competition and Regulation,
Theory and Policy in the Digital Economy Workshop in Durham,
Workshop on Collusion in Markets in Tinbergen Institute Amsterdam,
SAET,
CEPR Virtual IO Seminar,
and
CEPR Paris Symposium
for helpful comments and suggestions.
Refine.ink was used to proofread the paper for consistency and clarity.
First draft: February 12, 2021.
}
}
\author{
Marit Hinnosaar\thanks{
University of Nottingham and CEPR, \url{marit.hinnosaar@gmail.com}
}
\and
Toomas Hinnosaar\thanks{
University of Nottingham and CEPR, \url{toomas@hinnosaar.net}
}
}
\date{May 27, 2026}
\begin{document}
\maketitle
\begin{abstract}
	Social media influencers account for a growing share of marketing worldwide. We demonstrate the existence of a novel form of market failure in the advertising market: influencer cartels, where groups of influencers collude to increase their advertising revenue by inflating their engagement. Our theoretical model shows that influencer cartels can improve consumer welfare if they expand social media engagement to the target audience, or reduce welfare if they divert engagement to less relevant audiences. Drawing on the model's insights, we empirically examine influencer cartels using novel datasets and machine learning tools, and derive policy implications.
\end{abstract}

JEL:
L82,
M31,
D26,
L14

Keywords: influencers, marketing, collusion, Natural Language Processing, Large Language Models, Latent Dirichlet Allocation

\section{Introduction} \label{S:intro}

Collusion between a group of market participants to improve their market outcomes is typically considered anticompetitive behavior. While some forms of collusion, such as price-fixing, are illegal in most countries, new industries provide new collusion opportunities for which regulation is not yet well developed. In this paper, we study one such industry---influencer marketing. Influencer marketing combines paid endorsements and product placements by influencers. It allows advertisers to fine-target based on consumer interests by choosing a good product-influencer-consumer match. Influencer marketing is a large and growing industry; with 31 billion U.S. dollars in ad spending in 2023, it is almost as large as the print newspaper advertising.\footnote{Source: \url{https://www.statista.com/outlook/amo/advertising/worldwide}, accessed March 17, 2024.}

Many non-celebrity influencers are not paid based on the success of their marketing campaigns; instead, their compensation depends on past engagement.\footnote{\label{fn:tracking}While influencers with large followings are typically paid based on campaign performance (tracked through sales from personalized links or discount codes), only 19\% of firms employing influencer marketing reported tracking sales during the period covered by our sample \citep{ana_state_2020}. More recent industry data suggest that this share has increased to about 29\% \citep{creatoriq_2024_2024}.} This gives incentives for fraudulent behavior---for inflating their influence. Inflating one's influence is a form of advertising fraud that leads to market inefficiencies by directing ads to the wrong eyeballs. An estimated 15\% of influencer marketing spending was misused due to exaggerated influence.\footnote{Source: \href{https://en.wikipedia.org/wiki/Influencer_marketing}{\url{https://en.wikipedia.org/wiki/Influencer_marketing}}, accessed April 6, 2024.}
To address this problem, the U.S. Federal Trade Commission in 2024 introduced a new rule that prohibits selling and buying fake indicators of social media influence, such as fake followers or views.\footnote{Source: Federal Trade Commission, August 14, 2024, ``Federal Trade Commission Announces Final Rule Banning Fake Reviews and Testimonials'',  \url{https://www.ftc.gov/news-events/news/press-releases/2024/08/federal-trade-commission-announces-final-rule-banning-fake-reviews-testimonials}, accessed November 21, 2025.}
In this paper, we study a way of obtaining fake engagement that does not directly fall under the proposed rule, but is in the same spirit. We study influencer cartels where groups of influencers collude to increase each other's indicators of social media influence.\footnote{\label{fn:cartelsfootnote} On various platforms, these groups are called different names, such as ``pods'' on Instagram and ``exchanges'' on SoundCloud. We use the term ``cartels'' instead of a neutral term like ``club'' to highlight that their primary objective is to manipulate the marketplace to obtain higher prices through prohibited practices, which mirror the traditional concept of cartels.}
While there is substantial literature in economics on fake consumer reviews \citep{mayzlin_promotional_2014, luca_fake_2016, he_market_2022, glazer_fake_2021, smirnov_bad_2022} and other forms of advertising fraud \citep{zinman_wintertime_2016, rhodes_false_2018}, the economics of this fraudulent behavior has not been studied.

We study how influencers collude to inflate engagement, and the conditions under which influencer cartels can be welfare-improving. Our research makes three key contributions. First, we develop a new theoretical framework for influencer cartels, a setting that has not been studied before. Second, we use a novel dataset of influencer cartels and machine learning tools to generate engagement quality measures from text and photos. Third, for each type of cartel, we estimate whether it is likely to be welfare-improving or not. We document that general cartels generate lower-quality engagement than topic cartels, which are closer to natural engagement. This suggests that narrow, topic cartels could improve welfare, while general cartels are detrimental to all involved.

In an influencer cartel, a group of influencers colludes to inflate their engagement, in order to increase the prices they can get from advertisers. As in traditional industries, influencer cartels involve a formal agreement to manipulate the market for members' benefit. In traditional industries, the agreement typically involves price fixing or allocating markets.\footnote{Collusion in cartels does not always occur via fixing prices or output \citep{genesove_rules_2001}.} Influencer cartels involve a formal agreement to inflate the engagement measures to increase their prices. Instead of smoky backroom deals, influencer cartels operate in online chat rooms or discussion boards, where members submit links to their content for additional engagement. In return, they must engage with other members' content by providing likes and meaningful comments. An algorithm enforces the cartel rules.

Our theoretical model focuses on the key market failure in this setting---the free-rider problem. Engaging with other influencers' content brings attention to someone else's content, creating a positive externality. Without cartels, influencers do not engage with each other's content enough, because they do not internalize the externality. A cartel could lessen the free-rider problem by internalizing the externality. By joining the cartel, influencers agree to engage more than the equilibrium engagement. They are compensated for this additional engagement by receiving similar engagement from other cartel members. If the cartel only brings new engagement from influencers with closely related interests, this could benefit cartel members but also consumers and advertisers.
However, the influencer cartel can also create new distortions. The cartel may overshoot and create too much low-quality engagement. The low-quality engagement may hurt all involved parties, consumers, advertisers, and indirectly, even the influencers themselves.

The dimension that separates socially beneficial cooperation from welfare-reducing cartels is the quality of engagement. By high quality we mean engagement coming from influencers with similar interests. The idea is that influencers provide value to advertisers by promoting the product among people with similar interests, e.g., vegan burgers to vegans. If a cartel generates engagement from influencers with other interests (e.g., meat-lovers), this hurts consumers and advertisers. Consumers are hurt because the platform shows them irrelevant content, and advertisers are hurt because their ads are shown to the wrong consumers. Whether or not a particular cartel is welfare-reducing or welfare-improving is an empirical question.

In our empirical analysis, we combine data from two sources: cartel interactions from Telegram and data from Instagram. Our cartel data allows us to directly observe cartel activity (not predict or estimate it). We observe which Instagram posts are included in the cartel and which engagement originates from the cartel.\footnote{The ability to directly observe cartel activity is rather unique. Most studies of cartels in traditional industries have to rely on either historical data of known cartels from the time cartels were legal \citep{porter_study_1983,genesove_rules_2001, roller_workings_2006,hyytinen_cartels_2018, hyytinen_anatomy_2019} or data from the court cases \citep{igami_measuring_2022}, including bidding rings in auctions (for example, \cite{porter_detection_1993, pesendorfer_study_2000, asker_study_2010, kawai_value_2026}), for an overview see \cite{marshall_economics_2012}.}
Our dataset includes two types of cartels differentiated by cartel entry rules: three topic cartels that only accept influencers posting on specific topics and six general cartels with unrestricted topics.\footnote{To make the results comparable, we only study cartels by the same cartel organizer; they all run on the same platform and have the same engagement requirements. The cartels differ only in entry requirements: some cartels have topic restrictions, while others have minimal follower requirements.}

We use machine learning to analyze text and photos from Instagram to measure engagement (match) quality. Our goal is to compare the quality of natural engagement to that originating from the cartel. We measure quality by the topic match between cartel members and users who engage with their content. To quantify the similarity of Instagram users we use Large Language Models to convert text and photos to numeric vectors (embeddings) and calculate cosine similarity between users based on these vectors. We then estimate a panel data fixed effects regression with cosine similarity as the outcome. We complement the analysis using the Latent Dirichlet Allocation to map each user's content into a distribution over topics.

We find that engagement from general cartels is significantly lower in quality compared to natural engagement. Specifically, the quality of engagement from these cartels is nearly as low as that from a counterfactual engagement from a random Instagram user. In contrast, engagement from topic cartels is much closer to the quality of natural engagement. Our back-of-the-envelope calculations show that if an advertiser pays for cartel engagement as if it is natural engagement, they only get 3--18\% of the value in the case of general cartels and 60--85\% in the case of topic cartels. Our results are robust to alternative ways to construct outcome variables and alternative samples. Our estimates also validate our outcome measures: advertisers are known to pay higher fees for more engagement, and we show that users with higher similarity are more likely to engage. This implies our similarity measures capture what advertisers value.

The trade-offs studied in this paper can arise in other settings.
\cite{lampe_strategic_2012} shows that patent applications strategically withhold citations. To overcome the free-rider problem, firms have formed patent pools---agreements to cross-license their patents \citep{lerner_efficient_2004, moser_patents_2013}.\footnote{Patent pools are also formed for reasons beyond internalizing externalities, including reducing transaction costs and addressing blocking patents.}
There is also evidence of citation agreements in academic publishing
\citep{franck_scientific_1999, van_noorden_brazilian_2013, wilhite_coercive_2012}.
Due to anomalous citation patterns, Clarivate (formerly Thomson Reuters) regularly excludes journals from Impact Factor listings.\footnote{Source: \url{https://journalcitationreports.zendesk.com/hc/en-gb/articles/28351398819089-Title-Suppressions}, accessed December 16, 2025.}
There are differences between these settings and influencer cartels. First, in citation cartels, explicit agreements are unobservable, whereas collusion and outcomes are directly observable in influencer cartels. Second, patent and academic citations have rather objectively defined relevance.

Our paper adds to the literature on social media and attention (for an overview, see \cite{aridor_economics_2024}). While the literature on social media has extensively studied the consumption and production of social media content, there is less work on strategic engagement, which is the strategic choice of which content to engage with.
Our paper is most closely related to \citet{filippas_production_2025}, who use Twitter data to study attention bartering. Similarly to our paper, they model social media users' decision to engage (in their setting, whether to follow other users) as a partially reciprocal process.
Unlike us, they study pairwise agreements and they focus on vertically differentiated social media users (i.e., whether social media stars engage with users who have a smaller number of followers), whereas we focus on horizontal differentiation (topics and topic similarities).
Another key difference is that in \citet{filippas_production_2025}, users know each other's characteristics before deciding to barter, whereas in our setting, cartel members commit to engagement without knowing whose content they will engage with. Finally, \citet{filippas_production_2025} don't model advertising, which plays a key role in our analysis.
Our theoretical model of influencer engagement and advertisement builds on classic models of product differentiation \citep{salop_monopolistic_1979} and models of attention and advertising \citep{anderson_market_2005,anderson_competition_2012,anderson_ad_2023}. Unlike all these papers, we study the groups of users who agree to collude in order to increase engagement.

Our paper adds to a small but growing literature in economics on influencer marketing. The empirical literature has analyzed advertising disclosure \citep{ershov_effects_2025, ershov_how_2025}, while the theoretical literature has studied the benefits of mandatory disclosure \citep{pei_influencing_2022, mitchell_free_2021, fainmesser_market_2021}, the prioritization of content \citep{szydlowski_deprioritizing_2023}, and social learning with influence-motivated agents \citep{song_social_2025}. In contrast to these papers, we study collusion between influencers. Influencer cartels have been studied qualitatively in marketing and media studies: \cite{omeara_weapons_2019} examined them through the lens of organized labor, \cite{cotter_playing_2019} analyzed discussions among influencers in closed Facebook groups, including those on Instagram cartels, and \cite{miguel_little_2022} conducted in-depth interviews with 20 food influencers on Instagram cartels. Influencer cartels have been studied quantitatively in computer science. \cite{weerasinghe_pod_2020} analyzed approximately two million Instagram posts included in cartels and built a classifier to predict whether a post is part of a cartel. None of these studies analyze the welfare effects or the type of engagement that is generated by influencer cartels.

Our paper also contributes to the literature on welfare-improving cartels. The literature has shown that with negative externalities, such as environmental damage, collusion can improve welfare \citep{buchanan_external_1969,schinkel_can_2017,schinkel_production_2022,asker_two_2024}. \cite{fershtman_dynamic_2000} showed that the positive effect of collusion on product variety may make collusion welfare-improving. \cite{deltas_consumer-surplus-enhancing_2012} showed that collusion may be welfare-improving due to reduced trade costs. In contrast, in our paper the beneficial aspect of collusion arises from a positive externality to other influencers, which can be internalized through reciprocal engagement.

In our empirical analysis, we build on the recent literature in economics that uses text and photos as data.\footnote{For surveys of the uses of text as data in economics, see \cite{gentzkow_text_2019, ash_text_2023}.} In particular, we use Large Language Models and large neural networks to generate embeddings from text and photos. Large Language Models with social media data have been used in economics before, for example, by \cite{ershov_how_2025}. We also use the Latent Dirichlet Allocation model \citep{blei_latent_2003}, which has been recently used in economics, for example, to extract information from Federal Open Market Committee meeting minutes \citep{hansen_transparency_2018}. We combine these tools with the use of the cosine similarity index. This and other similarity indexes have been used as quality measures in economics, for example, by \cite{chen_motivating_2024} and \cite{hinnosaar_externalities_2022}. While many studies have made use of text as data, using photos is still rare in economics (examples include \cite{adukia_what_2023, ash_visual_2021, caprini_visual_2023}).\footnote{One exception is the use of satellite images mostly in development economics, and typically, to measure electricity use, air pollution, land use, or natural resources \citep{donaldson_view_2016}.} As Instagram as a platform is primarily used to share photos, extracting information from photos is particularly important in our setting.

The rest of the paper is organized as follows. In the next section, we provide some institutional details of influencer cartels. \Cref{S:theory} introduces the theoretical model and discusses the welfare implications of influencer cartels. \Cref{S:data} describes the dataset used in our analysis. \Cref{S:empirical} presents the empirical results. \Cref{S:discussion} concludes.

\section{Background: Influencer Cartels}
\label{S:InstitutionalBackground}

In influencer marketing, firms pay influencers for product placements and endorsements. Many non-celebrity influencers are compensated based on engagement metrics, which gives incentives to inflate the engagement. In an influencer cartel, a group of influencers collude to inflate each other's engagement in order to increase the prices they can get from advertisers. Instagram considers these groups to be violating Instagram's policies\footnote{Source: Devin Coldewey, Apr 29, 2020, ``Instagram `pods' game the algorithm by coordinating likes and comments on millions of posts'', TechCrunch.  \scriptsize{\url{https://techcrunch.com/2020/04/29/instagram-pods-game-the-algorithm-by-coordinating-likes-and-comments-on-millions-of-posts/}}.} and has closed down influencer cartels with hundreds of thousands of members.\footnote{Source: \url{https://www.buzzfeednews.com/article/alexkantrowitz/facebook-removes-ten-instagram-algorithm-gaming-groups-with}, accessed August 19, 2024.}

Influencer cartels operate in an online chat room or a discussion board, typically on Telegram or Reddit.\footnote{For more details, see a computer science overview of Instagram cartels operating on Telegram \citep{weerasinghe_pod_2020} or for example: Apr 9, 2019 ``Instagram Pods: What Joining One Could Do For Your Brand'', Influencer Marketing Hub. \url{https://influencermarketinghub.com/instagram-pods/}.} The cartels in our sample operate in Telegram chat rooms and advertise themselves as a way to ``attract lucrative brand partnerships'' (see screenshots in \Cref{F:wolf_main_earnings} in the Online Appendix). In the chat room, members of the cartels submit links to their Instagram content for which they would like to receive additional engagement. In order to receive that engagement, they themselves must engage with a fixed set of links submitted by other users. Specifically, the cartels have the requirement that before submitting a post, the member must like and write meaningful comments to previous $N$ (in our sample, at least the last five) posts submitted by other members.
\Cref{F:wolf_onyx_comments_mapped_anon,F:wolf_onyx_comments_practice_anon} in the Online Appendix show an example of a post submitted to the cartel receiving the required comments. The rules of the cartel are enforced automatically by an algorithm. The cartels in our sample have entry requirements: either thresholds for the minimum number of followers (ranging from 1,000 to 100,000 followers) or restrictions on the topics of the posts.

The cartel increases engagement via both direct effect and indirect effects. The direct effect is the cartel members engaging with each other's posts. This additional engagement generates two types of indirect effects. First, the Instagram algorithm gives higher exposure to posts with higher engagement, leading to even more engagement. Second, an influencer engaging with another user's post increases the likelihood of the post being shown to the influencer’s followers. This happens as the Instagram algorithm is more likely to show posts that the user's social network has engaged with, that is, posts that the users who the user follows have commented on or liked.\footnote{\Cref{F:suggested_for_you_Instagram} in Online Appendix \ref{A:Screenshots} presents a screenshot of an Instagram post that was recommended to the user because their friend liked it.}

How widespread are influencer cartels? This question is inherently difficult to answer, as the goal of influencers joining such cartels is to generate engagement that is indistinguishable from natural engagement. The challenge is similar to other forms of hidden misconduct, such as accounting fraud or corruption, where most behavior goes unreported or undetected. For example, \citet{dyck_who_2010} document that detected cases of corporate fraud represent only a small fraction of the true incidence. Likewise, \citet{leuz_earnings_2003} show how measurement problems arise when misreporting is endogenous to detection and enforcement. To provide some background context, Online Appendix \ref{A:GoogleTrends} presents suggestive evidence from Google Trends on the widespread interest in influencer cartels \citep{google2024trendsinstagram}.

Our data does not allow us to estimate a causal impact of cartel participation on engagement beyond the direct effect. However, we provide correlational evidence (discussed in \Cref{S:SummaryStatistics}) showing that, after joining the cartels, influencers' posts receive more engagement and influencers are more likely to have disclosed sponsored content. Furthermore, the screenshot in \Cref{F:wolf_amplify_earnings} (Online Appendix \ref{A:Screenshots}) shows that the main arguments the cartel organizer uses to convince influencers to join are related to growth in profile prominence and earnings. The fact that these cartels attract many long-term members indicates that at least some influencers expect and perceive cartels to have a positive impact on both engagement and earnings.

\section{Theoretical Model} \label{S:theory}

We start with a simple model that captures the main trade-offs behind influencer engagement and later extend it to more general settings. The analysis focuses on \emph{engagement}, the activity that influencer cartels directly coordinate. All other aspects of influencer marketing, such as content creation or audience growth, are assumed to be independent of engagement choices.

\subsection{Model Setup} \label{SS:setup}

The model consists of three types of agents: influencers, their followers, and advertisers. There is a continuum of influencers, each having some content. Influencers differ by topic: each influencer is characterized by a type $\alpha \in [0,2\pi]$, denoting a location on the \cite{salop_monopolistic_1979} circle.\footnote{Topic $\alpha$ is measured in radians, corresponding to $\tfrac{360\degree}{2\pi}\alpha \in [0\degree,360\degree]$.} The distribution of topics is uniform around the circle.

Each influencer $\alpha$ is randomly matched with another influencer $\alpha'$, whose content influencer $\alpha$ may engage with.\footnote{We thank an anonymous referee for the suggestion to model engagement through pairwise matches. The results would not change if each influencer $\alpha$ were matched with two random influencers, $\alpha'$ and $\alpha''$, so that $\alpha$ can engage with $\alpha''$, and $\alpha'$ can engage with $\alpha$.} If $\alpha$ engages with $\alpha'$, we denote this by $e(\alpha'|\alpha)=1$; otherwise, $e(\alpha'|\alpha)=0$.

Influencer $\alpha$ has a continuum of followers with total measure $R$.  If influencer $\alpha$ engages with the content of influencer $\alpha'$, then all followers of $\alpha$ are exposed to the content created by $\alpha'$. The utility to each follower of $\alpha$ from such engagement is\footnote{Our modeling assumptions regarding the costs and benefits of attention are similar to models of attention and advertising \citep{anderson_market_2005,anderson_competition_2012,anderson_ad_2023}, who also model the cost of attention as a difference between consumers' preferences and the content they consume. This literature does not consider the externality that is our main focus.}
\begin{equation}
	U^F = e(\alpha'|\alpha)\big[ \cos(\Delta) - C(\Delta) \big],
\end{equation}
where $\Delta = d(\alpha,\alpha')$ denotes the distance between topics $\alpha'$ and $\alpha$. The first term, $\cos(\Delta)$, represents the informational or entertainment value of engagement, while $C(\Delta)$ captures the cost of attention.\footnote{$d(\alpha,\alpha') = \min\{\,|\alpha' - \alpha|,\, 2\pi - |\alpha' - \alpha|\,\} \in [0,\pi]$ denotes the shortest angular distance on the circle.}

The cost function is assumed to be
\begin{equation}
	C(\Delta)
	=
	\begin{cases}
		\sin(\Delta), & \text{if } \Delta \leq \frac{\pi}{2}, \\
		1,            & \text{if } \Delta > \frac{\pi}{2}.
	\end{cases}
\end{equation}

Each influencer is matched with an advertiser promoting a product related to that topic. Hence, when influencer $\alpha$ engages with the content of $\alpha'$, the followers of $\alpha$ are also exposed to the advertisement associated with $\alpha'$. The probability that a follower purchases the product is increasing in the topic match $\cos(\Delta)$. Let $v \ge 0$ denote the surplus generated per successful purchase. We assume that the total value created by engagement is $R v \cos(\Delta)$. The payoff for an advertiser is
\begin{equation}
	U^A = e(\alpha'|\alpha)\big[R v \cos(\Delta) - p\big],
\end{equation}
where $p$ is the \emph{price of engagement}, i.e., the payment from the advertiser to influencer $\alpha'$.

Advertisers, however, observe only the \emph{quantity} of engagement rather than its source or \emph{quality}. We assume that the advertising market is competitive, $\expect[U^A]=0$, hence the expected payment from the advertiser to the influencer $\alpha'$ is\footnote{Later we relax this assumption. The results remain qualitatively the same when $p$ is determined by Nash bargaining between the influencer and the advertiser.}
\begin{equation}
	p = R v \, \expect\!\left[\cos(\Delta) \,\middle|\, e(\alpha'|\alpha)=1\right].
\end{equation}

We assume that influencers care about advertising revenue as well as the value they provide to their followers. The payoff of influencer $\alpha$ who is matched with $\alpha'$ consists of four parts. If $\alpha$ engages with $\alpha'$, she bears the attention cost $C(\Delta)$ of her followers and internalizes only a fraction $\gamma \in (0,1)$ of the benefit created. The remaining share, $1-\gamma$, is an external benefit to influencer $\alpha'$, who receives this engagement. In addition, the recipient of engagement receives the advertising income $p$. Formally,
\begin{align}
	U^I
	 & = e(\alpha'|\alpha)\big[ \gamma \cos(\Delta) - C(\Delta) \big]
	+ e(\alpha|\alpha')\big[ (1-\gamma) \cos(\Delta) + p \big].
\end{align}

The influencer's payoff captures the idea that the current value provided to followers has a persistent impact, with asymmetric costs and benefits. The more different the content is, the greater the attention cost for followers. We assume that the influencer internalizes followers' attention costs, which in reduced form captures the idea that the more different the content is, the more likely followers are to stop following. Conversely, if followers find the content interesting, both the content creator and the engaging influencer benefit, which in reduced form captures that they gain followers. Importantly, when the influencer engages with dissimilar content, the content creator’s followers and therefore the content creator are not affected. Thus, the costs of engaging are borne only by the influencer who engages, while high-quality engagement benefits both parties.

The total welfare from all engagements is
\begin{equation}
	W = \expect\!\left[ U^I + R U^F + U^A \right],
\end{equation}
where the expectation is taken over all interacting pairs of influencers $(\alpha, \alpha')$.

We consider two types of engagement behavior. Under \emph{natural engagement}, influencers decide independently whether to engage each time they have an opportunity to do so. Under \emph{cartel engagement}, the decision is determined by algorithmic rules imposed by the cartel, conditional on membership. We first consider each type of engagement in isolation, and then their interaction.

\subsection{Only Natural Engagement}

We assume first that each influencer independently chooses whether or not to engage with the influencer they are matched with. Taking the difference in $U^I$ between engagement and no engagement, we get that the net value of engagement is $\gamma \cos(\Delta) - C(\Delta)$.
This expression is positive if and only if $\gamma \ge \tan(\Delta)$, which gives the natural engagement function
\begin{equation}
	e^N(\alpha'|\alpha) = \one[\Delta \le \Lambda^N],
\end{equation}
where $\Lambda^N = \arctan(\gamma) < \arctan(1) = \frac{\pi}{4}$ is the natural engagement threshold.\footnote{The angle $\frac{\pi}{4}$ corresponds to $45\degree$.}

Importantly, influencer $\alpha$ does not internalize the positive externality that her engagement creates for $\alpha'$. In contrast, socially optimal engagement would account for this externality. The socially optimal engagement function is $e^S(\alpha'|\alpha) = \one[\Delta \le \Lambda]$ for some $\Lambda \in [0,\pi]$. The corresponding social welfare function can be written as\footnote{We define social welfare as the sum of all agents' payoffs. We use this specification for tractability, and our qualitative results would be unchanged if welfare were evaluated separately for different groups of agents.}
\begin{align}
	W(\Lambda)
	 & = \frac{1}{\pi} \int_0^{\Lambda} \Big(
	(R+1+Rv)\cos(\Delta) - (R+1)C(\Delta)
	\Big)\, d\Delta.
\end{align}

The welfare-maximizing cutoff is $\Lambda^S(v)=\arctan\!\left(1+\frac{Rv}{R+1}\right)$, which is strictly increasing in $v$, satisfies $\Lambda^S(0)=\frac{\pi}{4}$, and converges to $\frac{\pi}{2}$ as $v\to\infty$.\footnote{Proofs and derivations are in Online Appendix \ref{OA:proof_P_natural_social}.}

The following proposition summarizes the equilibrium and the social optimum.

\begin{proposition} \label{P:natural_social}
	There exists a unique equilibrium.
	There is more engagement in the social optimum than in equilibrium (natural engagement), but the additional engagement is of lower quality.
	In particular:
	\begin{enumerate}
		\item In equilibrium,
		      $e^N(\alpha'|\alpha) = \one[\Delta \le \Lambda^N]$, where
		      $\Lambda^N = \arctan(\gamma) < \frac{\pi}{4}$.
		\item In social optimum,
		      $e^S(\alpha'|\alpha) = \one[\Delta \le \Lambda^S(v)]$,
		      where $\Lambda^S(v)$ is a strictly increasing function of $v$, satisfying $\Lambda^S(0) = \frac{\pi}{4}$ and $\lim_{v \to \infty} \Lambda^S(v) = \frac{\pi}{2}$.
	\end{enumerate}
\end{proposition}

Hence, socially optimal engagement involves more engagement than natural engagement, but the additional engagement has lower topic similarity (lower quality). Moreover, for all $v$, $\Lambda^N < \frac{\pi}{4} \le \Lambda^S(v) < \frac{\pi}{2}$.

\subsection{Only Cartel Engagement}

Suppose that, instead of natural engagement, engagement arises through a cartel. Specifically, the cartel rules define an engagement requirement $\Lambda \in [0,\pi]$, such that the engagement function is given by $e^C(\alpha'|\alpha) = \one[\Delta \le \Lambda]$. All cartel members are assumed to follow this rule.\footnote{In practice, the rules are enforced by an algorithm that automatically detects and penalizes deviations.} Influencers therefore choose only whether to join the cartel, making this decision before learning the identity of their match. Advertisers are aware of the cartel's existence and the price of engagement reflects the specific value of $\Lambda$. The objective of the cartel is to maximize the expected payoff of its members.

The expected payoff from joining a cartel with engagement threshold $\Lambda$ is
\begin{align}
	\expect[U^C]
	 & = \frac{1}{\pi} \int_0^{\Lambda}
	\Big(
	(1+Rv)\cos(\Delta) - C(\Delta)
	\Big)\, d\Delta. \label{E:singlecartelU}
\end{align}

Before studying the optimal cartel, it is useful to describe which values of $\Lambda$ are feasible, that is, for which values of $\Lambda$ influencers are willing to join the cartel. There exists a maximal threshold $\Lambda^{\max} \in [0,\pi]$ such that influencers are willing to join the cartel only if $\Lambda \le \Lambda^{\max}$. Moreover, $\frac{\pi}{2} \le \Lambda^{\max} < \pi$.\footnote{Proofs and derivations are in Online Appendix  \ref{OA:proof_P_cartel}.}

The optimal $\Lambda$ that maximizes \eqref{E:singlecartelU} is $\Lambda^C(v)=\arctan(1+Rv)$.
This leads to the following proposition.

\begin{proposition} \label{P:cartel}
	Feasible cartel engagement levels belong to $[0,\Lambda^{\max}]$, where $\frac{\pi}{2} \le \Lambda^{\max} < \pi$.

	The optimal cartel engagement level $\Lambda^C(v)$ is a strictly increasing function of $v$, with $\Lambda^C(0)=\Lambda^S(0)=\arctan(1)=\frac{\pi}{4}$ and $\lim_{v\to\infty}\Lambda^C(v)=\frac{\pi}{2}$. Moreover, for all $v>0$, $\Lambda^C(v)>\Lambda^S(v)$.
\end{proposition}

To interpret this result, note that the optimal cartel internalizes the externality among influencers by making engagement reciprocal. Thus, it can achieve higher welfare than natural engagement. However, as long as there is some advertising revenue, i.e., $v>0$, the cartel engagement goes even further than socially optimal engagement, because the optimal cartel does not internalize the impact on followers beyond what is already reflected in influencers’ payoffs and therefore places relatively greater weight on advertising revenue.

\subsection{Both Natural and Cartel Engagement} \label{sec:both_natural_and_cartel}

We now combine natural and cartel engagement by assuming that a mass $1-\varepsilon$ of influencers choose their engagement freely, that is, they follow natural engagement. The remaining mass $\varepsilon>0$ of influencers is divided into a large number of cartels, where each cartel $i \in \{1,\dots,m\}$ independently chooses its engagement level $\Lambda_i^C$. The allocation of influencers into cartels and natural engagement is independent of their topic. We assume that $\varepsilon$ is small but positive. Importantly, advertisers know $\varepsilon$, but are unable to distinguish between different types of engagement (natural or cartel), and thus the advertising price $p$ is determined by the expected engagement across all types.
We focus on the limiting case as $m \to \infty$. In that limit, the choice of $\Lambda_i^C$ does not affect the price of engagement.\footnote{Proofs and derivations are in Online Appendix \ref{OA:both_natural_and_cartel}.}

First, consider influencers who do not belong to cartels, that is, those who choose engagement naturally. For them, the optimal engagement has not changed, because the only difference with the analysis above is the changed advertising price, and their advertising revenue does not depend on their own engagement decision. Therefore, their engagement function remains $e^N(\alpha'|\alpha)=\one[\Delta \le \Lambda^N]$, where $\Lambda^N=\arctan(\gamma)$.

For a member of cartel $i$, the analysis is different, as the price of advertising is
\begin{equation}
	p(\varepsilon)
	= Rv \expect[\cos(\Delta) \mid e(\alpha'|\alpha)=1]
	=
	Rv
	\frac{
		(1-\varepsilon) \sin(\Lambda^N)+ \varepsilon \sin(\Lambda^C)
	}{
		(1-\varepsilon) \Lambda^N + \varepsilon \Lambda^C
	}
	\geq v R \rho(\varepsilon,\gamma)
	,
\end{equation}
where $\Lambda^C$ is the engagement requirement of other cartels, which is equal for all cartels in equilibrium, and $\rho(\varepsilon,\gamma)$ is strictly positive for any $\varepsilon,\gamma \in (0,1)$.
The payoff to a member of cartel $i$ is
\begin{equation}
	\expect[U_i^C]
	=
	\frac{1}{\pi} \int_0^{\Lambda_i^C} \big(
	\cos(\Delta) - C(\Delta) + p(\varepsilon)
	\big)\, d\Delta.
\end{equation}
Because each individual cartel is small, $p(\varepsilon)$ is taken as given.

We refer to cartels with $\Lambda_i^C = \pi$ as \emph{general cartels}, since they place no restrictions on the topic similarity.\footnote{In our data, there are two types of cartels: general cartels, with no restrictions on topics, and topic cartels, which restrict entry to specific topics. In our theoretical model, the case where $\Lambda_i^C \ll \pi$ captures the essence of topic cartels, as only engagement with relatively closely related content is required.}
We show that general cartels are feasible and optimal for sufficiently large $v$. Due to the existence of natural engagement, the price of engagement is bounded below by a constant multiplied by $v$. Therefore, when $v$ is large, the advertising revenue dominates any other incentives.

\begin{proposition} \label{P:grandcartels}
	For any $\varepsilon < 1$ and $R > 0$, there exists $\widehat{v} > 0$ such that if $v \ge \widehat{v}$, then all cartels are general cartels, i.e., they set the required engagement level to $\Lambda_i^C = \pi$.
\end{proposition}

Intuitively, these optimal cartels generate pure noise in terms of topic match. More precisely, general cartels with $\Lambda_i^C = \pi$ have $\expect[\cos(\Delta)\mid \Delta \le \Lambda_i^C] = 0$, so there is no benefit to followers or surplus to advertisers, while the costs to followers (attention) and to influencers are strictly positive. Moreover, the cartels lower the price of engagement and therefore reduce the payoff of non-cartel members as well.

\begin{corollary} \label{C:grandcartels}
	If cartels are general cartels with $\Lambda_i^C=\pi$, then
	\begin{enumerate}
		\item Cartels strictly reduce social welfare.
		\item Non-cartel influencers would strictly prefer a lower share of cartels.
		\item Cartel members would strictly prefer a lower share of cartels.
	\end{enumerate}
\end{corollary}

\subsection{Interpretation of the Model} \label{SS:interpretation}

The model presented above can be extended in many directions without changing the main conclusions. The key elements are:
(1) the free-rider problem under natural engagement---as influencers' engagement choices create positive externalities for others, there is insufficient engagement in equilibrium relative to the socially optimal level;
(2) the cartel can internalize this externality through reciprocal engagement; and
(3) the distortion from the advertising market---because advertisers cannot distinguish high-quality engagement from low-quality engagement, and realistic cartels are too small to have a substantial impact on market prices, it may be optimal to organize a cartel that maximizes engagement quantity by generating what is essentially low-value engagement.
Before discussing possible extensions, we briefly outline the motivation behind our modeling choices.

Our model captures the manipulation of followers' attention. Engagement by influencer $\alpha$ with content $\alpha'$ signals to the platform algorithm that this content may be relevant for users with similar interests to $\alpha$, which includes $\alpha$'s followers. Consequently, this content is shown to these followers with higher probability. The follower's payoff function, $\cos(\Delta) - C(\Delta)$, is strictly decreasing in $\Delta$: it is high and positive for small $\Delta$ (followers enjoy consuming such content), but negative for large $\Delta$, where the cost of attention exceeds the value of being shown irrelevant content.

\subsection{Extensions} \label{SS:extensions}

\subsubsection{Sequential Engagement Choices} \label{SSS:sequential}

For simplicity, we have so far presented engagement choices as simultaneous. In practice, such interactions are inherently dynamic: influencers post content over time and can only engage with content that has already been created.

The presented model can be reinterpreted as a model of a dynamic process. Consider an infinite sequence of influencers, where influencer $t$ has type $\alpha_t$. Influencer $t$ can choose to engage with the content of influencer $t-1$, at distance $\Delta_t = |\alpha_t - \alpha_{t-1}|$. The only difference from our baseline model is that we previously assumed engagement choices were matched, i.e., $t$ could engage with $t-1$ and vice versa, but since types are independently drawn, this assumption does not affect the analysis.

Again, it is both important and natural to assume that when influencers join the cartel, they do not yet know the characteristics of the content they may be required to engage with.

\subsubsection{Advertisers with Bargaining Power} \label{SSS:bargaining}

We have so far assumed that advertisers are competitive, so that the price of engagement $p$ equals the expected surplus generated for advertisers. This assumption can be relaxed by allowing $p$ to be determined through Nash bargaining, with bargaining power $\beta>0$ for the influencer and $1 - \beta>0$ for the advertiser. In this case, the price of engagement is multiplied by $\beta$ throughout, and advertisers obtain a strictly positive expected payoff proportional to $1-\beta$. All conclusions remain unchanged, because from the influencer's perspective, this is equivalent to replacing the parameter $v$ with $\beta v$. The only difference is that advertisers, too, would now be strictly worse off in the presence of general cartels.

\subsubsection{Advertisers Observing Engagement Quality} \label{SSS:observable}

The key distortion in the model is that advertisers observe only whether engagement occurs, but not its source or quality. This assumption reflects real-world influencer marketing, where payments typically depend only on observable metrics such as the number of views, comments, or likes. However, it is possible (although more costly) for advertisers to employ more sophisticated tracking technologies and either (1) evaluate engagement quality as we do in the empirical analysis, or (2) track sales generated by specific influencers and engagements. In this case, the payment from the advertiser to the influencer would depend on the true match quality rather than its expectation, that is, $p = R v \max\{0, \cos(\Delta)\}$, or a fraction $\beta$ of it if the advertiser has bargaining power.

If we substituted this expression into the analysis above, cartels would never require engagement above $\frac{\pi}{2}$, because at this range the marginal advertising revenue is zero, and even a fully internalized marginal benefit of engagement is strictly less than the cost. In other words, general cartels can only exist because of imperfect observability.

\subsubsection{Alternative Objectives of Cartels} \label{SSS:objectives}

In the main model, we assumed that the goal of the cartel is to maximize its members' expected payoffs. This is a natural objective in a typical setting where a group of influencers agree to cooperate: they want the cartel organizer to choose $\Lambda$ that maximizes their payoffs. However, the model also allows us to consider alternative objectives for the cartel organizer:
\begin{enumerate}
	\item \emph{Socially optimal cartel:} As we showed, the socially optimal cartel is always feasible ($\Lambda^S(v) \le \Lambda^{\max}$), so a cartel organizer seeking to maximize social welfare could implement it.
	\item \emph{Engagement-maximizing cartel:} The maximal feasible engagement level $\Lambda^{\max}$ clearly maximizes the amount of engagement among all feasible cartels.
	\item \emph{Revenue-maximizing cartel:} If a cartel could charge an entry fee $\phi > 0$, the effective value of joining the cartel would be $\expect[U^C] - \phi$, so all influencers would join as long as $\phi \le \expect[U^C]$. The revenue-maximizing cartel would therefore set $\phi = \expect[U^C]$ and choose $\Lambda = \Lambda^C$ as above, extracting the entire surplus created for influencers.
	\item \emph{Advertising-revenue-maximizing cartel:} Advertising revenue conditional on engagement is $p$, which decreases in market-wide $\Lambda$, while the likelihood of engagement increases in cartel's own $\Lambda_i^C$. If the cartel is small enough not to have a significant impact on the market price of engagement, it would choose $\Lambda_i^C$ to maximize engagement.
\end{enumerate}
If $v$ is very large and each cartel is small, as discussed above, then all cases except the socially optimal cartel lead to the same conclusion: all cartels are general cartels with $\Lambda_i^C = \pi$. This is because $\pi$ is the largest feasible reach (maximizing engagement), and with sufficiently high $v$, maximizing engagement dominates all other considerations.

\subsubsection{Heterogeneous Reach}  \label{SSS:heterogeneousreach}

So far, we have assumed that all influencers have the same number of followers, $R$. In Online Appendix \ref{A:reach}, we relax this assumption by allowing each influencer $t$ to be characterized by a two-dimensional type $(\alpha_t, R_t)$, where $\alpha_t$ is the topic, still distributed uniformly, and $R_t$ denotes their reach (attention). We assume influencers' payoffs depend on reach.

We show that all qualitative results remain unchanged: natural engagement remains below the socially optimal level, while cartel engagement can exceed it. In particular, when there are many small cartels and the advertising revenue is very important for influencers, all cartels are general cartels.

The heterogeneous-reach extension provides additional insights. Cartels can now generate two distinct distortions. Not only can they require excessively broad engagement (i.e., low-quality topic matching), but high-reach influencers may also choose not to join, leading to low-quality engagement in terms of reach.

High-reach influencers may abstain from joining because of the asymmetry between what they contribute to the cartel (the attention of their own followers) and what they receive in return (the attention of the followers of an average member). To address this, real-world cartels often impose a minimum reach requirement. We show that cartels would indeed sometimes find it optimal to impose a high entry requirement in terms of reach.

\section{Data and Measures of Engagement Quality}\label{S:data}

\subsection{Data Sources}

We combine data from two sources: first, the detailed cartel communications from Telegram, and second,  Instagram posts and engagement data. A detailed description of our data collection is in Online Appendix \ref{A:DataCollection}.

\paragraph{Telegram cartel history.}
From Telegram, we collected the communication history of nine cartels: six general cartels and three topic cartels: fitness \& health, fashion \& beauty, travel \& food \citep{telegram2020desktopapi,authors2020telegramcartel}. This history provided us with three relevant pieces of information for each submission: the Telegram username, Instagram post shortcode, and the time of submission. According to the rules of these cartels, a user must comment on and like at least five posts preceding their own submission before submitting a post to the cartel for engagement.\footnote{While most of the cartels in our dataset require engagement with the last five posts, one topic cartel (fashion \& beauty) required engagement with last seven posts and a general cartel started out requiring seven but changed to five. In our analysis, to make it comparable, we focus on the first five comments in all cartels.}   This rule allows us to clearly identify which cartel members were bound to engage with which Instagram posts. In other words, we directly observe, instead of having to infer, which posts are included in the cartel. Similarly, we observe, instead of having to infer, which engagement originates from the cartel according to the cartel rules. The Telegram cartels include 220,893 unique Instagram posts that we were able to map to 21,068 Instagram users.

\paragraph{Instagram data.}
Our goal is to compare natural engagement to that acquired via cartels. In engagement, we focus on comments instead of likes or views because information on who views the post is not available, and data on who likes the post is more difficult to collect than comments. We already know which cartel members have to comment according to the cartel rules. For comparison, we needed to collect information on natural engagement.

We define \textit{natural engagement} as comments from users who don't belong to any of the cartels in our data.\footnote{In practice, the engagement that we call natural could come from followers, users seeing the influencer's post for the first time, or users in other cartels unobservable to us. If a substantial part of the natural engagement comes from other cartels or viewers induced by the cartel engagement, then our measure of natural engagement match quality is a lower bound, which would make our results even stronger. To avoid potential misclassification, we exclude all cartel commenters from natural engagement, including the ones who, according to the cartel rules, were not required to comment. Our results remain qualitatively the same when we include cartel commenters who were not required to comment in natural engagement.} To obtain information on natural engagement, we focus on each cartel member's first post in any of the nine cartels. For each cartel member's first post in cartels, we collected information on who commented on the post \citep{apify2024instagramcomments}. Then, we used a random number generator and picked a random non-cartel user who had commented on the post. The randomly chosen commenting Instagram users who don't belong to any of our cartels form our control group (natural engagement). Since these are from the earliest post in the cartel, they are less likely to be indirectly affected by the cartel activity.

We collected the text of all public Instagram posts and a photo for cartel members and for the randomly picked non-cartel users \citep{crowdtangle2021cartelposthistory,crowdtangle2024noncartelposthistory,authors2024instagramphotos}. We were unable to collect the content if the initial post had been deleted or made private. We were also unable to collect information on the non-cartel commenters if the initial post had no non-cartel commenters, or if the commenting user's account was private. We also didn't collect information on non-cartel commenting users if they had fewer than ten Instagram posts. Additionally, we excluded about 5\% of the non-cartel commenting users who had associated posts with cartel members.\footnote{The association can happen as Instagram allows  posts to be associated with multiple users (this is different from tagging a user), or it can happen when the user changes usernames.}

\subsection{Measuring Engagement Quality}\label{S:MeasuringEngagementQuality}

Our goal is to compare engagement that originates from cartels to that of natural engagement.
As demonstrated above, the relevant quality measure in this context is the similarity between the interests of the post author and the commenting user.\footnote{\label{fn:commenttext}We focus on similarity based on users’ posts and do not use the text of comments, because comments in this setting are not reliable indicators of engagement quality. Negative comments are rare, and while cartel comments are on average slightly longer (\Cref{T:SumStatComments} in the Online Appendix), this is by construction, as cartel rules require members to leave longer meaningful comments, making them artificially substantial and not reflective of genuine user interest.} In our analysis, we therefore consider engagement to be of high quality if it comes from Instagram users whose own Instagram content has similar topics. Therefore, we measure the similarity of the posts of commenting users to those of the post author. To analyze similarity, we use text and/or photos in Instagram posts and three alternative methods.

\subsubsection{Text Embeddings and Cosine Similarity of Users}\label{S:LaBSE}
First, we use a large language model named Language-agnostic BERT Sentence Embedding (LaBSE) to construct embeddings of text in Instagram posts \citep{feng_language-agnostic_2022}. An embedding represents text as a numerical vector in a multidimensional vector space. The vector representation of text is useful, allowing quantitative similarity comparison of texts via cosine similarity. Cosine similarity is a standard measure of text similarity. This measure is defined as the cosine of the angle between two vectors, providing a similarity score between -1 and 1, where close to 1 means that the texts (vectors) are highly similar. LaBSE builds upon one of the first large language models, Bidirectional Encoder Representations from Transformers (BERT), which was developed by Google researchers \citep{devlin_bert:_2019}. While BERT was originally implemented for the English language, LaBSE extends it to more than 100 languages. The multilingual effectiveness is necessary for us because our sample is multilingual. The LaBSE model transforms each post into a vector of length 768. It does so using a large neural network with approximately 470 million parameters. This enables the model to capture a large range of semantic features in multiple languages.

To create the input for the embedding, we restrict the sample in the following way. First, we restrict the sample to users who have at least ten posts. In the main analysis, we focus on 100 posts per user closest in the symmetric time window to the first post for the cartel member and to the post they commented on for the non-cartel users. Results are qualitatively similar when using a random sample or all posts from 2017 to 2020 (presented in Online Appendix \ref{A:EmpiricalRobustness}). In our main analysis, we create an embedding of each post using hashtags in the post. We focus on hashtags because they typically informatively capture the essence of Instagram posts. Online Appendix \ref{A:EmpiricalRobustness} presents results where the embeddings are created using the whole text of the posts.

To create the input for the embedding, we pre-process the text: (i) transform to lower case; (ii) replace all characters that are not letters, numbers, underscores, or hashtags with a space (these are the only characters allowed in Instagram hashtags); (iii) add a space before each hashtag; (iv) keep only words that start with a hashtag; (v) keep only the first 30 hashtags in each post because Instagram allows only up to 30 hashtags per post; (vi) drop all hashtags that have only a single character because these tend to be uninformative; (vii) drop all hashtags that don't include any letters because these tend to be uninformative. Before creating embeddings, we replace the hashtag and underscore symbols with a space.

Using the embeddings, we calculate the cosine similarity of users. To do that, first, we create an embedding for each Instagram post separately. After obtaining embeddings of each separate post, we generate a single measure for each user by taking the average of the post embeddings for each user. Online Appendix \ref{A:EmpiricalRobustness} presents results where instead of post embeddings, we first combine all users' posts for each month, and obtain one embedding per user and month pair, and then take the average over the months for each user. Using the average embeddings, we calculate the cosine similarity of user pairs.

\subsubsection{Photo and Text Embeddings and Cosine Similarity of Posts}
We also construct embeddings of photos and text. As the above LaBSE model can encode only text, we have to use a different model for processing photos alongside text. We use the Contrastive Language Image Pre-training (CLIP) model, developed by OpenAI \citep{radford_learning_2021}. CLIP maps the contents of photos and text into a shared embedding space. Because CLIP generates embeddings for photos and text that are directly comparable, it allows us to calculate similarity by combining both forms of information. The CLIP model transforms the text and photos into a vector of length 512. It does so using a neural network with approximately 86 million parameters. The advantage of the CLIP model is that it allows combining photos and text. On the other hand, the LaBSE model more precisely captures text.

We use a photo and the text from a single Instagram post for each user. For cartel members, we use the first post each cartel member posted to the cartels. For non-cartel members, we select their closest post within a symmetric time window to the cartel member's post they commented on. To create the text input for the CLIP embedding, we first pre-process the text, keeping the entire text, not just the hashtags: (i) transform it to lower case; (ii) replace question marks, exclamation marks, and new line breaks with a full stop; (iii) replace all characters that are not letters, numbers, full stops, underscores, hashtags, at symbols, or apostrophes with a space; (iv) drop separate groups of characters that don’t include any letters or numbers; (v) add a space before each hashtag; (vi) discard posts that are shorter than three characters. Since the CLIP model has a binding limit of 77 for the number of tokens (text units such as words or subwords), we have to split the text. Specifically, we split the text into sentences. In a small share of cases where the sentence is longer than 77 tokens, we further split it at the 77-token mark. Then, we generate the embedding for each sentence and take the average over all sentences in that post. We also generate an embedding for each photo. Then we take the average of text and photo embeddings of each post. Finally, using the average embeddings, we calculate the cosine similarity for each author and commenter pair.

\subsubsection{Determining Users' Topics Using Latent Dirichlet Allocation}
The models that generate embeddings allow us to measure similarity, but they are somewhat black boxes. To shed some light on the comparison of users' topics, we use a Latent Dirichlet Allocation (LDA) model. The LDA algorithm estimates a probability distribution of topics for each user based on the words used in their posts, and a probability distribution over the words for each topic.

We train the LDA model using hashtags from the same sample of posts with the same text pre-processing as in the LaBSE model used above. To improve learning from the underlying content, we reduce the set of hashtags using standard thresholds following a common approach in text analysis. Specifically, we exclude hashtags that fewer than 50 users use or more than 33\% of the sample uses. This reduces the number of unique hashtags from about 1.5 million to 19,032, giving us a typical medium-sized dictionary suitable for LDA models. To improve the model's performance and avoid giving more weight to users with longer content, we homogenize the length of content over the users. To do that, for each user, we cap the length of content at 1000 hashtags, which is about the 75th percentile and slightly less than three times the median length. Finally, we exclude users with fewer than 15 unique hashtags because there is not enough information to learn their topics. This reduces the number of users by 17 percent. We fix the number of topics to six based on interpretability and the coherence score (\Cref{F:LDA_coherence_scores}). We assign each topic a label based on the most representative hashtags in each topic, that is, the hashtags with the highest probability (\Cref{T:lda_topic_top_words}). The labels are fitness, beauty, fashion, food, entrepreneur, and travel.

\subsection{Sample}\label{S:Sample}

Our analysis is within subject. That is, for the same author, we compare his similarity to cartel commenters versus his similarity to non-cartel commenters. For this comparison, we need to be able to evaluate both similarities for each author. As described above, for some authors, we were not able to collect information on their non-cartel commenters. Furthermore, not all users had enough posts with text to calculate the outcome measures (the embeddings, cosine similarity, LDA topics). The main sample includes only the authors for whom: (A) we were able to find commenters both from cartels and not from cartels; and (B) we had sufficient data to calculate the similarity measures for both types of commenters. \Cref{T:SampleConstruction} in the Online Appendix describes how each restriction reduces the sample and Online Appendix \ref{A:DataCollection} provides further details.

The main regression analysis focuses on 8507 cartel members as authors of content. For those authors, we were able to calculate both the LaBSE and CLIP embeddings, as well as the same embeddings for at least one cartel and non-cartel commenter. The topic analysis focuses on an analogous sample of 6654 authors with LDA topic measures for both cartel and non-cartel commenters.\footnote{The sample is smaller because the LDA topic estimation has stricter data requirements as it uses only hashtags that are sufficiently common in the sample.}  The authors excluded from the main samples due to data limitations are similar to included authors in terms of their LDA topics (\Cref{F:LDA_pod_authors_excluded_sample_regr,F:LDA_pod_authors_excluded_sample_LDA}), but, as expected, have fewer posts (\Cref{T:SumStatMainSampleVsExcluded}).

\subsection{Summary Statistics}\label{S:SummaryStatistics}

Summary statistics of the main sample are presented in the Online Appendix. \Cref{T:SumStatGenVsTop} compares members of general cartels to members of topic cartels, and \Cref{T:SumStatCartelVsNot} compares cartel members to users not in the cartel.
Note that the non-cartel users in the sample are not representative of Instagram users, instead these are active users with enough public content. That is, first, since we are analyzing engagement, we have to focus on active Instagram users who comment on others' content. Second, when calculating users' interests based on their content, we have to focus on Instagram users with enough public content. This avoids potential measurement issues, such as  having more detailed information on cartel members than on users not in cartels. \Cref{T:SumStatCartelVsNot} shows that indeed, users not in cartels are rather similar to cartel members. One might worry that perhaps those users belong to other cartels not in our sample. In that case, our estimate of the natural engagement match quality is a lower bound, which would make our results even stronger.

\Cref{T:SumStatCartelBeforeVsAfterJoining} provides suggestive evidence that cartels are effective in generating engagement. It compares cartel members before and after joining the cartels. It shows that influencers' posts after joining the cartels have more likes and comments, have higher engagement performance, and after joining the cartels, influencers are more likely to have disclosed sponsored posts.

The distribution of topics in the cartels is as expected (\Cref{F:LDA_topic_distr_authors}). In the fashion \& beauty cartel, authors are posting more about fashion and beauty; in the fitness \& health cartel, about fitness; in the travel \& food cartel, about travel and food. In the general cartels, the topic distribution is rather uniform, with slightly more concentration on the topics of fashion and travel.

\begin{figure}[h!]
	\begin{center}
		\begin{subfigure}[t]{0.45\textwidth}
			\includegraphics[width=1\textwidth]{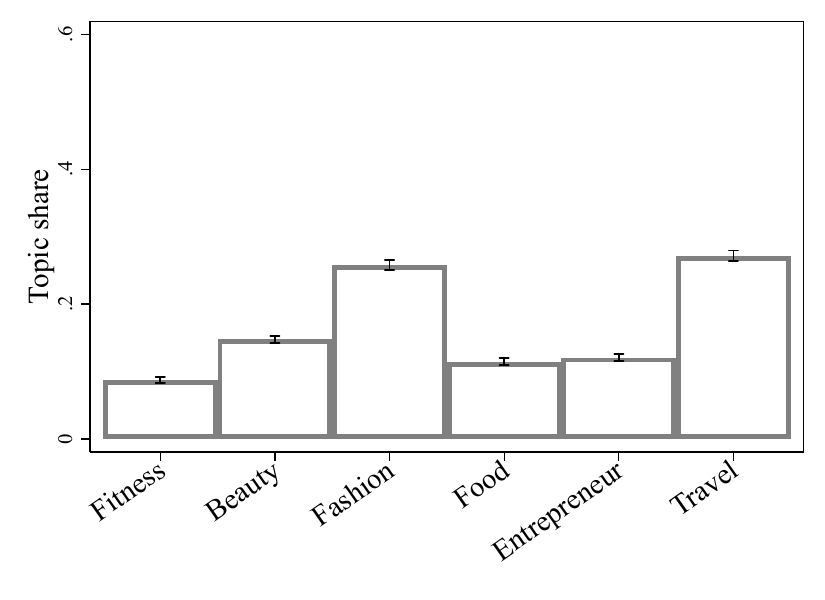}
			\caption{General cartels}
		\end{subfigure}
		\hfill
		\begin{subfigure}[t]{0.45\textwidth}
			\includegraphics[width=1\textwidth]{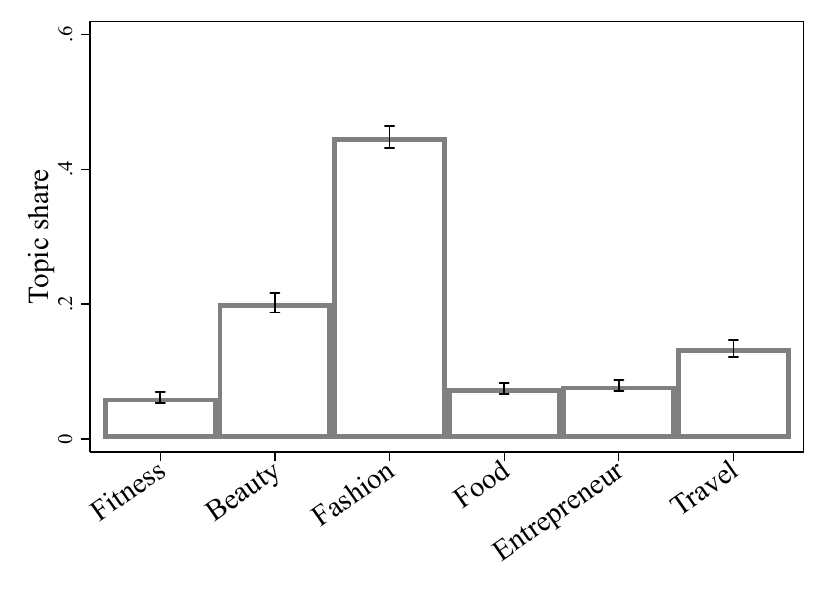}
			\caption{Fashion \& beauty cartel}
		\end{subfigure}
		\begin{subfigure}[t]{0.45\textwidth}
			\includegraphics[width=1\textwidth]{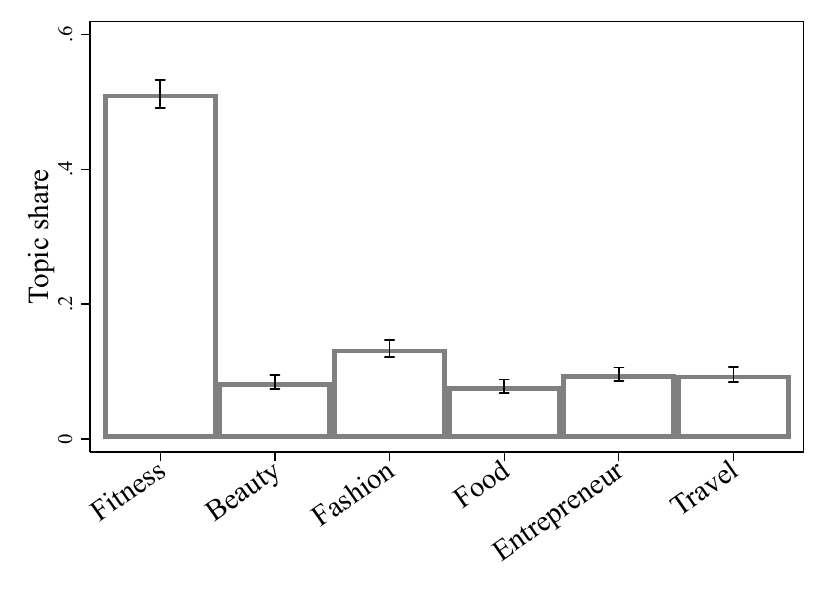}
			\caption{Fitness \& health cartel}
		\end{subfigure}
		\hfill
		\begin{subfigure}[t]{0.45\textwidth}
			\includegraphics[width=1\textwidth]{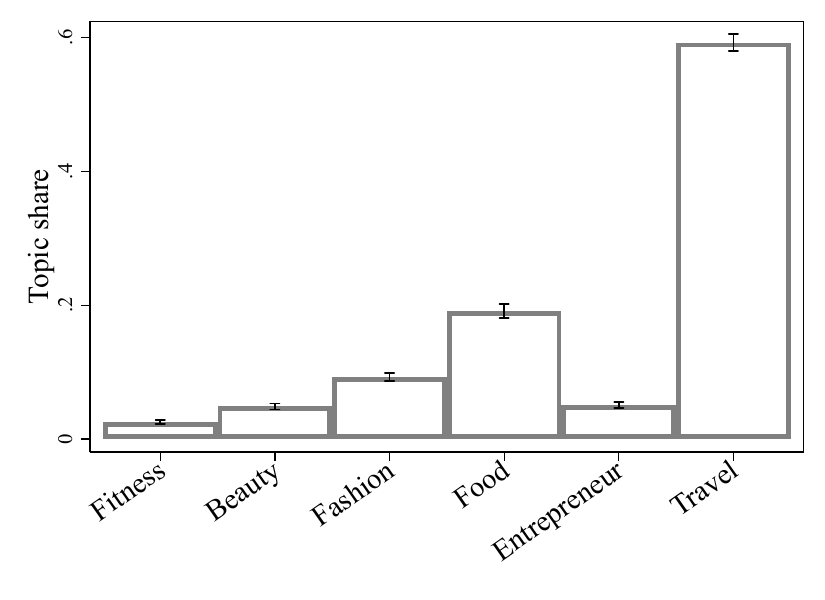}
			\caption{Travel \& food cartel}
		\end{subfigure}
		\caption{Authors' LDA topic distributions}
		\label{F:LDA_topic_distr_authors}
	\end{center}
	\vspace{-0.3cm}
	\footnotesize{Notes:
		The bars correspond to the average topic shares across authors and capped spikes describe the 95\% confidence intervals.
	}
\end{figure}

\section{Empirical Results} \label{S:empirical}

\subsection{Empirical Strategy}
Our empirical question is whether engagement from cartels is of lower quality than natural engagement. To answer the question, we estimate a panel data fixed effects regression where the outcome variable is the cosine similarity between an author and their commenter.
An observation is an author and their commenter pair. For each author, we focus on the first post in the cartel. Thus, we have only one post for each author. For each post, we have three types of commenters. Type one are the cartel members required to comment under cartel rules.\footnote{In the main analysis, we focus on cartel members required to comment instead of those actually commenting. We do this because we don't observe all comments. The robustness analysis shows that the results are similar when looking at who actually commented (Online Appendix \ref{A:EmpiricalRobustness}).} Here we separate general and topic cartel commenters.
Type two is what we call natural engagement: the non-cartel user who actually commented on the post. This serves as a benchmark to test whether cartel commenters have lower similarity (to authors) than non-cartel commenters.
Type three are what we call the counterfactual random users: these are randomly chosen non-cartel users.\footnote{The counterfactual random users are sampled from all the users in our dataset that are not members of any of the cartels. For each post, we first exclude the non-cartel commenter used to estimate the natural engagement, and then draw five random users from the remaining set.} This third group gives us another benchmark. It allows us to measure whether cartel commenters have higher similarity (to authors) than random users.
Our sample is a balanced panel, in the sense that we have all three types of users for each author: cartel commenters, non-cartel commenters, and random users.\footnote{For each post, we have one non-cartel commenter, five random users, and one to five cartel commenters. As described in \Cref{S:Sample}, for some posts, there are fewer than five cartel commenters in the sample because some cartel commenters didn't have enough publicly available content.}

For the first post in cartels of author $i$, the similarity to their commenter $j$ is:
\begin{align}
	Similarity_{ij}
	 & =\beta_{Gen} GeneralCartelCommenter_{ij}
	+\beta_{Top} TopicCartelCommenter_{ij} \nonumber         \\
	 & +\beta_{Ran} RandomUser_{ij} \label{E:MainRegression}
	+ InstagramAuthorFE_i
	+ \varepsilon_{ij},
\end{align}
where $Similarity_{ij}$ refers to the cosine similarity between author $i$ and commenter $j$;
$General \allowbreak CartelCommenter_{ij}$ is an indicator for a general cartel member $j$ who is required to comment;
$TopicCartelCommenter_{ij}$ is an indicator for a topic cartel member $j$ who is required to comment;
$RandomUser_{ij}$ is an indicator for a counterfactual random Instagram user not in the cartel;
$InstagramAuthorFE_i$ is the fixed effect for each author. Since we only have one post per author, this is equivalent to the post fixed effect.
The base category is natural engagement, that is, a commenter who is not in the cartel.

In the main analysis, we use two alternative similarity measures as outcomes: cosine similarity of users from LaBSE text embeddings and cosine similarity of their posts from CLIP photo and text embeddings.\footnote{The details of the outcome variables are described in \Cref{S:MeasuringEngagementQuality}.} We look at three mutually exclusive samples, each defined by the type of cartel containing the author's first cartel post: (1) authors whose first post is only in general cartels; (2) those whose first post is only in topic cartels; and (3) those whose first post is in both.\footnote{With some abuse of terminology, we say that a post is only in general (or topic) cartel even if it was posted in both but we have information on the commenters from only one of these. This affects only a small number of posts.}

To preview our main results, let us look at raw distributions of outcome variables (\Cref{F:Similarity}). Non-cartel commenters have the highest similarity with the author, and random users have the lowest. Commenters from general cartels have almost as low similarity as random users (\Cref{F:Similarity_cs_labse_first_gen}), while commenters from topic cartels have higher similarity (\Cref{F:Similarity_cs_labse_first_top}).

\begin{figure}[h!]
	\begin{center}
		\begin{subfigure}[t]{0.47\textwidth}
			\includegraphics[width=1\textwidth]{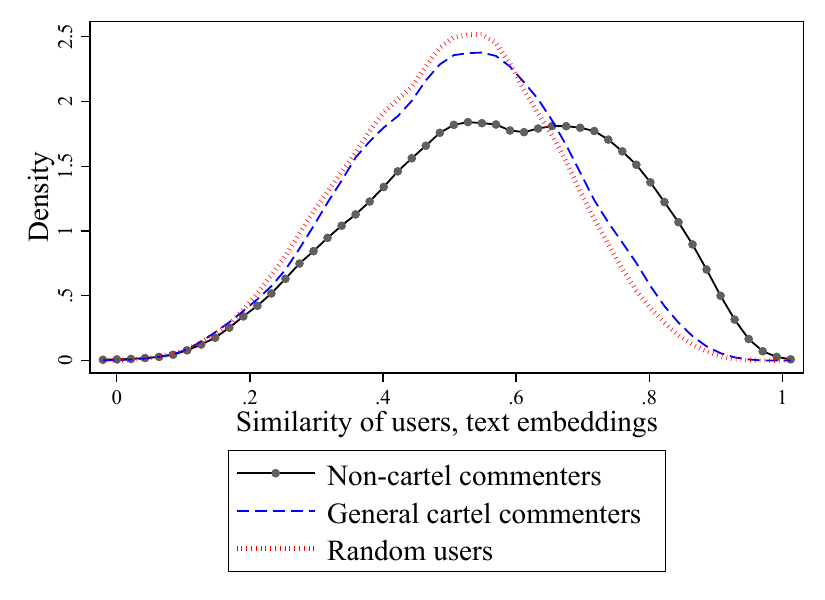}
			\caption{General cartels, users' similarity}\label{F:Similarity_cs_labse_first_gen}
		\end{subfigure}
		\hfill
		\begin{subfigure}[t]{0.47\textwidth}
			\includegraphics[width=1\textwidth]{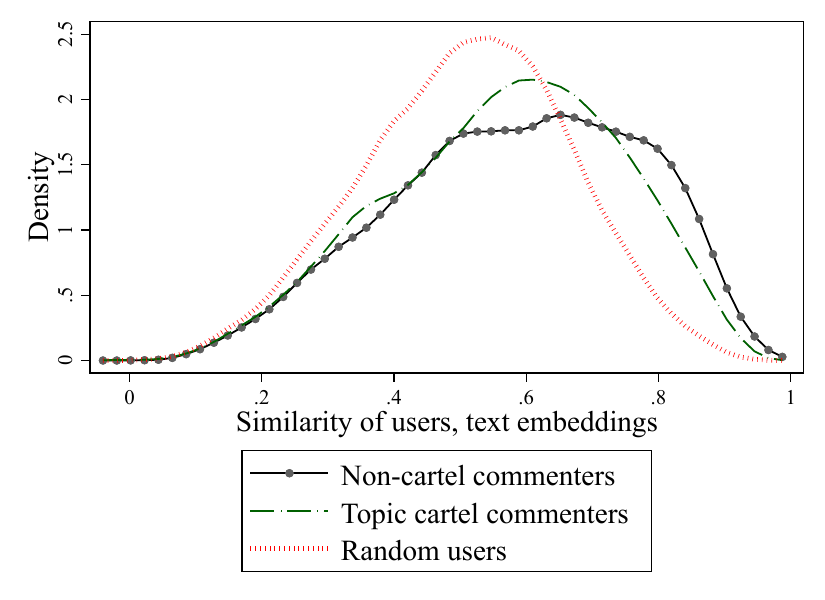}
			\caption{Topic cartels, users' similarity}\label{F:Similarity_cs_labse_first_top}
		\end{subfigure}
		\caption{Probability density of authors' similarity to commenters and random users}
		\label{F:Similarity}
	\end{center}
	\vspace{-0.3cm}
	\footnotesize{Notes:
		The figures present kernel density estimates using the Epanechnikov kernel function of authors' cosine similarity to non-cartel commenters (grey line with solid circle markers), to random users (red dotted line), to general cartel commenters (blue dashed line on \Cref{F:Similarity_cs_labse_first_gen}), and to topic cartel commenters (green dashed and dotted line on \Cref{F:Similarity_cs_labse_first_top}). The cosine similarity is calculated as the similarity of users using the text embeddings from the LaBSE model. \Cref{F:SimilarityClip} presents the probability density estimates for the similarity of posts using the photo and text embeddings from the CLIP model.
	}
\end{figure}

\subsection{Quality of Engagement Measured by Cosine Similarity}\label{SS:EmpiricalMainResults}

We find that in general cartels (columns 1 and 4 in \Cref{T:RegrMain}), authors' similarity to cartel commenters is significantly lower than their similarity to non-cartel commenters, who form the base category.\footnote{We define non-cartel commenters as those who don't belong to any of the cartels in our sample, but they might belong to cartels outside the sample. If some non-cartel commenters are members of cartels outside our sample, then we underestimate the difference between authors' similarity to cartel versus non-cartel commenters.}  Furthermore, similarity to general cartel members is almost as low as to random users. In contrast, in topic cartels (columns 2 and 5), authors' similarity to cartel commenters is only slightly lower than to non-cartel commenters. Similar results hold for posts that are in both general and topic cartels (columns 3  and 6). The regression results are robust to alternative ways to construct outcome variables and alternative samples, described in detail in Online Appendix \ref{A:EmpiricalRobustness}.

\begin{table}[h!]
	\begin{center}
		\begin{footnotesize}
			\caption{Panel data fixed effects estimates of authors' similarity to cartel commenters and random users versus non-cartel commenters}
			\label{T:RegrMain}
			\hspace*{-0.4cm}
			\begin{tabular}{lcccccc}
				\hline
 & (1) & (2) & (3) & (4) & (5) & (6) \\  & \multicolumn{6}{c}{Dependent variable: Cosine similarity} \\ \hline
 & \multicolumn{6}{c}{Posts in general or topic cartels} \\ & General & Topic & Both & General & Topic & Both \\ \hline &  \multicolumn{3}{c}{Similarity of users} & \multicolumn{3}{c}{Similarity of posts} \\ &  \multicolumn{3}{c}{Text embeddings} & \multicolumn{3}{c}{Photo+text embeddings} \\ \cmidrule(lr){2-4}\cmidrule(lr){5-7}
General cartel commenter&      -0.058***&               &      -0.060***&      -0.033***&               &      -0.034***\\
                    &     (0.003)   &               &     (0.008)   &     (0.001)   &               &     (0.003)   \\
Topic cartel commenter&               &      -0.023***&      -0.009   &               &      -0.016***&      -0.012***\\
                    &               &     (0.003)   &     (0.008)   &               &     (0.001)   &     (0.003)   \\
Random user         &      -0.071***&      -0.076***&      -0.062***&      -0.040***&      -0.040***&      -0.037***\\
                    &     (0.003)   &     (0.003)   &     (0.008)   &     (0.001)   &     (0.001)   &     (0.003)   \\
Wald test, $\beta_{Gen} = \beta_{Top}$, p-value&               &               &       0.000   &               &               &       0.000   \\
Base (non-cartel) mean&       0.574   &       0.585   &       0.577   &       0.657   &       0.657   &       0.659   \\
Author fixed effects&         Yes   &         Yes   &         Yes   &         Yes   &         Yes   &         Yes   \\
Authors             &        4756   &        3263   &         488   &        4756   &        3263   &         488   \\
Observations        &       44900   &       30569   &        6665   &       44900   &       30569   &        6665   \\
\hline

			\end{tabular}
		\end{footnotesize}
	\end{center}
	\footnotesize{Notes:
		Each column presents estimates from a separate panel data fixed effects regression. Unit of observation is an author and another user pair. Outcome variable is the cosine similarity of the author to his commenter or to a random user. In columns 1--3, the cosine similarity of users is calculated using the text embeddings from the LaBSE model; in columns 4--6, the cosine similarity of the corresponding users' posts is calculated using the photo and text embeddings from the CLIP model. Each regression includes author fixed effects (equivalent to the post fixed effects because only one post per author). In all the regressions, the base category is the author's similarity to a non-cartel commenter; and \textit{Base (non-cartel) mean} presents their average cosine similarity. \textit{General cartel commenter} is an indicator variable whether the commenter to whom the author's cosine similarity is calculated, is in the general cartel, and \textit{Topic cartel commenter} whether he is in the topic cartel. \textit{Random user} indicates that the author's similarity is calculated to a counterfactual random non-cartel user. The sample consists of authors whose first cartel post is either: only in general cartels (columns 1 and 4); only in topic cartels (columns 2 and 5; or in both general and topic cartels (columns 3 and 6).
		Standard errors in parentheses are clustered at the author level.
	}
\end{table}

To interpret these estimates, we use natural engagement and random match quality as benchmarks.
When we rescale cartel cosine similarity to the natural-random difference, an advertiser paying for general cartel engagement, while expecting natural engagement, captures only 3--18\% of the value. In contrast, with topic cartels, the advertiser still overpays but gets 60--85\% of the value.

Overall, our goal is to measure whether the additional attention that the cartel engagement brings is from users who are likely to be interested in the content. Above, we proxied the interests of the cartel-induced attention by the interests of the engaging cartel influencer.
This is a good proxy because cartel members' natural commenters are likely to be interested in the same topics as cartel members, irrespective of whether it is a topic or general cartel (as shown in \Cref{SS:LDA}). However, in Online Appendix \ref{A:EmpiricalRobustness} (\Cref{T:RegrRobust_cartel_commenters_noncartelcommenter,T:RegrRobust_cartel_commenters_noncartelcommenter_largeboth,T:RegrRobust_cartel_commenters_noncartelcommenter_heter}), we use an alternative approach based on topic match between the author and the commenting cartel members’ commenters.  The results remain qualitatively similar, but adding this additional layer of distance increases noise, and therefore, similarity measures are smaller.

\subsection{Quality of Engagement Measured by LDA Topic Match} \label{SS:LDA}

To further study engagement quality, we compare the LDA topic distributions of cartel versus non-cartel commenters, separately for general and topic cartels. For each author, we define the main topic as the one with the largest LDA probability. Then, within each group of authors sharing the same main topic, we compare the same topic share of the post's cartel versus non-cartel commenters.
First, \Cref{{F:LDA_CartelVsNatural}} illustrates that general and topic cartels are similar in terms of how often non-cartel commenters post on authors' main topic (grey bars).
Second, \Cref{F:LDA_GeneralVsNatural} shows that topic match from general cartels is worse than natural (non-cartel): non-cartel commenters post on the author's main topic about 42\% of the time, while cartel commenters do so only 22\% of the time. This comparison is different for topic cartels. \Cref{F:LDA_TopicVsNatural} focuses on the topics most prevalent in each topic cartel. It shows that the topic match from topic cartels is not worse than natural (non-cartel).

\begin{figure}[h!]
	\begin{center}
		\begin{subfigure}[t]{0.49\textwidth}
			\includegraphics[width=1\textwidth]{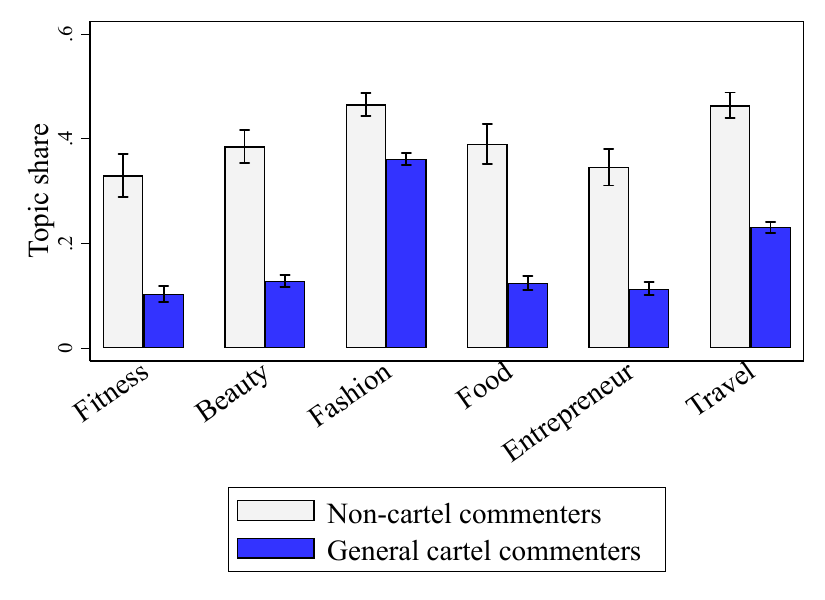}
			\caption{General cartel versus non-cartel commenters}\label{F:LDA_GeneralVsNatural}
		\end{subfigure}
		\begin{subfigure}[t]{0.49\textwidth}
			\includegraphics[width=1\textwidth]{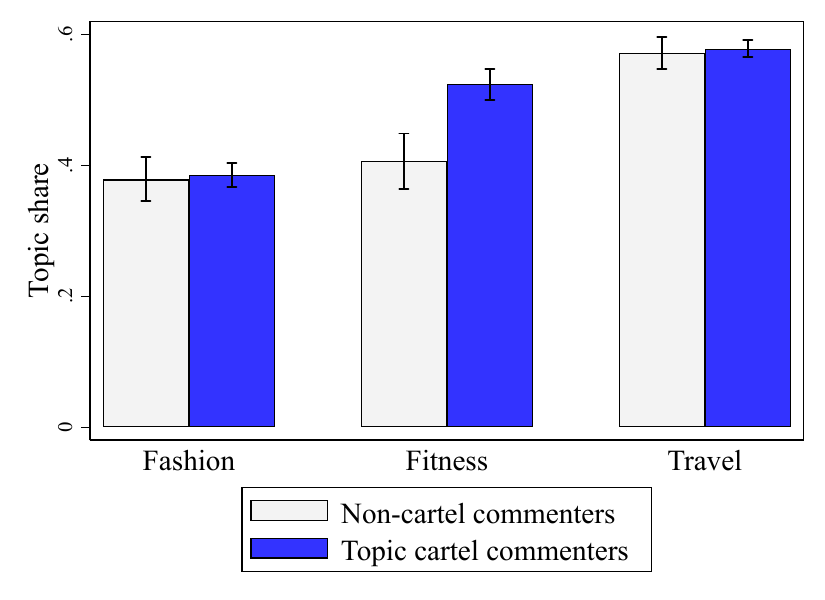}
			\caption{Topic cartel versus non-cartel commenters}\label{F:LDA_TopicVsNatural}
		\end{subfigure}
		\caption{LDA topic shares of  commenters from cartels versus non-cartels (natural)}
		\label{F:LDA_CartelVsNatural}
	\end{center}
	\vspace{-0.3cm}
	\footnotesize{Notes:
		The bars correspond to the average topic shares across commenters and capped spikes describe the 95\% confidence intervals. The grey bars capture the non-cartel commenters and blue bars the cartel commenters. The sample is restricted to general cartel commenters on \Cref{F:LDA_GeneralVsNatural} and topic cartel commenters on \Cref{F:LDA_TopicVsNatural}. Each set of bars uses a sample of commenters that is further restricted by the main topic of the author of the post they comment on, where the main topic is defined as the topic with the highest LDA probability. For example, on  \Cref{F:LDA_GeneralVsNatural}, for the first grey and blue bar, the sample is restricted to commenters on the posts in general cartels of the authors whose main topic is fitness.
	}
\end{figure}

\section{Discussion} \label{S:discussion}

Collusion can take many forms, especially in new and evolving industries. In this paper, we have documented and studied influencer cartels, a form of collusion in the rapidly growing influencer marketing industry, which has stayed under regulators' radar. Our empirical results indicate that engagement from general cartels is significantly lower in quality compared to natural engagement, while engagement from topic cartels is closer to natural engagement. Our theoretical model sheds light on the trade-offs involved and explores the associated welfare implications. The key distortion is the free-rider problem, which cartels could help mitigate through enforced commitment. However, cartels also introduce new distortions, such as over-engagement. These issues become particularly severe when the advertising market heavily rewards the quantity of engagement, encouraging the creation of fake engagement.

Our empirical and theoretical results have three policy and managerial implications. First, as general cartels are likely to be welfare-reducing, shutting these cartels down would be socially beneficial. Second, regulatory rules that prohibit buying and selling fake social media indicators should also prohibit obtaining these fake indicators via in-kind transfers, i.e., paying for engagement with engagement. Third, the current practice of advertisers to reward past engagement encourages harmful collusion. A better approach would be to compensate influencers based on the actual value they add. Alternatively, platforms could improve the outcomes by reporting match-quality-weighted engagement.

While our focus in this paper is to study specifically the distortions that influencer cartels create, the market structure and available data are rich enough to study other related questions. Future research could analyze whether joining a cartel really brings the desired growth in real followers and better advertising deals. While the cartel organizers claim this is the case and correlational evidence (provided in this paper and by \cite{weerasinghe_pod_2020}) supports it, the causal impact is difficult to measure because influencers might join a cartel at the same time they engage in other activities to induce growth.

Other potential extensions require additional data collection. To compare general and topic cartels that are similar in other dimensions, we focused only on Instagram influencer cartels that were organized by the same cartel organizer, had similar engagement rules, and were all relatively large. Future research could study heterogeneity in other dimensions: platforms, engagement requirements, and cartel sizes.

\bibliographystyle{ecta}
\bibliography{toomash-influencers,datasets}

\clearpage
\begingroup
\renewcommand{\thefootnote}{\fnsymbol{footnote}}
\setcounter{footnote}{0}
\begin{center}
{\LARGE Influencer Cartels\par}
\vspace{0.4cm}
{\LARGE Online Appendix\par}
\vspace{1.0cm}
Marit Hinnosaar\footnote{University of Nottingham and CEPR, \url{marit.hinnosaar@gmail.com}}
\hspace{1.0cm}
Toomas Hinnosaar\footnote{University of Nottingham and CEPR, \url{toomas@hinnosaar.net}}
\vspace{0.5cm}

May 27, 2026
\end{center}
\endgroup
\clearpage
\hypersetup{pageanchor=false}
\setcounter{section}{0}
\renewcommand{\thesection}{A\arabic{section}}
\renewcommand{\theHsection}{A\arabic{section}}
\renewcommand{\theHsubsection}{A\arabic{section}.\arabic{subsection}}
\renewcommand{\theHsubsubsection}{A\arabic{section}.\arabic{subsection}.\arabic{subsubsection}}
\renewcommand{\theHfigure}{A.\arabic{figure}}
\renewcommand{\theHtable}{A.\arabic{table}}
\renewcommand{\theHequation}{A.\arabic{equation}}
\counterwithin{table}{section}
\counterwithin{figure}{section}
\setcounter{page}{1}
\renewcommand{\thepage}{A\arabic{page}}
\setcounter{table}{0}
\setcounter{figure}{0}
\listofappendixsection
\listofappendixtable
\listofappendixfigure
\clearpage
\section{Online Appendix: Screenshots of Instagram and Engagement Pods}\label{A:Screenshots}
\addcontentsline{appsec}{appendixsection}{\protect\numberline{\thesection}Screenshots of Instagram and Engagement Pods}
\begin{figure}[ht!]
\begin{center}
\begin{subfigure}[t]{0.65\linewidth}
\includegraphics[width=1\linewidth]{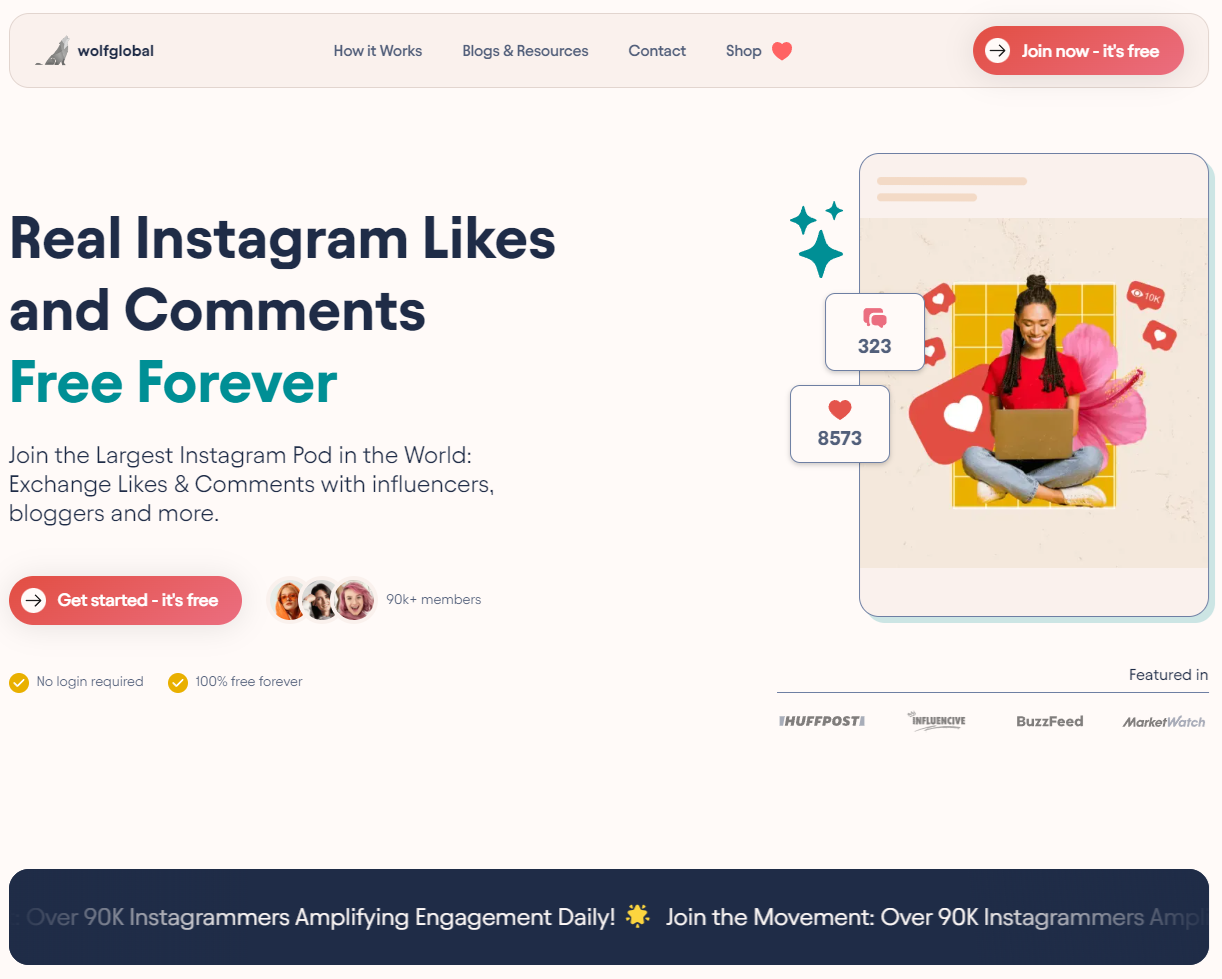}
\caption{Main page}
\end{subfigure}
\begin{subfigure}[t]{0.65\linewidth}
\includegraphics[width=1\linewidth]{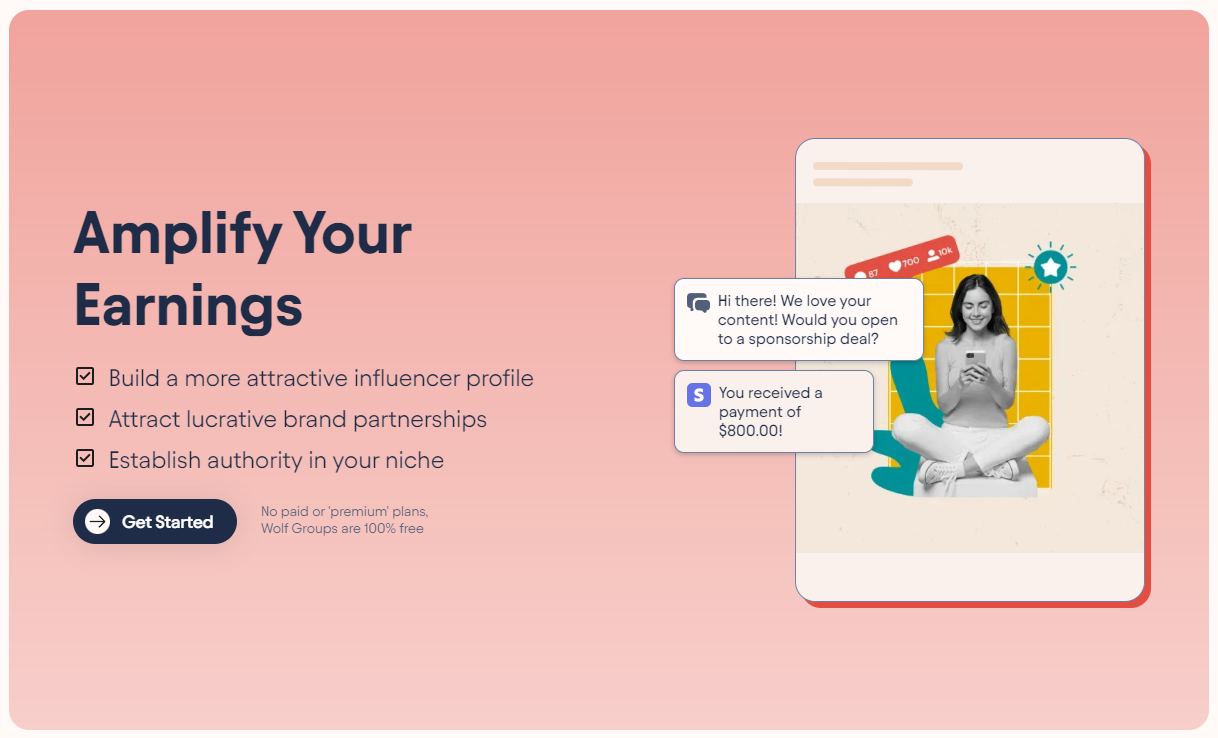}
\caption{Description of how influencers can amplify their earnings}
\label{F:wolf_amplify_earnings}
\end{subfigure}
\caption{Screenshots of Wolf Global Instagram Engagement Pods}
\appfigure{Screenshots of Wolf Global Instagram Engagement Pods}
\label{F:wolf_main_earnings}
\end{center}
\footnotesize{Notes: Screenshots of \url{https://www.wolfglobal.org/}, taken on March 4, 2024.
}
\end{figure}
\clearpage
\begin{figure}[h!]
\begin{center}
\includegraphics[width=0.7\linewidth]{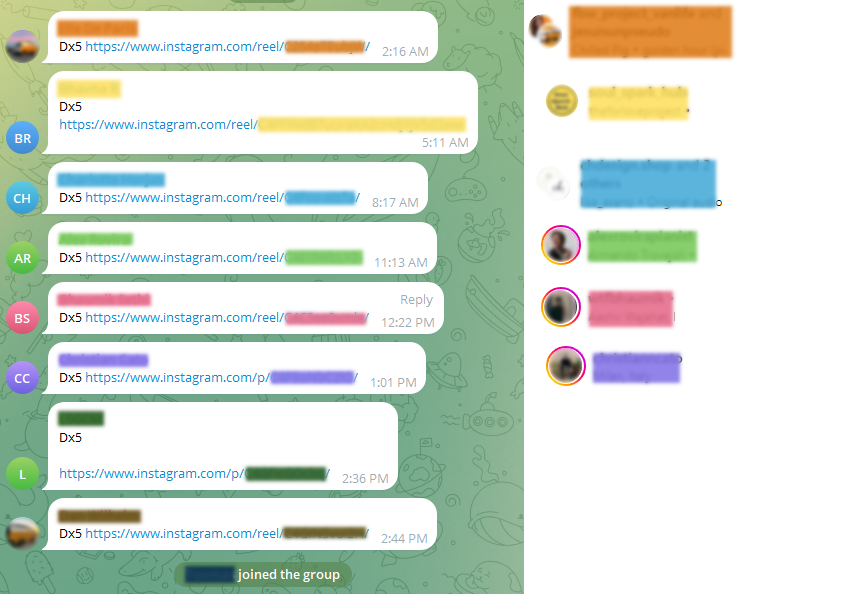}
\caption{Wolf Onyx Comments on Telegram app mapped to Instagram users}
\appfigure{Wolf Onyx Comments on Telegram app mapped to Instagram users}
\label{F:wolf_onyx_comments_mapped_anon}
\end{center}
\footnotesize{Notes: Screenshot of Telegram Wolf Onyx Comments, taken on March 4, 2024.
}
\end{figure}
\begin{figure}[h!]
\begin{center}
\includegraphics[width=0.55\linewidth]{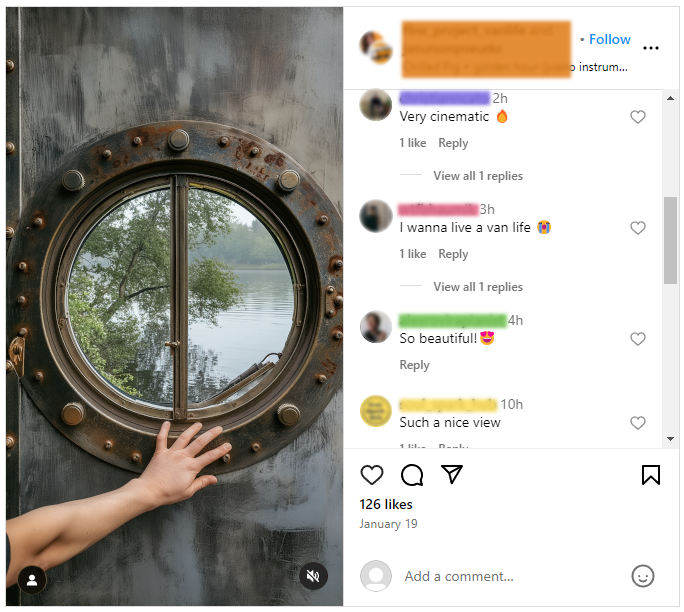}
\caption{Instagram comments coming from Wolf Onyx Comments}
\appfigure{Instagram comments coming from Wolf Onyx Comments}
\label{F:wolf_onyx_comments_practice_anon}
\end{center}
\footnotesize{Notes: Screenshot of Instagram, taken on March 4, 2024.
To preserve anonymity, the Instagram account names are blurred and the photo is replaced with a analogous photo generated by AI image generator.
}
\end{figure}
\begin{figure}[h!]
    \begin{center}
    \includegraphics[width=0.35\linewidth]{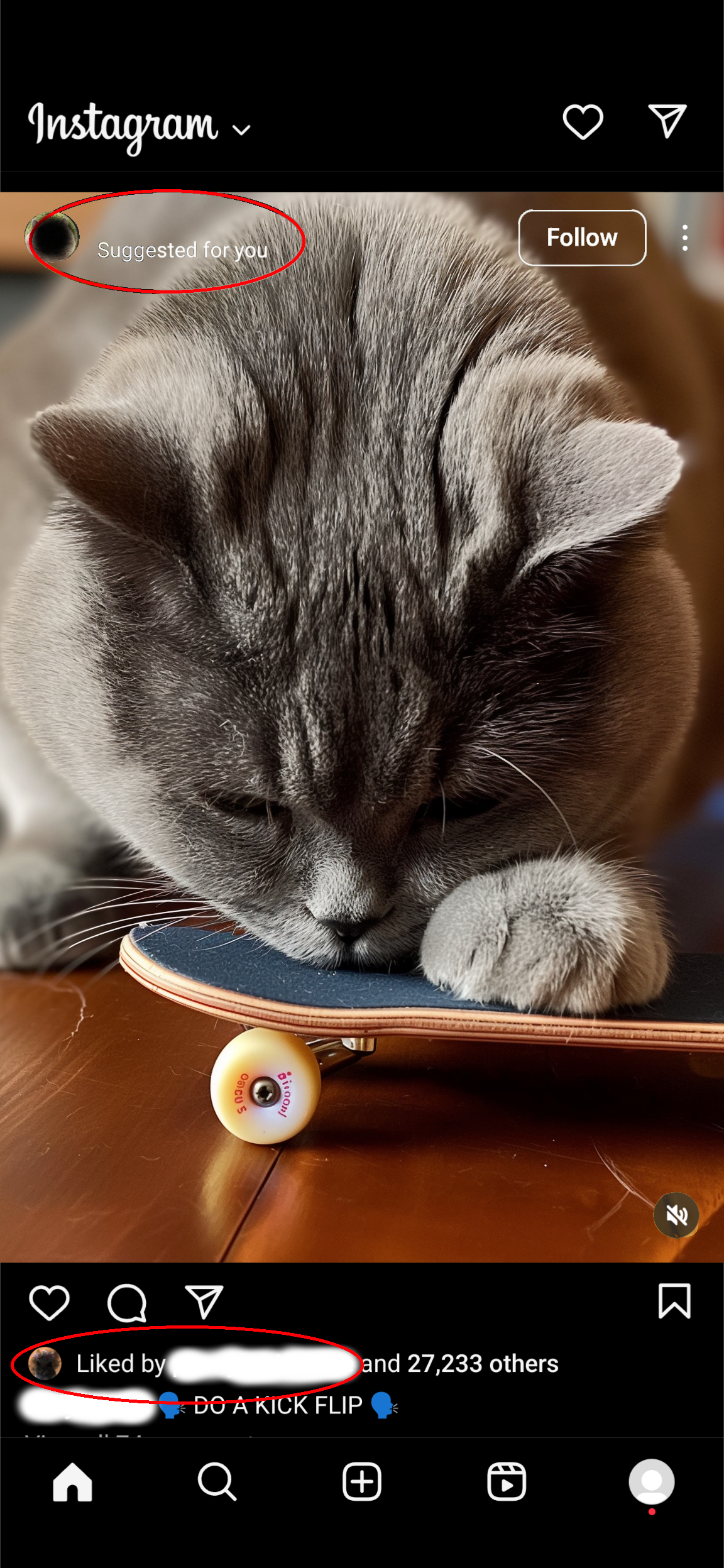}
    \caption{Suggested Instagram post and a user who liked it}
    \appfigure{Suggested Instagram post and a user who liked it}
    \label{F:suggested_for_you_Instagram}
    \end{center}
    \footnotesize{Notes:
    Screenshot of Instagram, taken on May 9, 2024.
    Upper red oval shows that the post was suggested to the viewer (who was not follower for this Instagram account) and lower red oval shows a specific user that the viewer does follow liked the post, indicating that one reason the post was suggested to the viewer was because of this engagement.
    To preserve anonymity, the Instagram account names are blurred and the photo is replaced with a analogous photo generated by AI image generator.
    }
\end{figure}

\clearpage
\section{Online Appendix: Google Trends}\label{A:GoogleTrends}
\addcontentsline{appsec}{appendixsection}{\protect\numberline{\thesection}Google Trends}
One way to estimate the prevalence of these cartels is through search activity. \Cref{F:GoogleTrends} presents Google Trends data for four relevant search terms: ``Instagram algorithm'', ``Instagram pod'', ``Instagram bot'', and ``patent pool''.\footnote{The influencer cartels are commonly called ``Instagram pods'', sometimes also ``influencer pods'' or ``engagement pods''.}
The graph for ``Instagram algorithm'' provides a useful baseline, as influencers are likely to be interested in understanding the platform's mechanics. The upward trend in searches aligns with the platform's growth, with notable spikes following changes to the algorithm.
Searches for ``Instagram pods'' (i.e., cartels) follow a similar pattern but with lower magnitude, averaging 53\% of algorithm-related searches. This suggests a significant portion of Instagram users are aware of and interested in cartels.
Searches for ``Instagram bot",  representing the most basic form of engagement fraud (paid computer-generated clicks, likes, and comments), are more common, averaging 260\% of algorithm-related searches.
In contrast, ``patent pools'' serve as a control benchmark and show relatively flat search activity, with lower volume compared to all Instagram-related terms.
\begin{figure}[h!]
	\begin{center}
		\includegraphics[width=0.95\textwidth]{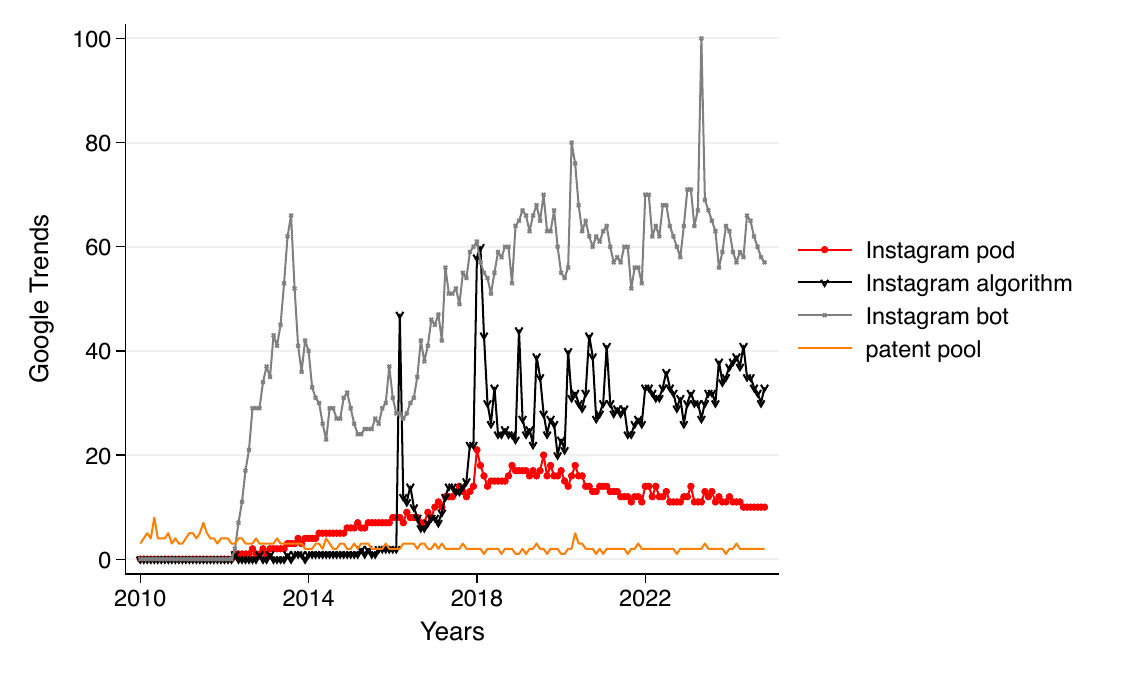}
		\caption{Google Trends}
		\appfigure{Google Trends}
		\label{F:GoogleTrends}
	\end{center}
	\footnotesize{Notes: The figure uses data from Google Trends. The lines measure worldwide Google search volume for search terms "Instagram pod", "Instagram algorithm", "Instagram bot", and "patent pool" in 2010-2024. Influencer cartels are commonly called ``Instagram pods''.
	}
\end{figure}

\clearpage
\section{Online Appendix: Proofs} \label{A:proofs}
\addcontentsline{appsec}{appendixsection}{\protect\numberline{\thesection}Proofs}
\subsection{Proof of \Cref{P:natural_social}} \label{OA:proof_P_natural_social}
Taking the difference in $U^I$ between engagement and no engagement yields net private gain $\gamma \cos(\Delta) - C(\Delta)$.
This expression is nonnegative if and only if
$\gamma \ge \tan(\Delta)$,
which holds if and only if $\Delta \le \Lambda^N$ with $\Lambda^N=\arctan(\gamma)$. Hence the equilibrium engagement rule is
$e^N(\alpha'|\alpha)=\one[\Delta\le \Lambda^N]$.
Because the engagement decision is independent and determined pointwise by the sign of $\gamma \cos(\Delta)-C(\Delta)$, the equilibrium is unique. Finally, since $\gamma<1$, we have $\Lambda^N=\arctan(\gamma)<\arctan(1)=\pi/4$.
Consider cutoff rules $e^S(\alpha'|\alpha)=\one[\Delta\le \Lambda]$. Social welfare is
\[
	W(\Lambda)
	=
	\frac{1}{\pi}\int_0^{\Lambda}
	\Big((R+1+Rv)\cos(\Delta)-(R+1)C(\Delta)\Big)\,d\Delta.
\]
The welfare-maximizing cutoff $\Lambda^S(v)$ satisfies the first-order condition
\[
	(R+1+Rv)\cos(\Lambda)-(R+1)C(\Lambda)=0.
\]
This condition is equivalent to
\[
	\tan(\Lambda)=1+\frac{Rv}{R+1},
\]
and therefore $\Lambda^S(v)=\arctan\!\left(1+\frac{Rv}{R+1}\right)$.
This expression is strictly increasing in $v$, and satisfies $\Lambda^S(0)=\arctan(1)=\pi/4$ and $\lim_{v\to\infty}\Lambda^S(v)=\pi/2$.
As $\Lambda^S(v)\ge \pi/4>\Lambda^N$, the social optimum prescribes engagement for all matches that engage in equilibrium and for additional matches with $\Delta\in(\Lambda^N,\Lambda^S(v)]$. These additional engagements occur at larger topic distances and therefore have lower similarity.
\qed
\subsection{Proof of \Cref{P:cartel}} \label{OA:proof_P_cartel}
The expected payoff from joining a cartel with engagement threshold $\Lambda$ is
\begin{align*}
	\expect[U^C]
	 & = \frac{1}{\pi} \int_0^{\Lambda}
	\biggl(
	\gamma \cos(\Delta) - C(\Delta)
	+ (1-\gamma) \cos(\Delta)
	+
	\underbrace{
		\frac{Rv}{\Lambda} \int_0^{\Lambda} \cos(\Delta')\,d\Delta'
	}_{=p}
	\biggr)
	d\Delta                             \\
	 & = \frac{1}{\pi} \int_0^{\Lambda}
	\Big(
	(1+Rv)\cos(\Delta) - C(\Delta)
	\Big)\, d\Delta.
\end{align*}
Define $\Lambda^{\max}\in[0,\pi]$ as the maximal feasible threshold, that is, the largest $\Lambda$ such that $\expect[U^C]\ge 0$.
We first show that $\Lambda^{\max} \ge \pi/2$. For all $\Lambda\le \pi/2$,
\[
	\expect[U^C]
	=
	\frac{1}{\pi}\int_0^{\Lambda}\big((1+Rv)\cos(\Delta)-\sin(\Delta)\big)\,d\Delta
	=
	\frac{1}{\pi}\Big((1+Rv)\sin(\Lambda)-(1-\cos(\Lambda))\Big).
\]
This expression is non-negative because\footnote{
	We are using the half-angle identities $\sin(\Lambda)=2\sin(\tfrac{\Lambda}{2})\cos(\tfrac{\Lambda}{2})$ and $1-\cos(\Lambda)=2\sin^2(\tfrac{\Lambda}{2})$.
}
\[
	(1+Rv)\,2\sin\!\frac{\Lambda}{2}\cos\!\frac{\Lambda}{2}
	\ge 2\sin^2\!\frac{\Lambda}{2}
	\;\;\Longleftrightarrow\;\;
	1+Rv \ge \tan\!\frac{\Lambda}{2},
\]
which holds for all $\Lambda\le \pi/2$ because then $\Lambda/2\le \pi/4$ and $\tan(\Lambda/2)\le 1 \le 1+Rv$.
Next, we show that $\Lambda^{\max}<\pi$. With $\Lambda=\pi$ we get
\[
	\expect[U^C]
	=
	\frac{1}{\pi} \int_0^{\pi} \big( (1+Rv)\cos(\Delta) - C(\Delta) \big)\, d\Delta
	= -\frac{1}{\pi} - \frac{1}{2} < 0.
\]
Hence there exists a maximal feasible threshold $\Lambda^{\max}\in(\pi/2,\pi)$ such that influencers are willing to join the cartel if and only if $\Lambda\le \Lambda^{\max}$.
We now characterize the optimal cartel engagement. Differentiating \eqref{E:singlecartelU} with respect to $\Lambda$ yields
\[
	\frac{d \expect[U^C]}{d \Lambda}
	=
	\frac{1}{\pi}\big((1+Rv)\cos(\Lambda)-C(\Lambda)\big).
\]
The optimal $\Lambda$ therefore satisfies
\[
	(1+Rv)\cos(\Lambda)-C(\Lambda)=0
	\;\;\Longleftrightarrow\;\;
	\tan(\Lambda)=1+Rv,
\]
and hence $\Lambda^C(v)=\arctan(1+Rv)$. This expression is strictly increasing in $v$, satisfies $\Lambda^C(0)=\arctan(1)=\pi/4$, and converges to $\pi/2$ as $v\to\infty$.
Finally, for all $v>0$,
\[
	\Lambda^C(v)=\arctan(1+Rv)
	>
	\arctan\!\left(1+\frac{Rv}{R+1}\right)
	=\Lambda^S(v),
\]
because $R/(R+1)<1$.
\qed
\subsection{Proofs for \Cref{sec:both_natural_and_cartel}} \label{OA:both_natural_and_cartel}
\subsubsection{Proof of \Cref{P:grandcartels}}
We focus on the limit as $m\to\infty$, so a single cartel takes the price $p(\varepsilon)$ as given. In equilibrium all cartels choose the same engagement requirement, denoted $\Lambda^C$.
By definition,
\[
	p(\varepsilon)
	=
	Rv
	\frac{
		(1-\varepsilon)\sin(\Lambda^N)+\varepsilon\sin(\Lambda^C)
	}{
		(1-\varepsilon)\Lambda^N+\varepsilon\Lambda^C
	}.
\]
Since $\sin(\Lambda^C)\ge 0$ and $\Lambda^C\le \pi$, we have
\[
	p(\varepsilon)
	\ge
	Rv
	\frac{(1-\varepsilon)\sin(\Lambda^N)}{(1-\varepsilon)\Lambda^N+\varepsilon\pi}.
\]
With $\Lambda^N=\arctan(\gamma)$, we have $\sin(\Lambda^N)=\gamma/\sqrt{1+\gamma^2}$, hence
\[
	p(\varepsilon)
	\ge
	Rv
	\frac{(1-\varepsilon)\dfrac{\gamma}{\sqrt{1+\gamma^2}}}{(1-\varepsilon)\arctan(\gamma)+\varepsilon\pi}
	=:Rv\,\rho(\varepsilon,\gamma),
\]
where $\rho(\varepsilon,\gamma)>0$ for all $\varepsilon,\gamma\in(0,1)$.
A cartel with $\Lambda_i^C=\pi$ is feasible if and only if its members obtain a nonnegative expected payoff,
\[
	\expect[U_i^C](\pi)
	=
	\frac{1}{\pi}\int_0^\pi\big(\cos(\Delta)-C(\Delta)+p(\varepsilon)\big)\,d\Delta
	=
	p(\varepsilon)-\frac{1}{\pi}-\frac{1}{2}
	\ge 0,
\]
that is, if $p(\varepsilon)\ge \frac{1}{\pi}+\frac{1}{2}$. Since $p(\varepsilon)\ge Rv\,\rho(\varepsilon,\gamma)$, there exists $\overline v>0$ such that for all $v\ge \overline v$ this feasibility condition holds.
For cartel $i$,
\[
	\expect[U_i^C](\Lambda_i^C)
	=
	\frac{1}{\pi}\int_0^{\Lambda_i^C}\big(\cos(\Delta)-C(\Delta)+p(\varepsilon)\big)\,d\Delta.
\]
Differentiating with respect to $\Lambda_i^C$ gives
\[
	\frac{d\expect[U_i^C]}{d\Lambda_i^C}
	=
	\frac{\cos(\Lambda_i^C)-C(\Lambda_i^C)+p(\varepsilon)}{\pi}.
\]
Evaluating at $\Lambda_i^C=\pi$ yields
\[
	\left.\frac{d\expect[U_i^C]}{d\Lambda_i^C}\right|_{\Lambda_i^C=\pi}
	=
	\frac{\cos(\pi)-C(\pi)+p(\varepsilon)}{\pi}
	=
	\frac{p(\varepsilon)-2}{\pi}.
\]
Using the lower bound $p(\varepsilon)\ge Rv\,\rho(\varepsilon,\gamma)$, there exists $\widehat v\ge \overline v$ such that for all $v\ge \widehat v$,
\[
	\left.\frac{d\expect[U_i^C]}{d\Lambda_i^C}\right|_{\Lambda_i^C=\pi}
	\ge
	\frac{Rv\,\rho(\varepsilon,\gamma)-2}{\pi}
	>0.
\]
Hence, for $v\ge \widehat v$, cartel $i$ strictly prefers $\Lambda_i^C=\pi$ to any smaller engagement requirement. Since cartels are symmetric, in equilibrium all cartels choose $\Lambda_i^C=\pi$.
\qed
\subsubsection{Proof of \Cref{C:grandcartels}}
Assume cartels are general cartels, so $\Lambda^C=\pi$. Then
\[
	p(\varepsilon)
	=
	Rv
	\frac{
		(1-\varepsilon)\sin(\Lambda^N)
	}{
		(1-\varepsilon)\Lambda^N+\varepsilon\pi
	}.
\]
Differentiating gives
\[
	p'(\varepsilon)
	=
	-\,\frac{Rv\,\pi\sin(\Lambda^N)}{\big[(1-\varepsilon)\Lambda^N+\varepsilon\pi\big]^2}
	<0.
\]
A cartel member’s expected payoff satisfies
\[
	\frac{d\expect[U_i^C]}{d\varepsilon}
	=
	p'(\varepsilon)
	<0,
\]
because $\varepsilon$ affects $\expect[U_i^C]$ only through $p(\varepsilon)$. A non-cartel influencer engages with probability $\Lambda^N/\pi$, hence her expected payoff depends on $\varepsilon$ through advertising revenue with the same factor, which implies
\[
	\frac{d\expect[U^I]}{d\varepsilon}
	=
	\frac{\Lambda^N}{\pi}\,p'(\varepsilon)
	<0.
\]
This proves parts (ii) and (iii).
General cartels imply $\expect[\cos(\Delta)\mid \Delta\le \pi]=0$, while attention costs and engagement costs remain strictly positive. Moreover, $p'(\varepsilon)<0$ reduces the payoffs of both cartel members and non-members. Therefore, increasing $\varepsilon$ strictly reduces social welfare, proving part (i).
\qed

\clearpage
\section{Online Appendix: Extension with Heterogeneous Reach} \label{A:reach}
\addcontentsline{appsec}{appendixsection}{\protect\numberline{\thesection}Extension with Heterogeneous Reach}
Here we extend the analysis to the case of heterogeneous reach. Specifically, we assume that every influencer $t$ is characterized by a two-dimensional type $(\alpha_t, R_t)$, where $\alpha_t \in [0,2\pi]$ denotes the topic and $R_t$ the reach. We assume that $\alpha_t$ is uniformly distributed, while $R_t$ follows a power-law distribution with mean $2$.\footnote{That is, the probability density function is $f(R) = 2R^{-3}$ for $R \ge 1$.} As in the main text, each influencer is randomly matched with another influencer $t'$ (equivalently, this can be interpreted as a dynamic model), where the distance is $d(\alpha_t, \alpha_{t'}) = \Delta$. Each influencer can engage either naturally or according to cartel rules.
To keep the analysis tractable, we make two additional simplifying assumptions relative to the main text. First, instead of modeling followers explicitly, we assume that influencers fully internalize the costs and benefits experienced by their followers. Specifically, the payoff of influencer $t$ depends only on the engagement choices, $a_t = e(\alpha_{t'} \mid \alpha_t)$ and $a_{t'} = e(\alpha_t \mid \alpha_{t'})$, and is given by
\begin{equation} \label{E:basic_payoff}
	U^I
	= a_t R_t [\gamma \cos(\Delta) - C(\Delta)]
	+ a_{t'} \big[(1-\gamma) R_{t'} \cos(\Delta) + p_t \big],
\end{equation}
where the price of engagement is given by $p_t = R_{t'} v \, \expect[\cos(\Delta) \mid a_{t'} = 1]$.
Because influencers jointly internalize the surplus generated for both consumers and advertisers, social welfare is simply the sum of all influencers' payoffs:
\begin{equation}
W = \expect[U^I].
\end{equation}
The second simplifying assumption is that we focus on two extremes with respect to the advertising market: either $v = 0$, i.e., there is no advertising, or $v$ is very large, i.e., advertising revenue dominates all other considerations.\footnote{
	To avoid dealing with infinities, we study limits of normalized payoffs, i.e., $\lim_{v \to \infty} (U / v)$, where $U$ denotes the relevant payoff or welfare function.
}
\subsection{Only Natural Engagement}
As in our main model, the positive externality of engagement creates a free-riding problem. Therefore, in equilibrium there is less engagement than is socially optimal, i.e., the conclusions from \Cref{P:natural_social} still hold. The level of natural engagement is $\Lambda^N = \arctan(\gamma) < \frac{\pi}{4}$, whereas the socially optimal engagement level $\Lambda^S$ maximizes
\begin{align}
W
&=
\frac{2}{\pi}
\int_0^{\Lambda}
\big(
(1+v)\cos(\Delta) - \sin(\Delta)
\big)
\, d\Delta,
\end{align}
where we have used the facts that the optimal $\Lambda \le \frac{\pi}{2}$ and that $\expect[R_t] = \expect[R_{t'}] = 2$.
Without an advertising market, the socially optimal engagement level satisfies $\cos(\Lambda) = \sin(\Lambda)$, so that $\Lambda^S(0) = \arctan(1) = \frac{\pi}{4}$. On the other hand, if the advertising market is extremely important, then, in the limit, the advertising surplus dominates all other components of social welfare, and therefore
\begin{equation}
\lim_{v \to \infty} \frac{W}{v} =
\frac{2}{\pi}
\int_0^{\Lambda}
\cos(\Delta)\, d\Delta,
\end{equation}
and the socially optimal engagement level converges to $\lim_{v \to \infty} \Lambda^S(v) = \arccos(0) = \frac{\pi}{2}$. The following proposition formalizes these observations.
\begin{proposition}
There is a unique equilibrium.
There is more engagement in the social optimum than in equilibrium, although the additional engagement is of lower quality. In particular:
	\begin{enumerate}
		\item The natural engagement level is $\Lambda^N = \arctan(\gamma) < \frac{\pi}{4}$.
		\item The socially optimal engagement level is $\Lambda^S = \frac{\pi}{4}$ if $v = 0$, and $\Lambda^S \to \frac{\pi}{2}$ as $v \to \infty$.
	\end{enumerate}
\end{proposition}
\subsection{Only Cartel Engagement}
Suppose that the cartel requires an engagement level $\Lambda$. An influencer $t$ who considers joining the cartel knows his type $(\alpha_t, R_t)$ but does not yet know the characteristics of the matched influencer $t'$. The expected payoff from joining the cartel is
\begin{align}
	\expect U^C(R_t)
	&= \frac{1}{\pi} \int_0^{\Lambda}
	\Big[
	\gamma R_t \cos(\Delta) - R_t C(\Delta)
	+ (1-\gamma)\,\expect[R_{t'} \mid a_{t'} = 1] \cos(\Delta)
	+ p_t
	\Big]\, d\Delta.
\end{align}
\subsubsection{No Advertising}
Let us again start with the case when $v=0$, so that $p_t=0$. It is straightforward to see that there exists a possibly infinite threshold $\oR \geq 1$, such that an influencer with reach $R_t$ joins the cartel if and only if $R_t \leq \oR$. Suppose first that $\Lambda \leq \frac{\pi}{2}$, so that $C(\Delta)=\sin(\Delta)$. Using a monotonic transformation $\lambda = \tan\left(\frac{\Lambda}{2}\right)$, the expression for $\expect U^C(R_t)$ simplifies to\footnote{
	Note that $\frac{1 + \cos(\Lambda)}{\sin(\Lambda)} = 1/\tan\left(\frac{\Lambda}{2}\right) = \frac{1}{\lambda}$, and $\sin(\Lambda) = \sin(2\tan^{-1}(\lambda)) = \frac{2 \lambda}{\lambda^2 +1}$.
}
\begin{align}
\expect U^C(R_t)
&=
\frac{2 \lambda (1-\gamma)}{\pi (1+\lambda^2)} \left(
\expect[R_{t'}|R_{t'} \leq \oR] - \frac{\lambda -\gamma}{1-\gamma} R_t
\right).
\end{align}
Now, if $\lambda \leq \gamma$, then the expression is positive for all $R_t$, and all influencers join the cartel. On the other hand, if $\lambda > \gamma$, then the marginal influencer type $\oR$ must satisfy\footnote{Note that $\expect[R_{t'} | R_{t'} \leq \oR] = \frac{2 \oR}{1 + \oR}$.}
\begin{equation}
\expect U^C(\oR)
=
\frac{2 \lambda (1-\gamma)}{\pi (1+\lambda^2)} \left(
\frac{2 \oR}{1 + \oR} - \frac{\lambda -\gamma}{1-\gamma} \oR
\right)
= 0
\;\;
\iff
\;\;
\oR = \frac{2-\lambda-\gamma}{\lambda-\gamma}.
\end{equation}
This analysis assumed that $\Lambda \leq \frac{\pi}{2}$ and therefore $\lambda = \tan\left(\frac{\Lambda}{2} \right) \leq \tan\left(\frac{\pi}{4} \right) = 1$. Specifically, when $\Lambda = \frac{\pi}{2}$, we have $\lambda = 1$ and therefore $\oR = 1$, i.e., in this extreme case only the lowest-reach influencers with $R_t = 1$ join the cartel. It is easy to see that when $\Lambda > \frac{\pi}{2}$, even the lowest type would not want to join the cartel.
Combining these observations, we get the following proposition.
\begin{proposition} \label{P:cartel_het}
	If $v=0$, then, depending on the cartel agreement, there can be three types of equilibria in the cartel entry game:
	\begin{enumerate}
		\item If $\lambda \leq \gamma$, all influencers join the cartel.
		\item If $\gamma < \lambda < 1$, all influencers with $R_t \leq \oR = \frac{2-\gamma-\lambda}{\lambda-\gamma}$ join the cartel.
		\item If $\lambda \geq 1$, no influencer joins the cartel.
	\end{enumerate}
\end{proposition}
The proposition implies that only cartels with engagement requirement $\Lambda \leq \frac{\pi}{2}$ are feasible. Compared to the case with homogeneous influencers, high engagement requirement $\Lambda$ introduces an additional distortion: influencers with high reach may choose not to join the cartel. This is because, for sufficiently high engagement requirements $\Lambda$, influencers face a trade-off when deciding whether to join: on the one hand, they are asked to engage more than they would prefer in isolation (i.e., $\gamma \cos(\Delta) - C(\Delta) < 0$), but on the other hand they receive additional engagement from other cartel members in return. The key insight is that the first component (cost) is proportional to the influencer's own reach $R_t$, whereas the second (benefit) is proportional to the average reach of a cartel member. Hence, influencers with the highest reach are the first to stay out of the cartel.
Next, let us consider an optimal cartel. Using \Cref{P:cartel_het}, we know that the optimal cartel must have $\lambda < 1$ (or equivalently, $\Lambda < \frac{\pi}{2}$). Therefore, depending on the level of the externality parameter $\gamma$, there are three cases to consider. First, suppose that the socially optimal engagement level is such that $\lambda^C < \gamma$, i.e., all influencers would join such cartel. It is straightforward to see that, in this case, the expected payoff of a cartel member, $\expect U^C(R_t)$, coincides with the social welfare function $W$. Hence, the optimal engagement coincides with the socially optimal one, which is such that $\Lambda^C = \Lambda^S = \frac{\pi}{4}$, or equivalently $\lambda^C = \lambda^S = \tan \left( \frac{\pi}{8} \right) = \sqrt{2} - 1 \approx 0.414$. For illustration, see \Cref{F:cartel_payoffs} (the green line, corresponding to $\gamma = \frac{1}{2}$).
\begin{figure}
	\begin{center}
		\includegraphics[trim=30pt 10pt 45pt 15pt,clip,width=0.8\linewidth]{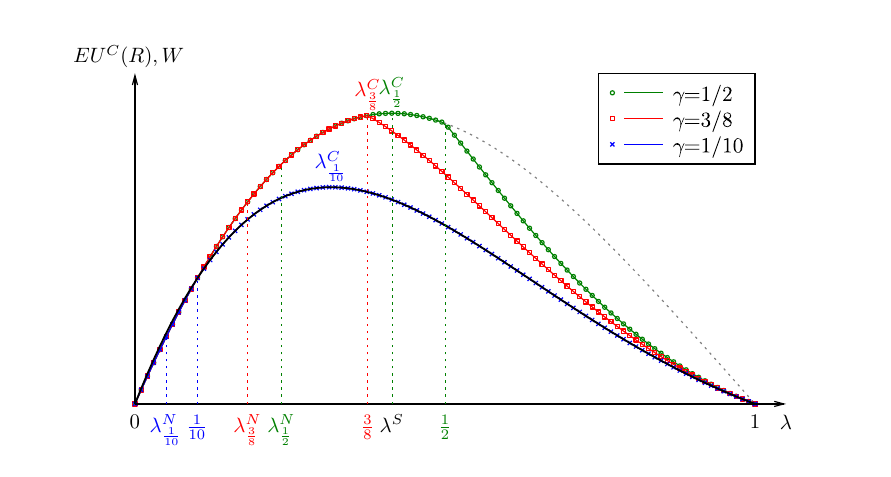}
		\caption{Welfare as a function of engagement requirement}
		\appfigure{Welfare as a function of engagement requirement}
		\label{F:cartel_payoffs}
	\end{center}
	\footnotesize{Notes: $\lambda^S$ denotes the socially optimal engagement level, $\lambda^N_{\gamma}$, the natural engagement, and $\lambda^C_{\gamma}$ the optimal cartel engagement corresponding to $\gamma$. The solid lines with markers represent expected payoffs for cartel members, and the dashed line is social welfare function if all influencers would join.}
\end{figure}
This argument fails when $\gamma$ is small, i.e., $\gamma < \lambda^S$. In this case, the socially optimal cartel is not feasible, because the highest-reach influencers would not join the cartel. We then need to augment the objective of the cartel, i.e., the mean payoff for a cartel member by taking expectation conditional on $R_t \leq \oR$ defined in \Cref{P:cartel_het}, and the expression becomes
\begin{align}
	\expect[ \expect U^C(R_t)\mid R_t \leq \oR]
	&=
	\frac{2 \lambda (1- \lambda)}{\pi (\lambda^2+1)} \frac{2-\gamma-\lambda}{1-\gamma},
	\label{E:Wcm_onlysome}
\end{align}
This expression has a unique maximizer $\lambda^*(\gamma)$ in $(0,1)$. If $\lambda^*(\gamma) > \gamma$, then the optimal cartel engagement is $\lambda^C = \lambda^*(\gamma)$. This is illustrated by the blue line in \Cref{F:cartel_payoffs} (the case $\gamma=\tfrac{1}{10}$).
The final case is when the externality parameter $\gamma$ is intermediate, i.e., $\lambda^*(\gamma) < \gamma < \lambda^S$. In this case, the optimal engagement lies at the boundary between the two regions, i.e., $\lambda^C = \gamma$. This case is illustrated by the red line in \Cref{F:cartel_payoffs} (the case $\gamma = \frac{3}{8}$).
The following corollary summarizes the observations above.
\begin{corollary} \label{C:cartel_welfare}
	Depending on $\gamma$, we have one of three cases:\footnote{
		The critical values are
		$\gamma^{\text{inc}} \approx 0.344$
		and
		$\lambda^S = \sqrt{2}-1 \approx 0.414$.
	}
	\begin{enumerate}
		\item If $\gamma \geq \lambda^S$, then the optimal cartel is the socially optimal cartel with $\lambda^C = \lambda^S$.
		\item If $\gamma^{\text{inc}} \leq \gamma < \lambda^S$, then the social optimum is not feasible as a cartel outcome. The optimal cartel is the one with the largest engagement level such that all influencers join the cartel, which is $\lambda^C = \gamma$.
		\item If $\gamma < \gamma^{\text{inc}}$, then the social optimum is not feasible as a cartel outcome. The optimal cartel, $\lambda^C = \lambda^*(\gamma)$, involves some influencers staying out of the cartel.
	\end{enumerate}
\end{corollary}
This version of our model can also shed light on why influencer cartels often impose entry requirements in practice. A typical requirement is to have at least some minimum number of followers, ranging from 1,000 to 100,000 in our sample. We saw that the cost of joining the cartel depends on the influencer’s own reach, while the benefit depends on the average reach of a cartel member. By imposing a minimum reach requirement, the cartel increases the average reach, making it more appealing to influencers with higher reach. The combination of these effects raises the average reach and benefits all cartel members. Therefore, we would expect the entry requirement to increase the average benefits for cartel members. On the other hand, excluding influencers with low reach means that fewer can join the cartel, which may reduce overall social welfare. The following proposition confirms this intuition.
\begin{proposition} \label{P:entry_requirement}
	Suppose that, in addition to an engagement requirement $\Lambda > 0$, the cartel imposes an entry requirement $\uR > 1$, so that only influencers with $R_t \geq \uR$ are eligible to join.
	The expected payoff of a cartel member, $\expect[U^C(R_t) \mid \uR \leq R_t]$, is proportional to $\uR$, and social welfare is proportional to $\uR^{-1}$.
\end{proposition}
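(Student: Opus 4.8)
The plan is to exploit the scale invariance of the Pareto reach distribution. The key fact is that $f(R) = 2R^{-3}$ on $[1,\infty)$ is self-similar: $\Pr(R \geq \uR) = \uR^{-2}$, and conditional on $R \geq \uR$ the variable $R/\uR$ has exactly the baseline density $2(R')^{-3}$ on $[1,\infty)$, so in particular $\expect[R \mid R \geq \uR] = 2\uR$. First I would record this and write each eligible player's reach as $R = \uR R'$ with $R'$ drawn from the baseline distribution.

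Second, I would use that the cartel payoff $\Uc(R_{\ccurr})$ in \eqref{E:cartel_IR} is homogeneous of degree one jointly in the pair $(R_{\ccurr},\, \expect R_{\cnext})$. In the economy with entry requirement $\uR$, every cartel member's reach is $\uR$ times a baseline-distributed normalized reach, so $\expect R_{\cnext}$ is $\uR$ times the expected normalized reach of a cartel member. Hence the individual-rationality condition $\Uc(R_{\ccurr}) \geq 0$ for a player with normalized reach $R'$ reduces, once the common factor $\uR$ cancels, to exactly the inequality $\frac{1-\gamma}{\lambda-\gamma}\,\expect[R' \mid \text{join}] \geq R'$ that governs the baseline ($\uR = 1$) economy. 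Therefore the set of normalized reaches that join coincides with the baseline join set, which by \cref{P:cartel} is $\{R' : R' \leq \oR\}$ with $\oR = \frac{2-\gamma-\lambda}{\lambda-\gamma}$ when $\gamma < \lambda < 1$; it is all of $[1,\infty)$ when $\lambda \leq \gamma$, and it is empty when $\lambda \geq 1$, in which case the claim is vacuous.

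Third, I would assemble the two welfare measures. Writing $\Wcm_0$ and $\Wcp_0$ for the entry-requirement-free values from \eqref{E:Wcp_allparticipate}, \eqref{E:Wcm_onlysome} and \eqref{E:Wcp_onlysome}: because $\Uc$ scales by $\uR$ under $R = \uR R'$ and a cartel member's normalized reach is baseline-distributed, the conditional mean cartel-member payoff is $\Wcm(\lambda) = \uR\,\Wcm_0(\lambda)$, which is the first assertion. For the second, the unconditional probability of being a cartel member factors as $\Pr(R \geq \uR)\cdot\Pr(R' \leq \oR) = \uR^{-2}\,p_0$ with $p_0$ the baseline join probability, so $\Wcp(\lambda) = \uR^{-2}\,p_0\cdot\uR\,\Wcm_0(\lambda) = \uR^{-1}\,\Wcp_0(\lambda)$, proportional to $\uR^{-1}$ as claimed.

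The step I would treat most carefully is the self-consistency (fixed-point) argument: one must check that the equilibrium join set of the $\uR$-economy --- defined through an expectation taken over that economy's own cartel members --- genuinely equals the baseline join set. This works precisely because $\uR$ enters linearly on both sides of the IR inequality and cancels, so the normalized problem is literally the $\uR=1$ problem already solved in \cref{P:cartel} and \cref{C:cartel_welfare}; everything else is substitution into those closed forms. I would also state the two boundary regimes ($\lambda \leq \gamma$ and $\lambda \geq 1$) explicitly for completeness, though both are immediate.
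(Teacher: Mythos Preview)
Your argument is correct and takes a genuinely different route from the paper. The paper proves the proposition by direct case-by-case recomputation: for $\lambda \leq \gamma$ it recalculates $\expect[R \mid R \geq \uR] = 2\uR$ and plugs into the welfare formula; for $\gamma < \lambda < 1$ it re-derives the truncated conditional mean $\expect[R \mid \uR \leq R \leq \oR] = 2/(\uR^{-1}+\oR^{-1})$, re-solves the marginal-type condition to obtain $\oR = \frac{2-\gamma-\lambda}{\lambda-\gamma}\,\uR$, and then substitutes to recover the explicit $\Wcm$ and $\Wcp$ formulas, each seen to equal the baseline expression times $\uR$ or $\uR^{-1}$.

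You instead identify the structural reason the result holds: the Pareto reach distribution is scale-invariant (conditioning on $R \geq \uR$ and dividing by $\uR$ returns the baseline law), and the cartel payoff $\Uc$ is homogeneous of degree one in $(R_{\ccurr},\expect R_{\cnext})$. Once those two facts are stated, the $\uR$-economy is literally the $\uR=1$ economy in normalized variables, so the join set, $\Wcm$, and $\Wcp$ inherit the baseline values up to the factors $\uR$ and $\uR^{-2}$. This is more conceptual and avoids redoing any integrals; the paper's direct computation, by contrast, delivers the closed-form expressions explicitly (useful if one wants the actual numbers) and does not rely on the reader recognizing Pareto self-similarity. Your careful treatment of the fixed-point consistency---that the normalized equilibrium coincides with the baseline equilibrium because $\uR$ cancels from both sides of the IR condition---is exactly the point that needs checking and you flag it correctly.
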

The cartel may, therefore, choose to restrict eligibility, because such a restriction raises cartel members' welfare. However, there is a downside---the restriction reduces the number of cartel members, and this effect can be large enough to reduce overall welfare. In our model, the optimal minimum reach is infinitely large, but in practice, a very high reach requirement would be impractical.\footnote{Nevertheless, in our sample there is a cartel that requires at least 100,000 followers, and this cartel has a large number of members.}
\subsubsection{Advertising Dominates Other Incentives}
Let us now consider the other extreme, where advertising revenue dominates all other incentives. Specifically, in this case, the normalized value of joining a cartel with engagement requirement $\Lambda$ becomes
\begin{equation}
\lim_{v \to \infty} \frac{\expect U^C(R_t)}{v}
= \frac{\Lambda}{\pi}
\expect[ R_{t'} \cos(\Delta) \mid a_{t'} = 1]
=
\frac{\sin(\Lambda)}{\pi}
\expect[ R_{t'} \mid a_{t'} = 1].
\end{equation}
This expression is non-negative for all $\Lambda$, so any cartel is feasible and all influencers join, implying $\expect[ R_{t'} \mid a_{t'} = 1] = 2$. The expression is maximized when $\sin(\Lambda) = 1$, hence $\Lambda^C = \frac{\pi}{2}$, which also corresponds to the socially optimal cartel in this case.
Compared to the no-advertising case above, we do not have the secondary distortion of high-reach influencers staying out of the cartel. The reason is that, since advertising revenue dominates all other incentives, all influencers become primarily interested in the attention they receive from the cartel, i.e., they care much more about the average reach than about their own reach.
Note that a minimum entry requirement would still increase the expected payoff of a cartel member, as $\expect[R_{t'} \mid R_{t'} \geq \uR] = 2\uR$, so it would be optimal to introduce a high entry requirement. However, since the share of influencers satisfying this requirement is $Pr(R_{t'} \geq \uR) = \frac{1}{\uR^2}$, social welfare would in fact decrease with $\uR$.
\subsection{Both Natural and Cartel Engagement}
Again, we combine natural engagement and cartel engagement, assuming that mass $1-\varepsilon$ of influencers choose natural engagement, whereas mass $\varepsilon$ belong to an infinite number of small cartels, which choose their engagement requirements independently. We now focus only on the case $v \to \infty$. As implied by the discussion above, in this case there is no substantial difference from the homogeneous-reach model, as payoffs depend only on the average reach rather than the influencer’s own reach.
Therefore, the results from \Cref{P:grandcartels} and \Cref{C:grandcartels} apply here as well: general cartels with maximal engagement level $\Lambda_i^C = \pi$ are not only feasible but also optimal, maximizing cartel members’ payoffs. These cartels are harmful to everyone, as they reduce welfare, and all influencers would be better off if there were fewer such cartels.

\newpage
\section{Online Appendix: Data Collection} \label{A:DataCollection}
\addcontentsline{appsec}{appendixsection}{\protect\numberline{\thesection}Data Collection}
\subsection{Telegram Cartel History}
We collected Telegram cartel interaction history for 9 cartels: 6 general cartels (1K, 5K, 10K, 30K, 50K, 100K) and 3 topic cartels (fashion \& beauty, health \& fitness, travel \& food). The 9 cartels were formed the earliest in August 2017 (10K and 50K) and the latest in February 2018 (5K). We downloaded the data in July 2020. In June 2020 all cartels had new posts.
The Telegram cartel interaction history consists of three pieces of information: Telegram username, Instagram post shortcode, and time. The interaction history tells us which Telegram user, added when, and which Instagram post to the cartel. According to the cartel rules, this information allows to determine which cartel member has to comment and like which post. This is because one has to comment and like at least five posts by other users directly preceding one’s own.
\subsection{Mapping Telegram Posts to Instagram Users}
The Telegram cartels included 220,893 unique Instagram posts that we were able to map to 21,068 Instagram users. Specifically, the Telegram cartels included 316,462 unique Instagram posts altogether. Some posts are posted multiple times and/or multiple cartels, the 316,462 unique Instagram posts were posted in total 527,498 times. The cartel interaction files don’t include the Instagram username of the author of the Instagram post. We mapped the Instagram posts included in cartels to Instagram users using the following interactive procedure. For the first Instagram post in the cartel of each Telegram user, we searched for the post on Instagram to learn the Instagram username of the post’s author. Then we obtained from CrowdTangle the full list of all the Instagram posts of that Instagram username and matched those to the posts in the cartel. We checked the remaining unmatched posts in the cartel one by one until we either found a match for it on Instagram or determined that the post had been deleted on Instagram or made private. In this way, we were able to determine the Instagram usernames of 70\% of the posts in the cartels, altogether 21,068 Instagram usernames.
Of the 21,068 Instagram users, 22\% of users had posted in both topic and general cartels. Altogether, 11,158 users had posted in topic cartels and 14,566 users in general. This includes 4,656 users who had posted in both. Hence, the total number of users equals 11,158 + 14,566 - 4,656 = 21,068.
From CrowdTangle, for all the 21,068 Instagram users, we obtained the history of all their Instagram posts. This data included the time of the post, and the text of the post including the hashtags. The data was downloaded from August 19 to September 16, 2021.
\subsection{Instagram Comments}
For each Instagram user, for their first post in cartels (no matter which cartel), we collected information on who commented on the post. Our goal was to learn who engaged with the post and compare natural (non-cartel) engagement to that obtained via cartels. Therefore, we did not focus on the comment but instead only on the username that posted the comment. We restricted attention to the commenters on the post itself, and excluded commenters who commented on a comment.
We focused on each user’s first post in cartels to minimize the possibility that involvement in cartels had affected engagement. However, when the first post did not have enough information for the analysis, that is, when for the first post none of the cartel members who were required to comment existed anymore, then we focused on the second (if the second post existed) and so on. We did not require that the cartel members actually commented, only that the users still existed. For 18 users no post existed that satisfied the requirement reducing the sample to 21,050 users. Among the remaining 21,050 users, for 99.8\% (20,999), it was of their actual first posts. To simplify the exposition, in going forward, we call all the first posts satisfying the requirement, simply the first posts.
We used Apify to collect the comments. It allows access to the comments that are available without logging in to Instagram and provides only up to 50 comments for each post. The data included the username of the commenter and the text of the comment. While we don't use the text of the comment in our main analysis, we do analyze it in \Cref{A:TablesFigures}. The comments were downloaded in January 2024.
We were able to collect comments only for 16,630 posts, which is 79.0\% of the 21,050 first posts. We could not collect comments for all the first posts, because these posts either did not have any comments but mostly because we attempted to collect these comments more than two years after collecting posts itself, and in two years these users or their posts were either deleted or made private. Of the posts for which we were able to collect comments, some had no non-cartel comments, leaving us with 16,386 posts. Hence, we were able to find non-cartel comments for 77.8\% of the total 21,050 first posts. For both topic and general cartels the percentage of first posts for which we got non-cartel comments was similar, 78\%.
\subsection{Random Non-Cartel Commenter on Each Cartel Member’s First Cartel Post}
For each cartel member's first post in cartels, we used a random number generator and picked a random non-cartel user who had commented on the post. We picked these random non-cartel users for 16,386 posts. Some of the randomly picked non-cartel commenting users were the same across posts. Hence, we were left with 14,490 unique non-cartel commenting users.
For these 14,490 non-cartel users, we collected their information about their number of public posts. We collected this information using Apify in January 2024. Of these users, 24 didn’t exist anymore. So that we were left with 14,466 non-cartel users. Of those, 3,049 (21.0\%) had no public posts. We had to exclude those user because they had no information we could analyze. So that we were left with 11,417 public non-cartel users. We further limited the sample to non-cartel users who had at least 10 public posts and this restriction reduced the sample to 10,394 non-cartel users.
For these random non-cartel users, we obtained the history of all their Instagram posts from CrowdTangle. We did this to calculate authors' similarity to the non-cartel users who commented on the post. For these non-cartel users, the data was downloaded from CrowdTangle in January 2024. We were able to get the history only for 10,280 non-cartel users (99\%). For the remaining 114 usernames either they had changed the username, made the account private or deleted it.  Furthermore, we learned that 551 (5\%) of the 10,280 non-cartel users were associated with cartel members as they had posted at least one post associated with a cartel member. The association can happen as Instagram allows posts to be associated with multiple users (this is different from tagging a user) or it can happen when users change usernames. We excluded those 551 non-cartel users from our sample, while keeping the corresponding cartel members. That reduced the sample of non-cartel users to 9,729. These 9,729 non-cartel users mapped to 10,683 first posts because, as said above, some of these non-cartel users were commenting on multiple posts.
\subsection{Photos from First Posts}
We collected a photo from a single Instagram post for each user. For cartel members, we use the first post each cartel member posted to the cartels. For non-cartel members, we select their closest post within a symmetric time window to the cartel member’s post they commented on. The photos were downloaded in February and March in 2024. We were able to get the photos for only 16,693 (79\%) cartel members and 9,269 (95\%) non-cartel users. The reason why we did not get photos for 21\% of the cartel members is the same as for the comments, that in the two years the posts were either deleted or made private. Similarly, due the delay, we did not get photos for 5\% of the non-cartel users.
In the end, we were left with 5244 authors whose first post was in general cartels and 3751 authors whose first post was in topic cartels (including 488 authors that appear in both categories). Therefore, the percentage reduction due to data limitations was similar for general cartels and topic cartels. Thus we have no reason to expect that the data collection affected different cartel types differently.

\newpage
\section{Online Appendix: Additional Empirical Results} \label{A:TablesFigures}
\addcontentsline{appsec}{appendixsection}{\protect\numberline{\thesection}Additional Empirical Results}
\setlength{\tabcolsep}{2pt}
\begin{table}[h!]
    \begin{center}
        \begin{footnotesize}
            \caption{Summary statistics of cartel versus non-cartel comments}
            \apptable{Summary statistics of cartel versus non-cartel comments}
            \label{T:SumStatComments}
            \hspace*{-0.5cm}
            \begin{tabular}{lccc ccc ccc}
                \hline
 & (1) & (2) \\  & Commenter not in cartel & Commenter in cartel \\ \cmidrule(lr){2-3}
& \multicolumn{2}{c}{Panel A: General cartels} \\
Comment's length (in words)                                 &    4.574&    4.803\\
Share of negative comments                                  &    0.011&    0.008\\
Comment's similarity to the post's photo                    &    0.221&    0.227\\
Observations (comments)                                     &     4432&    10490\\
 \cmidrule(lr){2-3}
& \multicolumn{2}{c}{Panel B: Topic cartels} \\
Comment's length (in words)                                 &    4.942&    5.274\\
Share of negative comments                                  &    0.010&    0.006\\
Comment's similarity to the post's photo                    &    0.220&    0.228\\
Observations (comments)                                     &     3399&     8463\\
 \hline

            \end{tabular}
        \end{footnotesize}
    \end{center}
    \footnotesize{Notes:
        Negative comments are classified using VADER sentiment analysis model \citep{hutto_vader:_2014}. While only about 1\% of comments are classified as negative, most of these are misclassified as negative due to the use of slang (for example:  “hell of a view”, “killing this look”, “sick style keep it up”) or don't criticize the post (for example: “this is so scary” “I hate when it happens” “that looks so brutal”).
        Comments similarity to the photo in the post is calculated using the CLIP model (see \Cref{S:MeasuringEngagementQuality}).
        Panel A includes posts submitted only to general cartels, and panel B, only to topic cartels.
        According to the t-test, all differences between columns 1 and 2, except for the share of negative comments are statistically significant at 1 percent level.
    }
\end{table}
\setlength{\tabcolsep}{6pt}
\begin{table}[h!]
    \centering
    \begin{footnotesize}
        \caption{Most representative hashtags for each LDA topic}
        \apptable{Most representative hashtags for each LDA topic}
        \label{T:lda_topic_top_words}
        \begin{tabular}{cc p{11cm}}
            \hline
            \hline
Topic number & Topic label & Most representative hashtags \\ \hline \hline
Topic 1&Fitness&\#fitness \#gym \#workout \#motivation \#fitnessmotivation \#fitfam \#fit \#bodybuilding \#health \#training\\ \hline
Topic 2&Beauty&\#art \#music \#makeup \#artist \#photography \#beauty \#makeupartist \#design \#mua \#yoga\\ \hline
Topic 3&Fashion&\#fashion \#ootd \#instagood \#style \#photooftheday \#fashionblogger \#picoftheday \#beautiful \#photography \#beauty\\ \hline
Topic 4&Food&\#foodie \#foodporn \#food \#instafood \#liketkit \#foodphotography \#foodstagram \#foodblogger \#yummy \#delicious\\ \hline
Topic 5&Entrepreneur&\#motivation \#entrepreneur \#success \#business \#inspiration \#luxury \#quotes \#motivationalquotes \#mindset \#entrepreneurship\\ \hline
Topic 6&Travel&\#travel \#travelgram \#travelphotography \#wanderlust \#travelblogger \#nature \#photography \#instatravel \#photooftheday \#beautifuldestinations\\ \hline
\hline

        \end{tabular}
    \end{footnotesize}
\end{table}
\begin{table}[h!]
    \begin{center}
        \begin{footnotesize}
            \caption{Sample construction}
            \apptable{Sample construction}
            \label{T:SampleConstruction}
            \begin{tabular}{l c}
                \hline
 & Number of authors \\ \hline \multicolumn{2}{c}{Panel A: Number of authors in the regression sample} \\
Total number of authors in the cartel                       &    21068\\
.. with cartel commenters                                   &    21050\\
.. and with non-cartel commenters                           &    10683\\
.. and author has embeddings                                &    10171\\
.. and cartel commenter has embeddings                      &    10022\\
.. and non-cartel commenter has embeddings                  &     8507\\
 \hline
 \multicolumn{2}{c}{Panel B: Number of authors in the LDA sample} \\
Total number of authors in the cartel                       &    21068\\
.. with cartel commenters                                   &    21050\\
.. and with non-cartel commenters                           &    10683\\
.. and author has LDA                                       &     8936\\
.. and cartel commenter has LDA                             &     8835\\
.. and non-cartel commenter has LDA                         &     6654\\
 \hline

            \end{tabular}
        \end{footnotesize}
    \end{center}
\end{table}
\setlength{\tabcolsep}{12pt}
\begin{table}[h!]
    \begin{center}
        \begin{footnotesize}
            \caption{Summary statistics of authors in cartel included/excluded from the sample}
            \apptable{Summary statistics of authors in cartel included/excluded from the sample}
            \label{T:SumStatMainSampleVsExcluded}
            \begin{tabular}{l cc}
                \hline
 & (1) & (2) \\ & \multicolumn{2}{c}{Authors included vs excluded from the sample} \\ & Excluded & Included \\ \hline
 & \multicolumn{2}{c}{Panel A: Regression sample} \\ \cmidrule(lr){2-3}
Number of posts per user                                    &    583.0&    735.7\\
Number of posts in cartel per user                          &      9.5&     12.0\\
Number of likes per post                                    &    777.9&    699.8\\
Number of comments per post                                 &     28.8&     29.5\\
Overperforming score                                        &     -5.7&     -4.6\\
\% of disclosed sponsored posts                             &      0.8&      1.0\\
\% of users with disclosed sponsored posts                  &     25.9&     34.4\\
Observations (authors)                                      &    12561&     8507\\
 \hline
 & \multicolumn{2}{c}{Panel B: LDA sample} \\ \cmidrule(lr){2-3}
Number of posts per user                                    &    615.0&    709.0\\
Number of posts in cartel per user                          &     10.0&     11.6\\
Number of likes per post                                    &    790.3&    651.1\\
Number of comments per post                                 &     29.0&     29.2\\
Overperforming score                                        &     -5.6&     -4.5\\
\% of disclosed sponsored posts                             &      0.7&      1.1\\
\% of users with disclosed sponsored posts                  &     26.8&     34.7\\
Observations (authors)                                      &    14414&     6654\\
 \hline

            \end{tabular}
        \end{footnotesize}
    \end{center}
    \footnotesize{Notes:
        Average statistics per post are calculated by first taking averages across the posts for each user, and then calculating averages across users. \textit{Overperforming score} is calculated by CrowdTangle and measures the number of comments and likes relative to the user's previous 100 posts conditional on posts' type and age. Disclosed sponsored posts are identified following \cite{ershov_how_2025} based on disclosure hashtags (\#ad, \#sponsored, \#paidpartnership, \#brandedcontent, \#gifted, \#paid, \#partnership, \#promotion, \#branded, \#sponsoredby, \#paidby).
        According to the t-test, all differences between columns 1 and 2, except for \textit{Number of likes per post} in Panel A, \textit{Number of comments per post} and \textit{Overperforming score} in Panels A and B, are statistically significant at 1 percent level.}
\end{table}
\setlength{\tabcolsep}{6pt}
\begin{table}[h!]
    \begin{center}
        \begin{footnotesize}
            \caption{Summary statistics of authors in different types of cartel}
            \apptable{Summary statistics of authors in different types of cartel}
            \label{T:SumStatGenVsTop}
            \begin{tabular}{l ccc}
                \hline
 & (1) & (2) & (3)\\ & \multicolumn{3}{c}{Authors in cartels} \\ & General & Topic & Both\\ \cmidrule(lr){2-4}
Number of posts per user                                    &    860.8&    570.1&    624.3\\
Number of posts in cartel per user                          &     13.3&     10.0&     12.7\\
Number of words per post                                    &     51.4&     59.3&     58.9\\
Number of hashtags per post                                 &     14.0&     15.0&     14.4\\
Number of likes per post                                    &    829.1&    501.2&    768.2\\
Number of comments per post                                 &     32.1&     25.4&     31.0\\
Overperforming score                                        &     -6.6&     -1.8&     -3.7\\
\% of disclosed sponsored posts                             &      1.2&      0.8&      0.6\\
\% of users with disclosed sponsored posts                  &     38.2&     29.3&     30.9\\
Observations (authors)                                      &     4756&     3263&      488\\
 \hline

            \end{tabular}
        \end{footnotesize}
    \end{center}
    \footnotesize{Notes:
        The table includes authors in the main regression sample.
        Average statistics per post are calculated by first taking averages across the posts for each user, and then calculating averages across users. \textit{Overperforming score} is calculated by CrowdTangle and measures the number of comments and likes relative to the user's previous 100 posts conditional on posts' type and age. Disclosed sponsored posts are identified following \cite{ershov_how_2025} based on disclosure hashtags (\#ad, \#sponsored, \#paidpartnership, \#brandedcontent, \#gifted, \#paid, \#partnership, \#promotion, \#branded, \#sponsoredby, \#paidby). The sample includes all posts for each user, except for word and hashtag counts, which are calculated using the data for the main regression analysis (as described in \Cref{S:MeasuringEngagementQuality}): word counts use a single post per user, and hashtag counts use 100 posts per user.
        According to the t-test, all differences between columns 1 and 2 are statistically significant at 1 percent level. }
\end{table}
\begin{table}[h!]
    \begin{center}
        \begin{footnotesize}
            \caption{Summary statistics of users in cartels versus not in cartels}
            \apptable{Summary statistics of users in cartels versus not in cartels}
            \label{T:SumStatCartelVsNot}
            \begin{tabular}{lcc}
                \hline
 & (1) & (2) \\  & User not in cartel & User in cartel \\  \cmidrule(lr){2-3}
Number of posts per user                                    &    769.1&    730.9\\
Number of words per post                                    &     48.9&     53.8\\
Number of hashtags per post                                 &     13.7&     14.4\\
Number of likes per post                                    &    537.3&    730.8\\
Number of comments per post                                 &     31.1&     29.2\\
Overperforming score                                        &    -10.3&     -6.1\\
\% of disclosed sponsored posts                             &      0.8&      1.0\\
\% of users with disclosed sponsored posts                  &     28.6&     33.7\\
Observations (users)                                        &     7847&    12088\\
 \hline

            \end{tabular}
        \end{footnotesize}
    \end{center}
    \footnotesize{Notes:
        The table includes users (authors and commenters) in the main regression sample.
        Average statistics per post are calculated by first taking averages across the posts for each user, and then calculating averages across users. \textit{Overperforming score} is calculated by CrowdTangle and measures the number of comments and likes relative to the user's previous 100 posts conditional on posts' type and age. Disclosed sponsored posts are identified following \cite{ershov_how_2025} based on disclosure hashtags (\#ad, \#sponsored, \#paidpartnership, \#brandedcontent, \#gifted, \#paid, \#partnership, \#promotion, \#branded, \#sponsoredby, \#paidby). The sample includes all posts for each user, except for word and hashtag counts, which are calculated using the data for the main regression analysis (as described in \Cref{S:MeasuringEngagementQuality}): word counts use a single post per user, and hashtag counts use 100 posts per user.
        According to the t-test, all differences, except for \textit{Number of comments per post}, are statistically significant at 5 percent level.}
\end{table}
\begin{table}[h!]
    \begin{center}
        \begin{footnotesize}
            \caption{Summary statistics of cartel members before and after joining the cartel}
            \apptable{Summary statistics of cartel members before and after joining the cartel}
            \label{T:SumStatCartelBeforeVsAfterJoining}
            \begin{tabular}{lcc}
                \hline
 & (1) & (2) \\  & Before & After \\ & \multicolumn{2}{c}{first joining a cartel} \\  \cmidrule(lr){2-3}
Number of likes per post                                    &    501.6&    960.2\\
Number of comments per post                                 &     21.0&     43.0\\
Overperforming score                                        &     -8.4&      0.5\\
\% of disclosed sponsored posts                             &      0.6&      1.7\\
\% of users with disclosed sponsored posts                  &     21.9&     28.0\\
Observations (number of cartel members)                     &     8447&     8447\\
 \hline

            \end{tabular}
        \end{footnotesize}
    \end{center}
    \footnotesize{Notes:
        The table includes authors in the main regression sample who have posts both before and after joining the cartel.
        Average statistics per post are calculated by first taking averages across the posts for each user before and after joining the cartel, and then calculating averages across users. \textit{Overperforming score} is calculated by CrowdTangle and measures the number of comments and likes relative to the user's previous 100 posts conditional on posts' type and age. Disclosed sponsored posts are identified following \cite{ershov_how_2025} based on disclosure hashtags (\#ad, \#sponsored, \#paidpartnership, \#brandedcontent, \#gifted, \#paid, \#partnership, \#promotion, \#branded, \#sponsoredby, \#paidby).
        According to the t-test, all differences are statistically significant at 1 percent level.}
\end{table}
\clearpage
\begin{figure}[!ht]
    \begin{center}
        \includegraphics[width=0.5\textwidth]{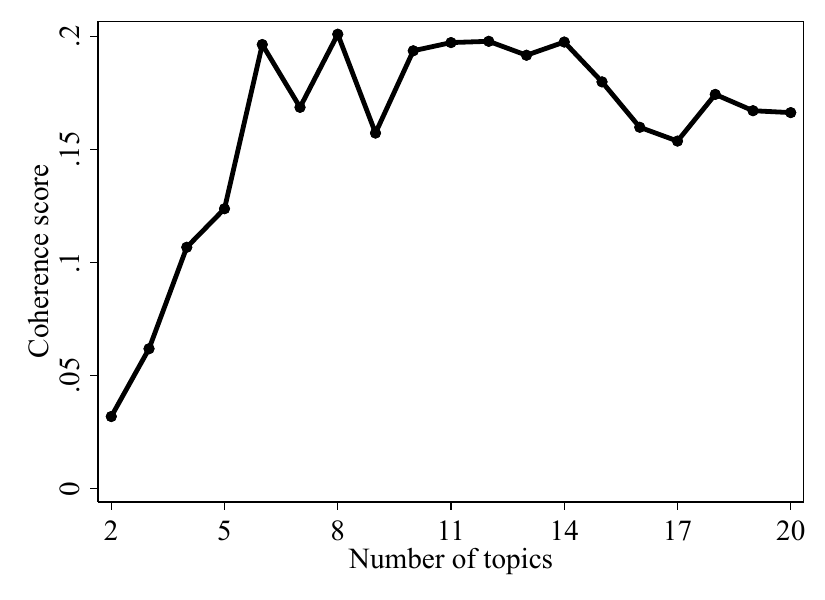}
        \caption{LDA model coherence scores by the number of topics}
        \appfigure{LDA model coherence scores by the number of topics}
        \label{F:LDA_coherence_scores}
    \end{center}
    \footnotesize{Notes: The figure presents the Normalized Pointwise Mutual Information (NPMI) coherence score (y-axes) for each LDA model, comparing models with the number of topics (x-axes) ranging from two to twenty. The score is calculated for the top ten words in each topic with a window size of five words.
    }
\end{figure}
\begin{figure}[h!]
    \begin{center}
        \begin{subfigure}[t]{0.4\textwidth}
            \includegraphics[width=1\textwidth]{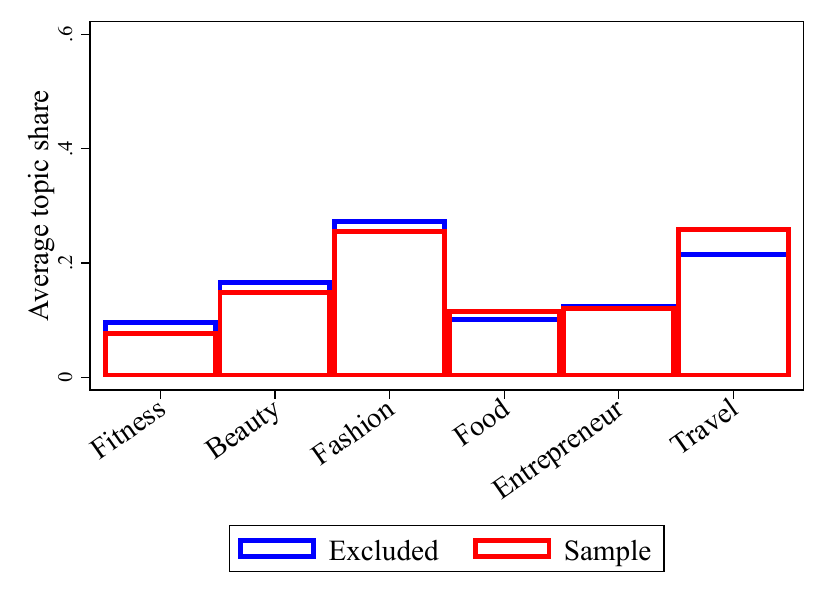}
            \caption{General cartels}
        \end{subfigure}
        \hspace{0.05\textwidth}
        \begin{subfigure}[t]{0.4\textwidth}
            \includegraphics[width=1\textwidth]{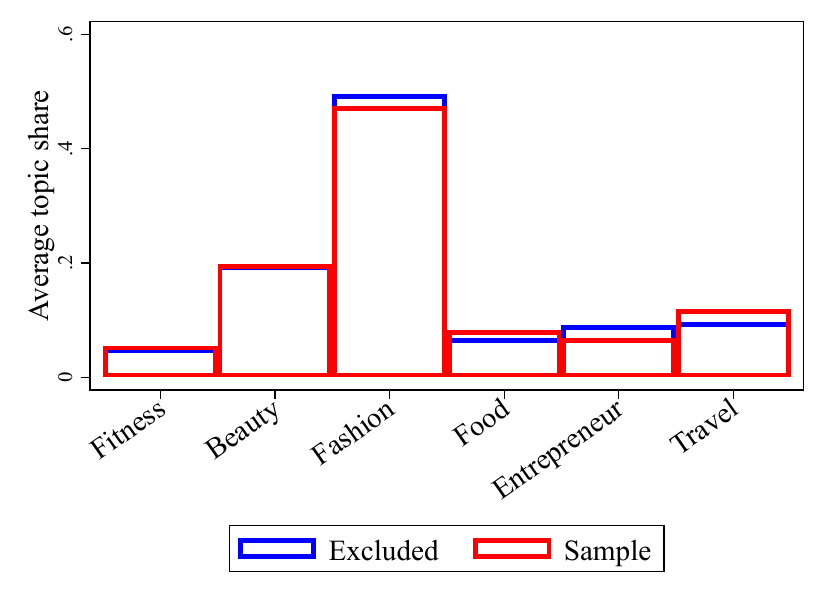}
            \caption{Fashion \& beauty cartel}
        \end{subfigure}
        \\
        \begin{subfigure}[t]{0.4\textwidth}
            \includegraphics[width=1\textwidth]{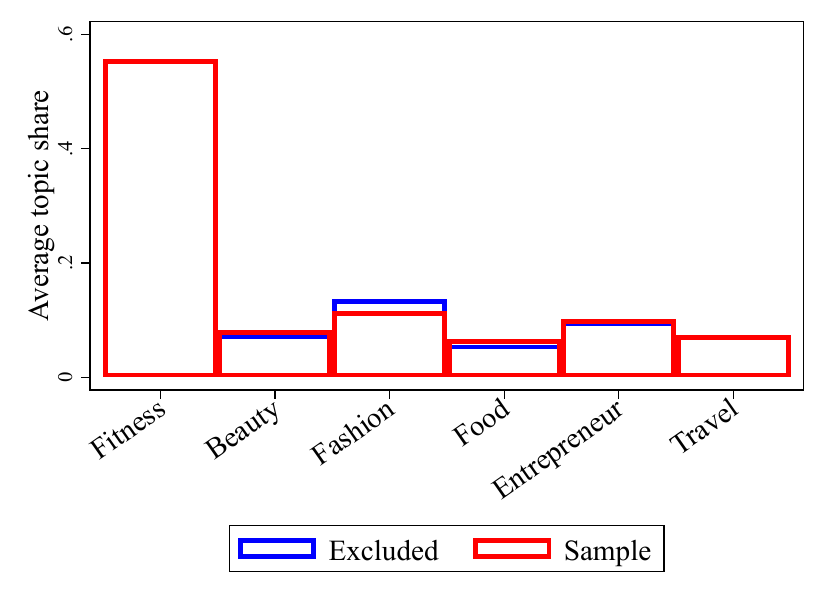}
            \caption{Fitness \& health cartel}
        \end{subfigure}
        \hspace{0.05\textwidth}
        \begin{subfigure}[t]{0.4\textwidth}
            \includegraphics[width=1\textwidth]{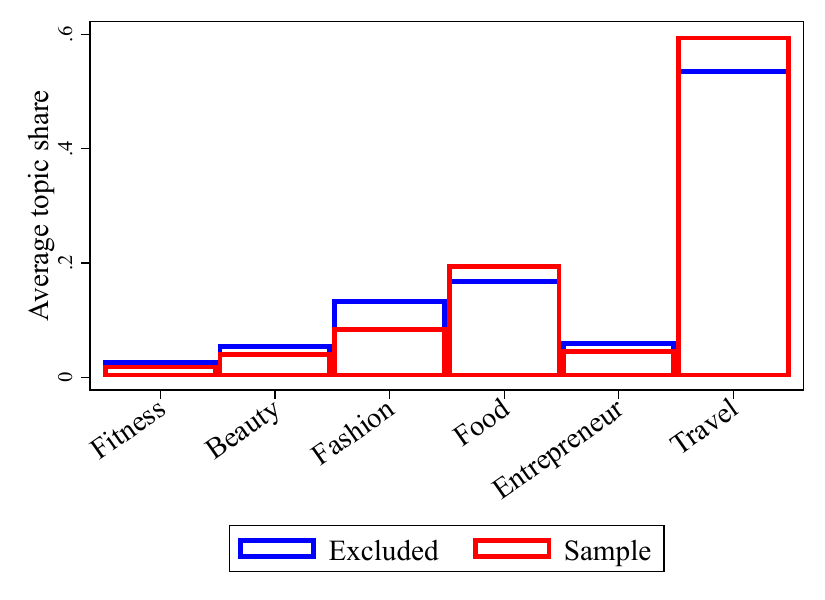}
            \caption{Travel \& food cartel}
        \end{subfigure}
        \caption{Authors' topics: regression sample versus excluded}
        \appfigure{Authors' topics: regression sample versus excluded}
        \label{F:LDA_pod_authors_excluded_sample_regr}
    \end{center}
\end{figure}
\begin{figure}[h!]
    \begin{center}
        \begin{subfigure}[t]{0.4\textwidth}
            \includegraphics[width=1\textwidth]{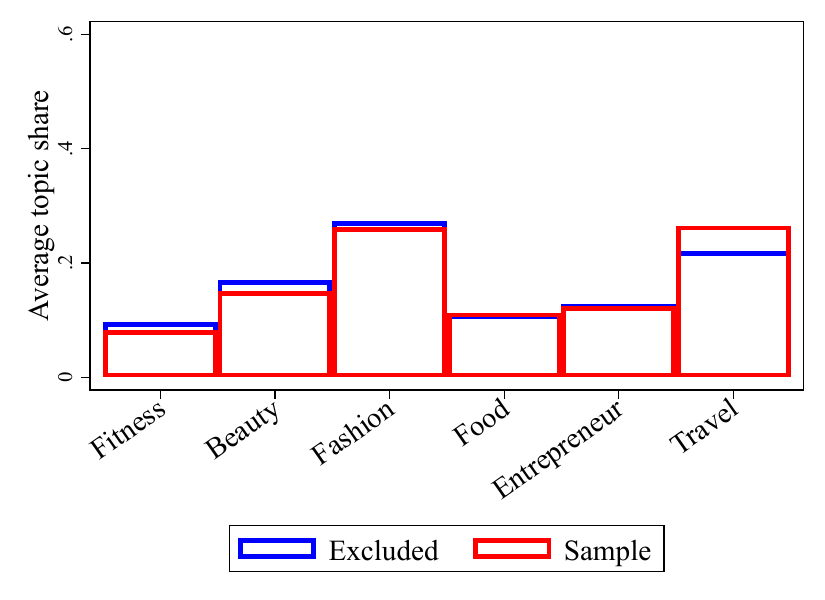}
            \caption{General cartels}
        \end{subfigure}
        \hspace{0.05\textwidth}
        \begin{subfigure}[t]{0.4\textwidth}
            \includegraphics[width=1\textwidth]{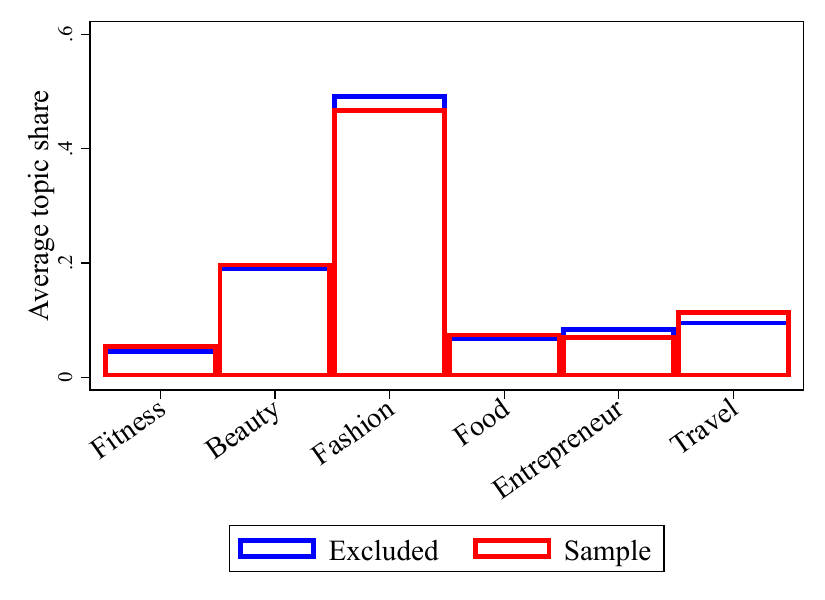}
            \caption{Fashion \& beauty cartel}
        \end{subfigure}
        \\
        \begin{subfigure}[t]{0.4\textwidth}
            \includegraphics[width=1\textwidth]{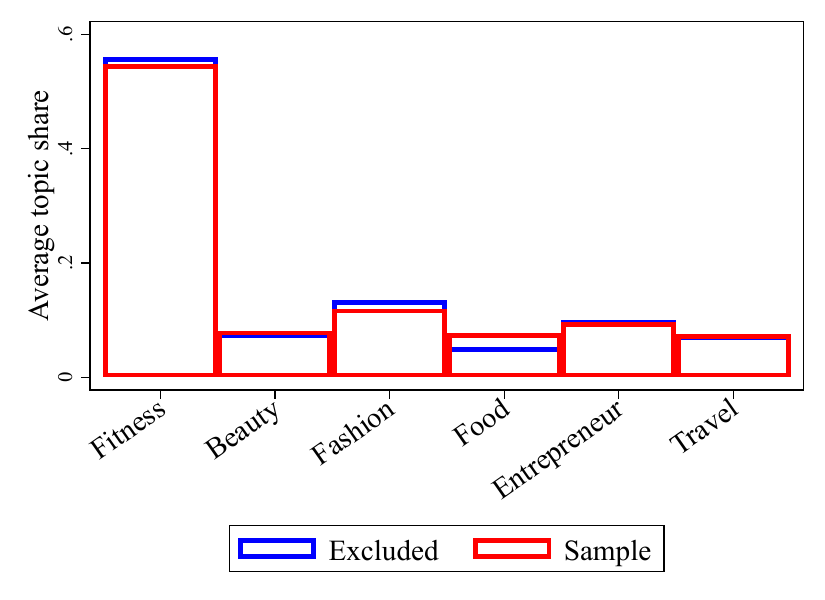}
            \caption{Fitness \& health cartel}
        \end{subfigure}
        \hspace{0.05\textwidth}
        \begin{subfigure}[t]{0.4\textwidth}
            \includegraphics[width=1\textwidth]{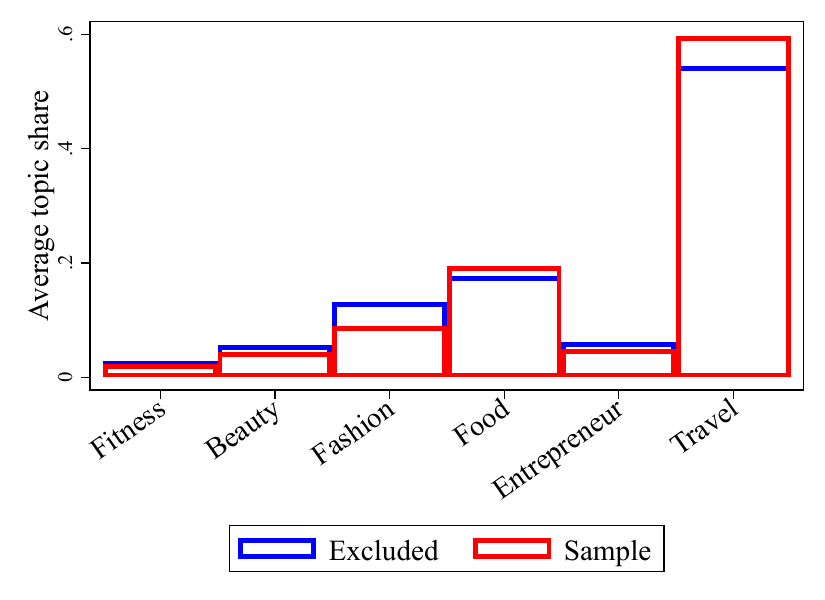}
            \caption{Travel \& food cartel}
        \end{subfigure}
        \caption{Authors' topics: LDA sample versus excluded}
        \appfigure{Authors' topics: LDA sample versus excluded}
        \label{F:LDA_pod_authors_excluded_sample_LDA}
    \end{center}
\end{figure}
\clearpage
\begin{figure}[h!]
    \begin{center}
        \begin{subfigure}[t]{0.47\textwidth}
            \includegraphics[width=1\textwidth]{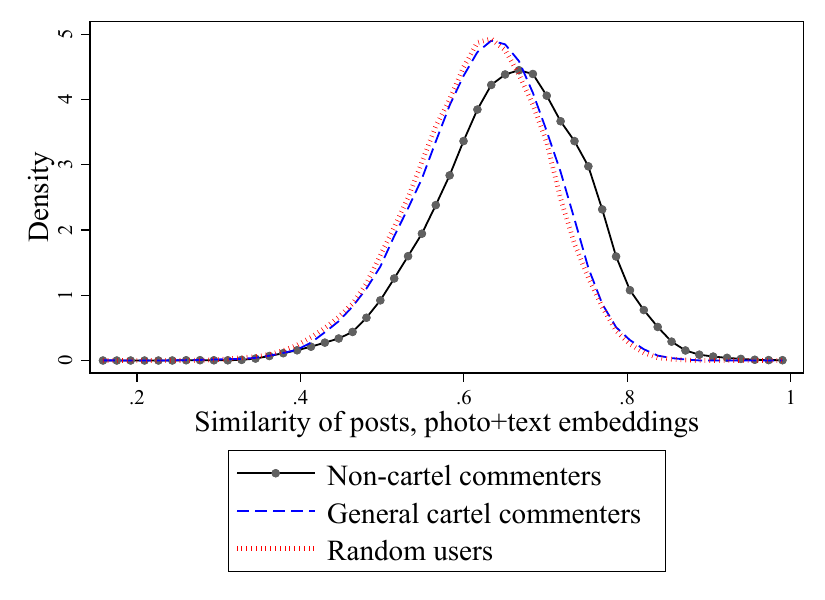}
            \caption{General cartels, posts' similarity}\label{F:Similarity_cs_clip_tp_gen}
        \end{subfigure}
        \hfill
        \begin{subfigure}[t]{0.47\textwidth}
            \includegraphics[width=1\textwidth]{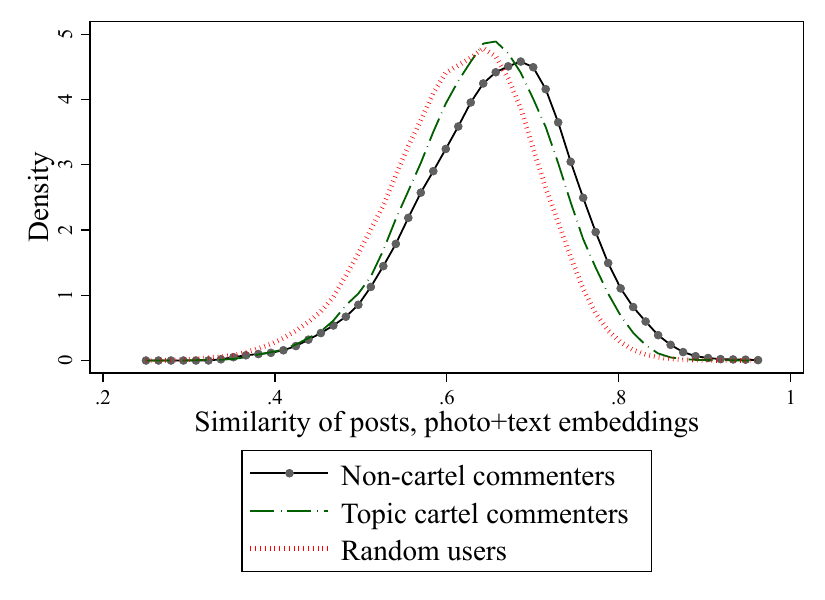}
            \caption{Topic cartels, posts' similarity}\label{F:Similarity_cs_clip_tp_top}
        \end{subfigure}
        \caption{Probability density of authors' similarity to commenters and random users}
        \appfigure{Probability density of authors' similarity to commenters and random users}
        \label{F:SimilarityClip}
    \end{center}
    \footnotesize{Notes:
        The figures present kernel density estimates using the Epanechnikov kernel function of authors' cosine similarity to non-cartel commenters (grey line with solid circle markers), to random users (red dotted line), to general cartel commenters (blue dashed line on \Cref{F:Similarity_cs_clip_tp_gen}), and to topic cartel commenters (green dashed and dotted line on \Cref{F:Similarity_cs_clip_tp_top}). The cosine similarity is calculated as the similarity of posts using the photo and text embeddings from the CLIP model.
    }
\end{figure}

\clearpage
\section{Online Appendix: Robustness Analysis} \label{A:EmpiricalRobustness}
\addcontentsline{appsec}{appendixsection}{\protect\numberline{\thesection}Robustness Analysis}
This Online Appendix describes that the regression results are robust to alternative ways to construct outcome variables and alternative samples.
\Cref{T:RegrRobust_labse_random_monthly,T:RegrRobust_labse_text,T:RegrRobust_clip_photos_text}  presents results where first, text embeddings (outcome variables in columns 1--3 in \Cref{T:RegrMain}) are constructed based on (A) random sample of posts; (B) all posts in 2017--2020; (C) the whole text instead of hashtags; second, embeddings from the CLIP model (outcome variables in columns 4--6 of \Cref{T:RegrMain}) are constructed based on (A) only the photos, (B) only the text. To alleviate the concern that, instead of the average match quality, advertisers care about sufficiently high matches, in \Cref{T:RegrRobust_similarity_above_threshold}, outcome variables are indicators for whether cosine similarity is above the 75th or 90th percentile. It also presents the results where the sample includes only the cartel commenters for whom we observe that they actually commented (\Cref{T:RegrRobust_actuallycommented}). In all these specifications, the estimates remain similar to the main results. When looking at each of the topic cartels separately, we find that cartel commenters from the fitness \& health cartel are the most similar to natural engagement (\Cref{T:RegrHeterogeneity}).
In our main analysis, the quality of the additional engagement that cartels bring is proxied by the similarity between the author and the cartel commenter. This is good proxy as for example, the topic analysis shows (\Cref{F:LDA_CartelVsNatural}) that cartel members natural commenters are interested in the same topic
as the cartel member irrespective of whether it is a topic or general cartel. However, as
a robustness check, in
\Cref{T:RegrRobust_cartel_commenters_noncartelcommenter,T:RegrRobust_cartel_commenters_noncartelcommenter_largeboth,T:RegrRobust_cartel_commenters_noncartelcommenter_heter},
the outcome variable is the similarity of the author and the commenting cartel member's non-cartel commenter. Adding this one additional layer of distance increases noise, and therefore, similarity measures are smaller, but results remain qualitatively similar.
\begin{table}[h!]
    \begin{center}
        \begin{footnotesize}
            \caption{Alternative outcome variables: random posts; all posts in 2017–2020}
            \apptable{Alternative outcome variables: random posts; all posts in 2017–2020}
            \label{T:RegrRobust_labse_random_monthly}
            \begin{tabular}{lcccccc}
                \hline
 & (1) & (2) & (3) & (4) & (5) & (6) \\  & \multicolumn{6}{c}{Dependent variable: Cosine similarity from text embeddings} \\ \hline
 & \multicolumn{6}{c}{Posts in general or topic cartels} \\ & General & Topic & Both & General & Topic & Both \\ \hline &  \multicolumn{3}{c}{Random 100 posts} & \multicolumn{3}{c}{All posts 2017-2020 } \\ \cmidrule(lr){2-4}\cmidrule(lr){5-7}
General cartel commenter&      -0.044***&               &      -0.044***&      -0.028***&               &      -0.038***\\
                    &     (0.002)   &               &     (0.007)   &     (0.002)   &               &     (0.007)   \\
Topic cartel commenter&               &      -0.012***&       0.006   &               &      -0.000   &       0.010   \\
                    &               &     (0.003)   &     (0.007)   &               &     (0.003)   &     (0.007)   \\
Random user         &      -0.073***&      -0.075***&      -0.062***&      -0.066***&      -0.068***&      -0.062***\\
                    &     (0.002)   &     (0.003)   &     (0.007)   &     (0.002)   &     (0.003)   &     (0.007)   \\
Wald test, $\beta_{Gen} = \beta_{Top}$, p-value&               &               &       0.000   &               &               &       0.000   \\
Base (non-cartel) mean&       0.626   &       0.630   &       0.623   &       0.651   &       0.656   &       0.647   \\
Author fixed effects&         Yes   &         Yes   &         Yes   &         Yes   &         Yes   &         Yes   \\
Authors             &        4749   &        3261   &         487   &        4750   &        3261   &         488   \\
Observations        &       44752   &       30459   &        6630   &       43650   &       29745   &        6546   \\
\hline

            \end{tabular}
        \end{footnotesize}
    \end{center}
    \footnotesize{Notes:
        Each column presents estimates from a separate panel data fixed effects regression. Unit of observation is an author and another user pair. Outcome variable is the cosine similarity of the author to his commenter or to a random user.
        In columns 1--3, the outcome variable is calculated based on the hashtags in user's 100 randomly chosen posts; in columns 4--6, it is calculated based on hashtags in all posts in 2017--2020 (for further details see \Cref{S:LaBSE}). Each regression includes author fixed effects (equivalent to the post fixed effects because only one post per author). In all the regressions, the base category is the author's similarity to a non-cartel commenter; and \textit{Base (non-cartel) mean} presents their average cosine similarity. \textit{General cartel commenter} is an indicator variable whether the commenter to whom the author's cosine similarity is calculated, is in the general cartel, and \textit{Topic cartel commenter} whether he is in the topic cartel. \textit{Random user} indicates that the author's similarity is calculated to a counterfactual random non-cartel user. Standard errors in parentheses are clustered at the author level.
    }
\end{table}
\setlength{\tabcolsep}{10pt}
\begin{table}[h!]
    \begin{center}
        \begin{footnotesize}
            \caption{Alternative outcome variable: whole text instead of hashtags}
            \apptable{Alternative outcome variable: whole text instead of hashtags}
            \label{T:RegrRobust_labse_text}
            \begin{tabular}{lccc}
                \hline
 & (1) & (2) & (3) \\  \multicolumn{4}{c}{Dependent variable: Cosine similarity from text embeddings of full text} \\ \hline
 & \multicolumn{3}{c}{Posts in general or topic cartels} \\ & General & Topic & Both  \\ \hline
General cartel commenter&      -0.043***&               &      -0.042***\\
                    &     (0.002)   &               &     (0.006)   \\
Topic cartel commenter&               &      -0.014***&      -0.003   \\
                    &               &     (0.003)   &     (0.006)   \\
Random user         &      -0.073***&      -0.079***&      -0.063***\\
                    &     (0.002)   &     (0.003)   &     (0.006)   \\
Wald test, $\beta_{Gen} = \beta_{Top}$, p-value&               &               &       0.000   \\
Base (non-cartel) mean&       0.659   &       0.665   &       0.662   \\
Author fixed effects&         Yes   &         Yes   &         Yes   \\
Authors             &        4756   &        3263   &         488   \\
Observations        &       44900   &       30569   &        6665   \\
\hline

            \end{tabular}
        \end{footnotesize}
    \end{center}
    \footnotesize{Notes:
        Each column presents estimates from a separate panel data fixed effects regression. Unit of observation is an author and another user pair. Outcome variable is the cosine similarity of the author to his commenter or to a random user. It is calculated using the whole text instead of only hashtags (for further details see \Cref{S:LaBSE}). Each regression includes author fixed effects (equivalent to the post fixed effects because only one post per author). In all the regressions, the base category is the author's similarity to a non-cartel commenter; and \textit{Base (non-cartel) mean} presents their average cosine similarity. \textit{General cartel commenter} is an indicator variable whether the commenter to whom the author's cosine similarity is calculated, is in the general cartel, and \textit{Topic cartel commenter} whether he is in the topic cartel. \textit{Random user} indicates that the author's similarity is calculated to a counterfactual random non-cartel user. Standard errors in parentheses are clustered at the author level.
    }
\end{table}
\setlength{\tabcolsep}{6pt}
\begin{table}[h!]
    \begin{center}
        \begin{footnotesize}
            \caption{Alternative outcome variables: photo embeddings, text embeddings}
            \apptable{Alternative outcome variables: photo embeddings, text embeddings}
            \label{T:RegrRobust_clip_photos_text}
            \begin{tabular}{lcccccc}
                \hline
 & (1) & (2) & (3) & (4) & (5) & (6) \\  & \multicolumn{6}{c}{Dependent variable: Cosine similarity of posts} \\ \hline
 & \multicolumn{6}{c}{Posts in general or topic cartels} \\ & General & Topic & Both & General & Topic & Both \\ \hline &  \multicolumn{3}{c}{Photos embeddings} & \multicolumn{3}{c}{Text embeddings} \\ \cmidrule(lr){2-4}\cmidrule(lr){5-7}
General cartel commenter&      -0.033***&               &      -0.037***&      -0.019***&               &      -0.021***\\
                    &     (0.002)   &               &     (0.005)   &     (0.001)   &               &     (0.004)   \\
Topic cartel commenter&               &      -0.008***&      -0.004   &               &      -0.009***&      -0.008** \\
                    &               &     (0.002)   &     (0.005)   &               &     (0.002)   &     (0.004)   \\
Random user         &      -0.051***&      -0.053***&      -0.044***&      -0.016***&      -0.018***&      -0.022***\\
                    &     (0.002)   &     (0.002)   &     (0.005)   &     (0.001)   &     (0.002)   &     (0.004)   \\
Wald test, $\beta_{Gen} = \beta_{Top}$, p-value&               &               &       0.000   &               &               &       0.000   \\
Base (non-cartel) mean&       0.487   &       0.493   &       0.491   &       0.777   &       0.780   &       0.784   \\
Author fixed effects&         Yes   &         Yes   &         Yes   &         Yes   &         Yes   &         Yes   \\
Authors             &        4756   &        3263   &         488   &        4756   &        3263   &         488   \\
Observations        &       44900   &       30569   &        6665   &       44900   &       30569   &        6665   \\
\hline

            \end{tabular}
        \end{footnotesize}
    \end{center}
    \footnotesize{Notes:
        Each column presents estimates from a separate panel data fixed effects regression. Unit of observation is an author and another user pair. Outcome variable is the cosine similarity of the author to his commenter or to a random user. It is calculated using embeddings from the CLIP model of either only photos (columns 1--3), or only text (columns 4--6). Each regression includes author fixed effects (equivalent to the post fixed effects because only one post per author). In all the regressions, the base category is the author's similarity to a non-cartel commenter; and \textit{Base (non-cartel) mean} presents their average cosine similarity. \textit{General cartel commenter} is an indicator variable whether the commenter to whom the author's cosine similarity is calculated, is in the general cartel, and \textit{Topic cartel commenter} whether he is in the topic cartel. \textit{Random user} indicates that the author's similarity is calculated to a counterfactual random non-cartel user. Standard errors in parentheses are clustered at the author level.
    }
\end{table}
\begin{table}[h!]
    \begin{center}
        \begin{footnotesize}
            \caption{Alternative outcome variables: indicator for cosine similarity above the 75th or 90th percentile}
            \apptable{Alternative outcome variables: similarity above 75th, 90th percentile}
            \label{T:RegrRobust_similarity_above_threshold}
            \begin{tabular}{lcccccc}
                \hline
 & (1) & (2) & (3) & (4) & (5) & (6) \\
 & \multicolumn{6}{c}{Posts in general or topic cartels} \\ & General & Topic & Both & General & Topic & Both \\ \hline &  \multicolumn{3}{c}{Similarity of users} & \multicolumn{3}{c}{Similarity of posts} \\ &  \multicolumn{3}{c}{Text embeddings} & \multicolumn{3}{c}{Photo+text embeddings} \\ \cmidrule(lr){2-4}\cmidrule(lr){5-7} & \multicolumn{6}{c}{Panel A: Dep. var: Cosine similarity above the 75th percentile} \\
General cartel commenter&      -0.168***&               &      -0.151***&      -0.157***&               &      -0.154***\\
                    &     (0.008)   &               &     (0.021)   &     (0.007)   &               &     (0.019)   \\
Topic cartel commenter&               &      -0.062***&       0.003   &               &      -0.087***&      -0.062***\\
                    &               &     (0.009)   &     (0.021)   &               &     (0.009)   &     (0.020)   \\
Random user         &      -0.201***&      -0.214***&      -0.156***&      -0.180***&      -0.185***&      -0.177***\\
                    &     (0.007)   &     (0.009)   &     (0.021)   &     (0.007)   &     (0.008)   &     (0.019)   \\
Wald test, $\beta_{Gen} = \beta_{Top}$, p-value&               &               &       0.000   &               &               &       0.000   \\
Base (non-cartel) mean&       0.393   &       0.419   &       0.383   &       0.391   &       0.396   &       0.389   \\
Author fixed effects&         Yes   &         Yes   &         Yes   &         Yes   &         Yes   &         Yes   \\
Authors             &        4756   &        3263   &         488   &        4756   &        3263   &         488   \\
Observations        &       44900   &       30569   &        6665   &       44900   &       30569   &        6665   \\
\cmidrule(lr){2-7}
 & \multicolumn{6}{c}{Panel B: Dep. var: Cosine similarity above the 90th percentile} \\
General cartel commenter&      -0.136***&               &      -0.146***&      -0.131***&               &      -0.124***\\
                    &     (0.006)   &               &     (0.018)   &     (0.006)   &               &     (0.017)   \\
Topic cartel commenter&               &      -0.071***&      -0.058***&               &      -0.061***&      -0.059***\\
                    &               &     (0.008)   &     (0.018)   &               &     (0.007)   &     (0.017)   \\
Random user         &      -0.158***&      -0.176***&      -0.149***&      -0.142***&      -0.122***&      -0.126***\\
                    &     (0.006)   &     (0.008)   &     (0.018)   &     (0.006)   &     (0.007)   &     (0.017)   \\
Wald test, $\beta_{Gen} = \beta_{Top}$, p-value&               &               &       0.000   &               &               &       0.000   \\
Base (non-cartel) mean&       0.216   &       0.243   &       0.236   &       0.214   &       0.198   &       0.223   \\
Author fixed effects&         Yes   &         Yes   &         Yes   &         Yes   &         Yes   &         Yes   \\
Authors             &        4756   &        3263   &         488   &        4756   &        3263   &         488   \\
Observations        &       44900   &       30569   &        6665   &       44900   &       30569   &        6665   \\
\hline

            \end{tabular}
        \end{footnotesize}
    \end{center}
    \footnotesize{Notes:
        Each column presents estimates from a separate panel data fixed effects regression. Unit of observation is an author and another user pair. Outcome variable is an indicator whether cosine similarity of the author to his commenter or to a random user is above the 75th or the 90th percentile. The percentiles are calculated using all author and user pairs (cartel, non-cartel, and random users) and all cartels. In columns 1--3, the cosine similarity of users is calculated using the text embeddings from the LaBSE model; in columns 4--6, the cosine similarity of the corresponding users' posts is calculated using the photo and text embeddings from the CLIP model. Each regression includes author fixed effects (equivalent to the post fixed effects because only one post per author). In all the regressions, the base category is the author's similarity to a non-cartel commenter; and \textit{Base (non-cartel) mean} presents their average cosine similarity. \textit{General cartel commenter} is an indicator variable whether the commenter to whom the author's cosine similarity is calculated, is in the general cartel, and \textit{Topic cartel commenter} whether he is in the topic cartel. \textit{Random user} indicates that the author's similarity is calculated to a counterfactual random non-cartel user. Standard errors in parentheses are clustered at the author level.
    }
\end{table}
\begin{table}[h!]
    \begin{center}
        \begin{footnotesize}
            \caption{Alternative sample: commenters who actually commented}
            \apptable{Alternative sample: commenters who actually commented}
            \label{T:RegrRobust_actuallycommented}
            \begin{tabular}{lcccccc}
                \hline
 & (1) & (2) & (3) & (4) & (5) & (6) \\  & \multicolumn{6}{c}{Dependent variable: Cosine similarity} \\ \hline
 & \multicolumn{6}{c}{Posts in general or topic cartels} \\ & General & Topic & Both & General & Topic & Both \\ \hline &  \multicolumn{3}{c}{Similarity of users} & \multicolumn{3}{c}{Similarity of posts} \\ &  \multicolumn{3}{c}{Text embeddings} & \multicolumn{3}{c}{Photo+text embeddings} \\ \cmidrule(lr){2-4}\cmidrule(lr){5-7}
General cartel commenter&      -0.055***&               &      -0.062***&      -0.033***&               &      -0.032***\\
                    &     (0.003)   &               &     (0.011)   &     (0.002)   &               &     (0.005)   \\
Topic cartel commenter&               &      -0.020***&       0.005   &               &      -0.014***&      -0.011** \\
                    &               &     (0.004)   &     (0.011)   &               &     (0.002)   &     (0.005)   \\
Random user         &      -0.067***&      -0.074***&      -0.056***&      -0.037***&      -0.040***&      -0.035***\\
                    &     (0.003)   &     (0.003)   &     (0.010)   &     (0.001)   &     (0.002)   &     (0.004)   \\
Wald test, $\beta_{Gen} = \beta_{Top}$, p-value&               &               &       0.000   &               &               &       0.000   \\
Base (non-cartel) mean&       0.566   &       0.581   &       0.573   &       0.652   &       0.655   &       0.656   \\
Author fixed effects&         Yes   &         Yes   &         Yes   &         Yes   &         Yes   &         Yes   \\
Authors             &        3361   &        2635   &         281   &        3361   &        2635   &         281   \\
Observations        &       27379   &       21797   &        2961   &       27379   &       21797   &        2961   \\
\hline

            \end{tabular}
        \end{footnotesize}
    \end{center}
    \footnotesize{Notes:
        The table presents estimates from the same regressions as in \Cref{T:RegrMain} except the sample includes only these cartel commenters who actually commented. Each column presents estimates from a separate panel data fixed effects regression. Unit of observation is an author and another user pair. Outcome variable is the cosine similarity of the author to his commenter or to a random user. Each regression includes author fixed effects (equivalent to the post fixed effects because only one post per author). In all the regressions, the base category is the author's similarity to a non-cartel commenter; and \textit{Base (non-cartel) mean} presents their average cosine similarity. \textit{General cartel commenter} is an indicator variable whether the commenter to whom the author's cosine similarity is calculated, is in the general cartel, and \textit{Topic cartel commenter} whether he is in the topic cartel. \textit{Random user} indicates that the author's similarity is calculated to a counterfactual random non-cartel user. Standard errors in parentheses are clustered at the author level.
    }
\end{table}
\begin{table}[h!]
    \begin{center}
        \begin{footnotesize}
            \caption{Heterogeneity by topic cartel}
            \apptable{Heterogeneity by topic cartel}
            \label{T:RegrHeterogeneity}
            \hspace*{-0.5cm}
            \begin{tabular}{lcccccc}
                \hline
 & \multicolumn{6}{c}{Posts in topic cartels, by the topic} \\ & Fashion  & Fitness & Travel & Fashion & Fitness & Travel \\ & \& beauty & \& health & \& food & \& beauty & \& health & \& food \\ \hline &  \multicolumn{3}{c}{Similarity of users} & \multicolumn{3}{c}{Similarity of posts} \\ &  \multicolumn{3}{c}{Text embeddings} & \multicolumn{3}{c}{Text+photo embeddings} \\ \cmidrule(lr){2-4}\cmidrule(lr){5-7}
Topic cartel commenter&      -0.014** &       0.005   &      -0.033***&      -0.011***&      -0.006   &      -0.018***\\
                    &     (0.007)   &     (0.008)   &     (0.004)   &     (0.003)   &     (0.004)   &     (0.002)   \\
Random user         &      -0.059***&      -0.081***&      -0.081***&      -0.038***&      -0.050***&      -0.038***\\
                    &     (0.006)   &     (0.008)   &     (0.004)   &     (0.003)   &     (0.004)   &     (0.002)   \\
Base (non-cartel) mean&       0.572   &       0.569   &       0.596   &       0.662   &       0.666   &       0.652   \\
Author fixed effects&         Yes   &         Yes   &         Yes   &         Yes   &         Yes   &         Yes   \\
Authors             &         728   &         573   &        1881   &         728   &         573   &        1881   \\
Observations        &        6386   &        4953   &       18184   &        6386   &        4953   &       18184   \\
\hline

            \end{tabular}
        \end{footnotesize}
    \end{center}
    \footnotesize{Notes:
        The table presents estimates from the same regressions as in \Cref{T:RegrMain} except the sample is a subset of the sample in columns 2 and 5 of \Cref{T:RegrMain}. Specifically, the sample consists of authors whose first cartel post is either: only in Fashion and beauty cartel (columns 1 and 4), only in Fitness and health cartel (columns 2 and 5), or only in Travel and food cartel (columns 3 and 6). Note that those authors whose post is in multiple cartels are excluded. Each column presents estimates from a separate panel data fixed effects regression. Unit of observation is an author and another user pair. Outcome variable is the cosine similarity of the author to his commenter or to a random user. In columns 1--3, the cosine similarity of users is calculated using the text embeddings from the LaBSE model; in columns 4--6, the cosine similarity of the corresponding users’ posts is calculated using the photo and text embeddings from the CLIP model. Each regression includes author fixed effects (equivalent to the post fixed effects because only one post per author). In all the regressions, the base category is the author's similarity to a non-cartel commenter; and \textit{Base (non-cartel) mean} presents their average cosine similarity. \textit{Topic cartel commenter} is an indicator variable whether the commenter to whom the author's cosine similarity is calculated, is in the topic cartel. \textit{Random user} indicates that the author's similarity is calculated to a counterfactual random non-cartel user. Standard errors in parentheses are clustered at the author level.
    }
\end{table}
\begin{table}[h!]
    \begin{center}
        \begin{footnotesize}
            \caption{Cartel commenters' non-cartel commenters}
            \apptable{Cartel commenters' non-cartel commenters}
            \label{T:RegrRobust_cartel_commenters_noncartelcommenter}
            \hspace*{-0.5cm}
            \begin{tabular}{lcccccc}
                \hline
 & (1) & (2) & (3) & (4) & (5) & (6) \\  & \multicolumn{6}{c}{Dependent variable: Cosine similarity} \\ \hline
 & \multicolumn{6}{c}{Posts in general or topic cartels} \\ & General & Topic & Both & General & Topic & Both \\ \hline &  \multicolumn{3}{c}{Similarity of users} & \multicolumn{3}{c}{Similarity of posts} \\ &  \multicolumn{3}{c}{Text embeddings} & \multicolumn{3}{c}{Photo+text embeddings} \\ \cmidrule(lr){2-4}\cmidrule(lr){5-7}
General cartel commenter (CC)&      -0.058***&               &      -0.051***&      -0.033***&               &      -0.029***\\
                    &     (0.003)   &               &     (0.008)   &     (0.001)   &               &     (0.003)   \\
Topic cartel commenter (CC)&               &      -0.023***&      -0.005   &               &      -0.016***&      -0.008** \\
                    &               &     (0.003)   &     (0.008)   &               &     (0.002)   &     (0.004)   \\
$\alpha_{Gen}$: General CC's non-cartel commenter&      -0.061***&               &      -0.054***&      -0.034***&               &      -0.024***\\
                    &     (0.003)   &               &     (0.009)   &     (0.001)   &               &     (0.004)   \\
$\alpha_{Top}$: Topic CC's non-cartel commenter&               &      -0.054***&      -0.024***&               &      -0.028***&      -0.022***\\
                    &               &     (0.004)   &     (0.009)   &               &     (0.002)   &     (0.004)   \\
$\beta_{Ran}$: Random user&      -0.070***&      -0.075***&      -0.055***&      -0.039***&      -0.041***&      -0.031***\\
                    &     (0.003)   &     (0.003)   &     (0.009)   &     (0.001)   &     (0.002)   &     (0.003)   \\
Wald test, $\alpha_{Gen} = \beta_{Ran}$&       0.000   &               &       0.772   &       0.000   &               &       0.006   \\
Wald test, $\alpha_{Top} = \beta_{Ran}$&               &       0.000   &       0.000   &               &       0.000   &       0.001   \\
Wald test, $\alpha_{Gen} = \alpha_{Top}$&               &               &       0.000   &               &               &       0.562   \\
Base (non-cartel) mean&       0.574   &       0.585   &       0.573   &       0.656   &       0.657   &       0.659   \\
Author fixed effects&         Yes   &         Yes   &         Yes   &         Yes   &         Yes   &         Yes   \\
Authors             &        4236   &        2882   &         392   &        4236   &        2882   &         392   \\
Observations        &       50660   &       34002   &        7573   &       50660   &       34002   &        7573   \\
\hline
\% value via general cartel&        13.8&            &         2.9&        15.0&            &        23.1\\
\% value via topic cartel&            &        27.9&        56.3&            &        30.9&        29.2\\
\hline

            \end{tabular}
        \end{footnotesize}
    \end{center}
    \footnotesize{Notes:
        The table presents estimates from the same regressions as in \Cref{T:RegrMain} (\Cref{E:MainRegression}) where in addition to indicators for the similarity between the author and the general cartel commenter (\textit{General cartel commenter (CC)}), topic cartel commenter (\textit{Topic cartel commenter (CC)}) and random user (\textit{Random user}), the regression also includes an indicator for the similarity between the author and the general cartel commenters' non-cartel commenter (\textit{General CC’s non-cartel commenter}) and the topic cartel commenters' non-cartel commenter (\textit{Topic CC’s non-cartel commenter}). The last two rows present the similarity between the author and cartel commenters' non-cartel commenter rescaled to the natural-random difference, calculated as: $100\times(1-\alpha_{t}/\beta_{Ran}), \;t \in\{Gen, Top\}$. This proxies the percentage of value of natural engagement that advertisers get via cartels. Standard errors in parentheses are clustered at the author level.
    }
\end{table}
\begin{table}[h!]
    \begin{center}
        \begin{footnotesize}
            \caption{Cartel commenters' non-cartel commenters: posts in both cartels}
            \apptable{Cartel commenters' non-cartel commenters: posts in both cartels}
            \label{T:RegrRobust_cartel_commenters_noncartelcommenter_largeboth}
            \hspace*{-0.5cm}
            \begin{tabular}{lcc}
                \hline
& (1) & (2) \\  & \multicolumn{2}{c}{Dependent variable: Cosine similarity} \\ & \multicolumn{2}{c}{Posts in both cartels} \\ \hline &  Similarity of users & Similarity of posts \\ &  Text embeddings & Text+photo embeddings \\ \cmidrule(lr){2-3}
Topic cartel commenter's non-cartel commenter&       0.025***&       0.005*  \\
                    &     (0.005)   &     (0.002)   \\
Base (general cartel commenter's non-cartel commenter) mean&       0.515   &       0.624   \\
Author fixed effects&         Yes   &         Yes   \\
Authors             &         722   &         722   \\
Observations        &        3730   &        3730   \\
\hline

            \end{tabular}
        \end{footnotesize}
    \end{center}
    \footnotesize{Notes:
        The table presents estimates from similar regressions as in columns 3 and 6 in \Cref{T:RegrRobust_cartel_commenters_noncartelcommenter}. The outcome variable is the similarity between the author and the cartel commenters' non-cartel commenter. The regression includes an indicator for the similarity between the author and the topic cartel commenters' non-cartel commenter. The base category is the similarity between the author and the general cartel commenters' non-cartel commenter. The sample is larger than in \Cref{T:RegrRobust_cartel_commenters_noncartelcommenter} because it includes also the authors for whom we were not able to calculate the similarity with non-cartel commenters which is the base category in \Cref{T:RegrRobust_cartel_commenters_noncartelcommenter}. Standard errors in parentheses are clustered at the author level.
    }
\end{table}
\begin{table}[h!]
    \begin{center}
        \begin{footnotesize}
            \caption{Cartel commenters' non-cartel commenters: heterogeneity by cartel topic}
            \apptable{Cartel commenters' non-cartel commenters: heterogeneity}
            \label{T:RegrRobust_cartel_commenters_noncartelcommenter_heter}
            \hspace*{-0.5cm}
            \begin{tabular}{lcccccc}
                \hline
 & \multicolumn{6}{c}{Posts in topic cartels, by the topic} \\ & Fashion  & Fitness & Travel & Fashion & Fitness & Travel \\ & \& beauty & \& health & \& food & \& beauty & \& health & \& food \\ \hline &  \multicolumn{3}{c}{Similarity of users} & \multicolumn{3}{c}{Similarity of posts} \\ &  \multicolumn{3}{c}{Text embeddings} & \multicolumn{3}{c}{Text+photo embeddings} \\ \cmidrule(lr){2-4}\cmidrule(lr){5-7}
Topic cartel commenter (CC)&      -0.015** &       0.009   &      -0.032***&      -0.012***&      -0.006   &      -0.018***\\
                    &     (0.007)   &     (0.009)   &     (0.004)   &     (0.003)   &     (0.004)   &     (0.002)   \\
$\alpha_{Top}$: Topic CC's non-cartel commenter&      -0.048***&      -0.039***&      -0.061***&      -0.024***&      -0.034***&      -0.028***\\
                    &     (0.007)   &     (0.010)   &     (0.005)   &     (0.004)   &     (0.005)   &     (0.002)   \\
$\beta_{Ran}$: Random user&      -0.061***&      -0.077***&      -0.080***&      -0.040***&      -0.051***&      -0.038***\\
                    &     (0.007)   &     (0.009)   &     (0.004)   &     (0.003)   &     (0.004)   &     (0.002)   \\
Wald test, $\alpha_{Top} = \beta_{Ran}$&       0.008   &       0.000   &       0.000   &       0.000   &       0.000   &       0.000   \\
Base (non-cartel) mean&       0.576   &       0.561   &       0.596   &       0.664   &       0.666   &       0.652   \\
Author fixed effects&         Yes   &         Yes   &         Yes   &         Yes   &         Yes   &         Yes   \\
Authors             &         596   &         443   &        1764   &         596   &         443   &        1764   \\
Observations        &        6443   &        4673   &       21547   &        6443   &        4673   &       21547   \\
\hline
\% value via topic cartel&        21.8&        49.4&        23.6&        41.1&        33.4&        27.5\\
\hline

            \end{tabular}
        \end{footnotesize}
    \end{center}
    \footnotesize{Notes:
        The table presents estimates from the same regressions as in \Cref{T:RegrHeterogeneity} where in addition to indicators for the similarity between the author and the topic cartel commenter (\textit{Topic cartel commenter (CC)}) and random user (\textit{Random user}), the regression also includes an indicator for the similarity between the author and the topic cartel commenters' non-cartel commenter (\textit{Topic CC’s non-cartel commenter}). The last row presents the similarity between the author and cartel commenters' non-cartel commenter rescaled to the natural-random difference ($100\times(1 -\alpha_{Top}/\beta_{Ran})$). This proxies the percentage of the value of natural engagement that advertisers get via cartels. Standard errors in parentheses are clustered at the author level.
    }
\end{table}

\end{document}